\pgfplotsset{
   grid style = {
   dash pattern = on 0.025mm off 0.95mm on 0.025mm off 0mm, 
   line cap = round,
   black,
   line width = 0.5pt
  },
  tick label style={font=\small},
  label style={font=\small},
  legend style={font=\footnotesize},
}
\newacronym{BS}{BS}{base station}
\newacronym{CDN}{CDN}{content delivery network}
\newacronym{CDF}{CDF}{cumulative distribution function}
\newacronym{CF}{CF}{collaborative filtering}
\newacronym{CN}{CN}{core network}
\newacronym{CRP}{CRP}{{C}hinese restaurant process}
\newacronym{CS}{CS}{central scheduler}
\newacronym{CSI}{CSI}{channel state information}
\newacronym{D2D}{D2D}{device-to-device}
\newacronym{DoF}{DoF}{degree-of-freedom}
\newacronym{HetNet}{HetNet}{heterogeneous network}
\newacronym{FDD}{FDD}{frequency-division duplex}
\newacronym{ICIC}{ICIC}{inter-cell interference coordination}
\newacronym{ICN}{ICN}{information-centric network}
\newacronym{IA}{IA}{interference alignment}
\newacronym{ISI}{ISI}{inter-stream interference}
\newacronym{IUI}{IUI}{inter-user interference}
\newacronym{LTE}{LTE}{long term evolution}
\newacronym{MIMO}{MIMO}{multiple-input multiple-output}
\newacronym{PPP}{PPP}{{P}oisson point process}
\newacronym{PHY}{PHY}{physical layer}
\newacronym{SBS}{SBS}{small base station}
\newacronym{SINR}{SINR}{signal-to-interference-plus-noise ratio}
\newacronym{SNR}{SNR}{signal-to-noise ratio}
\newacronym{SCN}{SCN}{small cell network}
\newacronym{SVD}{SVD}{singular value decomposition}
\newacronym{TL}{TL}{transfer learning}
\newacronym{TDD}{TDD}{time-division duplex}
\newacronym{UT}{UT}{user terminal}
\newacronym{QoS}{QoS}{quality-of-service}
\newacronym{QoE}{QoE}{quality-of-experience}
\newacronym{RAN}{RAN}{radio access network}
\newacronym{TDMA}{TDMA}{time division multiple access}
\newacronym{PDF}{PDF}{probability density function}
\newacronym{MGF}{MGF}{moment-generating function}
\newacronym{CC}{CC}{computational complexity} 
\newacronym{ZF}{ZF}{zero forcing} 
\newtheorem{defi}{Definition}
\newtheorem{theorem}{Theorem}
\newtheorem{lemma}{Lemma}
\newtheorem{proposition}{Proposition}
\begin{document}

\title{Queueing Stability and CSI Probing of a TDD Wireless Network with Interference Alignment }
\author{
		\IEEEauthorblockN{Matha Deghel, \emph{Student}, \emph{IEEE}, Mohamad Assaad, \emph{Senior}, \emph{IEEE}, \\ Mérouane Debbah, \emph{Fellow}, \emph{IEEE}, and Anthony Ephremides, \emph{Life Fellow, IEEE} }\\

		}

\maketitle
\let\thefootnote\relax\footnote{M. Deghel and M. Assaad are with Laboratoire de Signaux et Systèmes (L2S, CNRS, UMR8506) CentraleSupélec, 3 rue Joliot-Curie, 91192, Gif-sur-Yvette, cedex. France. (matha.deghel@centralesupelec.fr, mohamad.assaad@centralesupelec.fr).

M. Debbah is with the Large Systems and Networks Group (LANEAS), CentraleSupélec, Gif-sur-Yvette, France \\ (merouane.debbah@centralesupelec.fr) and also with the Mathematical and Algorithmic Sciences Lab, Huawei Technologies Co. Ltd., France (merouane.debbah@huawei.com).

A. Ephremides is with the Department of Electrical and Computer Engineering and Institute for Systems Research University of Maryland, College Park, MD 20742. (etony@umd.edu).

Parts of this paper have been presented at 
the IEEE International Symposium on Information Theory (ISIT), Hong Kong, 2015 \cite{Deghel2015StabilityIA}. }
\begin{abstract}
This paper characterizes the performance of interference alignment (IA) technique taking into account  the dynamic traffic pattern and the probing/feedback cost. We consider a time-division duplex (TDD) system where transmitters acquire their channel state information (CSI) by decoding the pilot sequences sent by the receivers. Since global CSI knowledge is required for IA, the transmitters have also to exchange their estimated CSIs over a backhaul of limited capacity (i.e. imperfect case). Under this setting, we characterize in this paper the stability region of the system under both the imperfect and perfect (i.e. unlimited backhaul) cases, then we examine the gap between these two resulting regions. Further, under each case, we provide a centralized probing algorithm (policy) that achieves the max stability region. These stability regions and scheduling policies are given for the symmetric system where all the path loss coefficients are equal to each other, as well as for the general system. For the symmetric system, we compare the stability region of IA with the one achieved by a time division multiple access (TDMA) system where each transmitter applies a simple singular value decomposition technique (SVD). We then propose a scheduling policy that consists in switching between these two techniques, leading the system, under some conditions, to achieve a bigger stability region. Under the general system, the adopted scheduling policy is of a high computational complexity for moderate number of pairs, consequently we propose an approximate policy that has a reduced complexity but that achieves only a fraction of the system stability region. A characterization of this fraction is provided. 
\end{abstract}
\begin{IEEEkeywords} MIMO channel, queueing, stability, interference alignment, singular value decomposition
\end{IEEEkeywords}
\section{Introduction}

One of the key issues in wireless communication systems is the interference that is caused by a large number of users communicating on the same channel, resulting
into severe performance degradations unless treated properly. In this regard, \ac{IA} was
introduced in \cite{Cadambe2008Interference} as an efficient interference management technique and is shown to result in higher throughputs compared to conventional
interference-agnostic methods. Indeed, \ac{IA} is a linear precoding technique that attempts to align interfering signals in time, frequency, or space. In \ac{MIMO} networks, \ac{IA} utilizes the spatial dimension offered by multiple antennas for alignment. By aligning interference at all receivers (users), \ac{IA} reduces the dimension of interference, allowing users to suppress interference via linear techniques and decode their desired signals interference free. 
However, the implementation of \ac{IA} in existing systems faces some challenges. A major disadvantage of the above \ac{IA} scheme lies in the fact that the global \ac{CSI} must be available at each transmitter, which weakens its application in practical systems, because \ac{CSI}, especially interference \ac{CSI}, is difficult to obtain at the transmitters.

In scenarios where the receivers quantize and send the
\ac{CSI} back to the transmitters, the \ac{IA} scheme is explored over
frequency selective channels for single-antenna users in \cite{Thukral2009LimIA} and for multiple-antenna users in \cite{Krishnamachari2010LimIA}. Both references provide \ac{DoF}-achieving quantization schemes and establish the required scaling of the number of feedback bits. For alignment using spatial dimensions, \cite{Rezaee2012LimIA} provides the scaling of feedback bits to achieve \ac{IA} in \ac{MIMO} interference channel (IC). For the broadcast channel, the scaling of the feedback bits was characterized in \cite{Jindal2006MIMO}. In \cite{Santipach2009Capacity}, quantization of the precoding matrix using random vector quantization (RVQ) codebooks is investigated, which provides insights on the asymptotic optimality of RVQ. To overcome the problem of scaling codebook size, and relax the reliance on frequency selectivity for quantization, \cite{Ayach2012Interference} proposed an analog feedback strategy for constant \ac{MIMO} interference
channels. From another point of view, \cite{Tresch2009Cellular} provides an analysis of the effect of imperfect CSI on the mutual information of the interference alignment scheme. 
On the other side, for \ac{TDD} systems, every transmitter can estimate its downlink channels from the uplink transmission phase thanks to reciprocity. However, for the \ac{IA} scheme, this local knowledge is not sufficient, and the transmitters need to share their channel estimates that can be carried out through backhaul links between transmitters. These links generally have a limited capacity, which should be exploited efficiently. For instance, in \cite{Park2013PerCloud} a compression scheme for the cloud radio access networks is proposed. In \cite{Rezaee2013CSIT}, the Grassmannian Manifold quantization technique was adopted to reduce the information exchange over the backhaul.
The above works on \ac{IA} and limited feedback do not take into account the dynamic traffic processes of the users, meaning that they assume users with infinite back-logged data. 

It is of great interest to investigate the impact of \ac{MIMO} in the higher layers \cite{Boche2006TheInterplay}, more specifically in the media access control (MAC) layer. The cross-layer design goal here is the achievement of the entire stability region of the system.
In broad terms, the stability region of a network is the set of arrival rate vectors such that the entire network load can be served by some service policy without an infinite blow up of any queue. The special scheduling policy achieving the entire stability region, called the stability-optimal policy (or simply optimal policy), is hereby of particular interest.
The concept of stability-optimal operation comes originally from the control and automation theory \cite{McKeown1999Achieving100,Kumar1996Duality,Leonardi2001Bounds,Szpankowski1993StabilityConditions}. It was applied to the wireless communication systems first in \cite{Tassiulas1992Stability}, and the view was extended by some bounds in \cite{Neely2003Power}. Since then, this concept has been investigated in the wireless framework under various traffic and network scenarios. For instance, in \cite{Boche2007Optimization}, the authors have presented a precoding strategy that achieves the system stability region, under the assumptions of perfect \ac{CSI} and use of Gaussian codebooks. This strategy is based on Lyapunov drift minimization given the queue lengths and channel states every timeslot.
Authors in \cite{Swannack2004Lowcomplexity} have considered the broadcast channel (BC) and proposed a technique based on \ac{ZF} precoding, with a heuristic user scheduling scheme that selects users whose channel states are nearly orthogonal vectors and illustrate the stability region this policy achieves via simulations.
In \cite{Kobayashi2006AnIterativeWater}, it has been noticed that the policy resulting from the minimization of the drift of a quadratic Lyapunov function is to solve a weighted sum rate maximization problem (with weights being the queue lengths) each timeslot and they propose an iterative water-filling algorithm for this purpose.
In addition, authors in \cite{Cheng2006OptimalDown} propose to use the delays of the packets in the head of each queue along with the queue lengths as weights.
All these works assume accurate CSI available at the transmitter. In the case of delayed channel state information and channels having a correlation in time, authors in \cite{Kobayashi2007TransmitDivers} compare the stability and delay performance of opportunistic beamforming and space time coding, while in \cite{Kobayashi2007JointBeamforming} they propose a user scheduling and precoding algorithm. Further, in \cite{Lau2012StabilityDelay}, the authors studied the impact of channel state quantization on the stability of a system using \ac{ZF} precoding under a centralized scheme where the transmitter selects the users to be scheduled based only on the queue lengths. However, in these works, the fact that radio resources i.e. time and/or spectrum are needed to acquire channel state information is not accounted for. For the case where the \ac{CSI} acquisition process consumes a fraction of the timeslot, the authors in \cite{Chaporkar2009ScedwithLimited} have explored the resulting trade-off between acquiring \ac{CSI} and exploiting channel diversity to the various receiver. In addition, taking into account the probing cost, the authors in \cite{Destounis2015Traffic-Aware} have examined three different scheduling policies (centralized, decentralized and mixed policies) for MISO wireless downlink systems under ZF precoding technique. It is worth noting that all the aforementioned works consider networks with a relatively simple physical layer (e.g. on-off channel, ZF, ...).

In this paper, we have a system with a more complicated physical layer. Specifically, we consider a Multipoint-to-Multipoint network where multiple transmitter-receiver pairs operate in \ac{TDD} mode and apply the IA technique under backhaul links of limited capacity. Each transmitter acquires its local \ac{CSI} from its corresponding user by exploiting the channel reciprocity. 
Indeed, there are two ways to perform this acquisition (probing): (i) users estimate their channels and then feed the \ac{CSI} back to their corresponding transmitters in a \ac{TDMA} manner, and (ii) users send training sequences in the uplink so that the transmitters can estimate the channels. The latter scheme, which we adopt in our system,  uses (pre-assigned) orthogonal sequences among the users, so the length of each one of these sequences should be proportional to the number of active users in the system; orthogonal sequences are produced e.g. by Walsh-Hadamard on pseudonoise sequences. It means that after acquiring the \ac{CSI} of, for example, $L$ users, the throughput is multiplied by $1-L\theta$, where $\theta$ is the fraction of time that takes the CSI acquisition of one user \cite{Chaporkar2009ScedwithLimited}.
Thus, it can be seen that the more the number of active pairs $L$ is large, the more the acquisition process consumes a larger fraction of time and hence leaves a smaller fraction for transmission. Thus, it is important to focus on the tradeoff between having a large number of active transmitter-receiver pairs (so having a high probing cost but many pairs can communicate simultaneously) and having much time of the slot dedicated to data transmission (which means getting a low probing cost but few pairs can communicate simultaneously)  \cite{Destounis2015Traffic-Aware}. Therefore, under this scheme, it can happen that only a subset of transmitter-receiver pairs is active (scheduled) at each timeslot. 

In order to choose the subset of active pairs at each timeslot, three approaches can be used \cite{Destounis2015Traffic-Aware}: (i) the centralized scheme (policy), where the decision of which pairs will be scheduled is made at the transmitters side and based only on the statistics of the channels of the users and the state of their queue lengths at each slot \cite{Lau2012StabilityDelay}, (ii) the decentralized scheme, meaning that the users decide which subset of them should actually train, and consequently this subset with its corresponding subset of transmitters will be active for transmission, and (iii) the mixed policy, which corresponds to combine the centralized and decentralized policies. Note that the  centralized approach is used in current standards (e.g. Long Term Evolution (LTE) \cite{Stefania2009LTEtheUMTS}), where the base station explicitly requests some users for their \ac{CSI}.

In this paper, we adopt the first approach, that is the centralized policy. Specifically, for the \ac{MIMO} system model described earlier, in which we use \ac{IA} as an interference management technique, we consider that there is a \ac{CS} that has a full knowledge of the queue lengths at each timeslot and the statistics of the channels. Based on this information, this \ac{CS} schedules the subset of pairs at each timeslot. In broad terms, using the centralized policy, we examine in this work the stability performances  of a \ac{MIMO} system under \ac{TDD} mode with limited backhaul capacity, where we apply \ac{IA} as an effective way to reduce the interference and where the \ac{CSI} probing cost is accounted for.

It is known that with \ac{IA} technique the backhaul is flooded due to the \ac{CSI} exchange process among the active transmitters. In some scenarios, it may be beneficial not to occupy the backhaul with this huge amount of signaling but instead exploited it more efficiently. For instance, if the backhaul is wireless, the \ac{CSI} exchange process consumes a part of the total reserved bandwidth, which can be instead used in the transmission process. Hence, it is of high interest to study the system under an interference management technique for which no \ac{CSI} exchange over the backhaul is required. For this purpose, we investigate the system performance under \ac{TDMA} as a channel access method, meaning that there is only one active pair at a given timeslot and thus no backhaul usage occurs, and using \ac{SVD} as a precoding technique. The choice of \ac{SVD} can be justified by the fact that it provides the best performances for point-to-point \ac{MIMO} systems \cite{Emre1999Capacityof}. One may wonder which one between \ac{TDMA}-\ac{SVD} and \ac{IA} outperforms the other in terms of stability. We will provide an answer to this question by comparing the system stability performances under these two techniques.

The rest of this paper is organized as follows. Section \ref{sec:systemmodel} presents the system model and the interaction between physical layer and queueing performance. The average rate expressions under the adopted system are derived in Section \ref{sec:calcul_avgrate}. In Section \ref{sec:PerAnalysis_sym}, we present a deep stability analysis for the symmetric system   
where all the path loss coefficients are equal to each other. Specifically, for this system, we provide a precise characterization of the stability region and we propose an optimal scheduling decision to achieve this region in both the perfect and imperfect cases. Further, we examine the gap between these two resulting stability regions, namely the region under the imperfect case and the one under the perfect case. Furthermore, for this same system, we compare the stability region of \ac{IA} with the one achieved by \ac{TDMA}-\ac{SVD}, then, using this comparison, we provide a way to select one of these two techniques. In addition, we characterize the resulting stability region when the considered system switches between these two techniques. At the end of this section, we investigate the impact of the number of bits and the number of pairs on the system stability region.
In Section \ref{sec:generalcase}, we investigate the stability performances for the general case, namely where the path loss coefficients are not necessarily equal to each other, by characterizing the corresponding stability region and providing an optimal scheduling policy, under both the imperfect and perfect cases. Then, since the scheduling policy for this system is of high computational complexity, we propose an approximate policy that has a reduced complexity but that achieves only a fraction of the system stability region. After that, a characterization of the achievable fraction is provided. At the end of this section, we investigate the gap between the stability region under the imperfect case and the one under the perfect case. Section \ref{sec:numerical} is dedicated to numerical results. Finally, Section \ref{sec:conclusions} concludes the paper.

\emph{Notation}: Boldface uppercase symbols (i.e., ${\bf A}$) represent matrices and lowercases (i.e., ${\bf a}$) are used for vectors, unless stated otherwise. The symbol ${\bf I}_N$ denotes the identity matrix of size $N$. The operator $\otimes$ is the Kronecker product. The notation $|\cdot|$ is used to indicate the absolute value for scalars and the cardinality for sets (or subsets). In addition, $||\cdot||_1$  and $||\cdot||$ are used for the norms of first and second degree, respectively. The notation $\mathbf{1}$ is used for the all-ones vector. Finally, superscripts $T$ and $H$ over a matrix or vector denote its transpose and conjugate transpose,
respectively.

\section{System Model}
\label{sec:systemmodel}

\begin{figure}[ht!]
\centering
\includegraphics[width=0.87\linewidth, height=12cm]{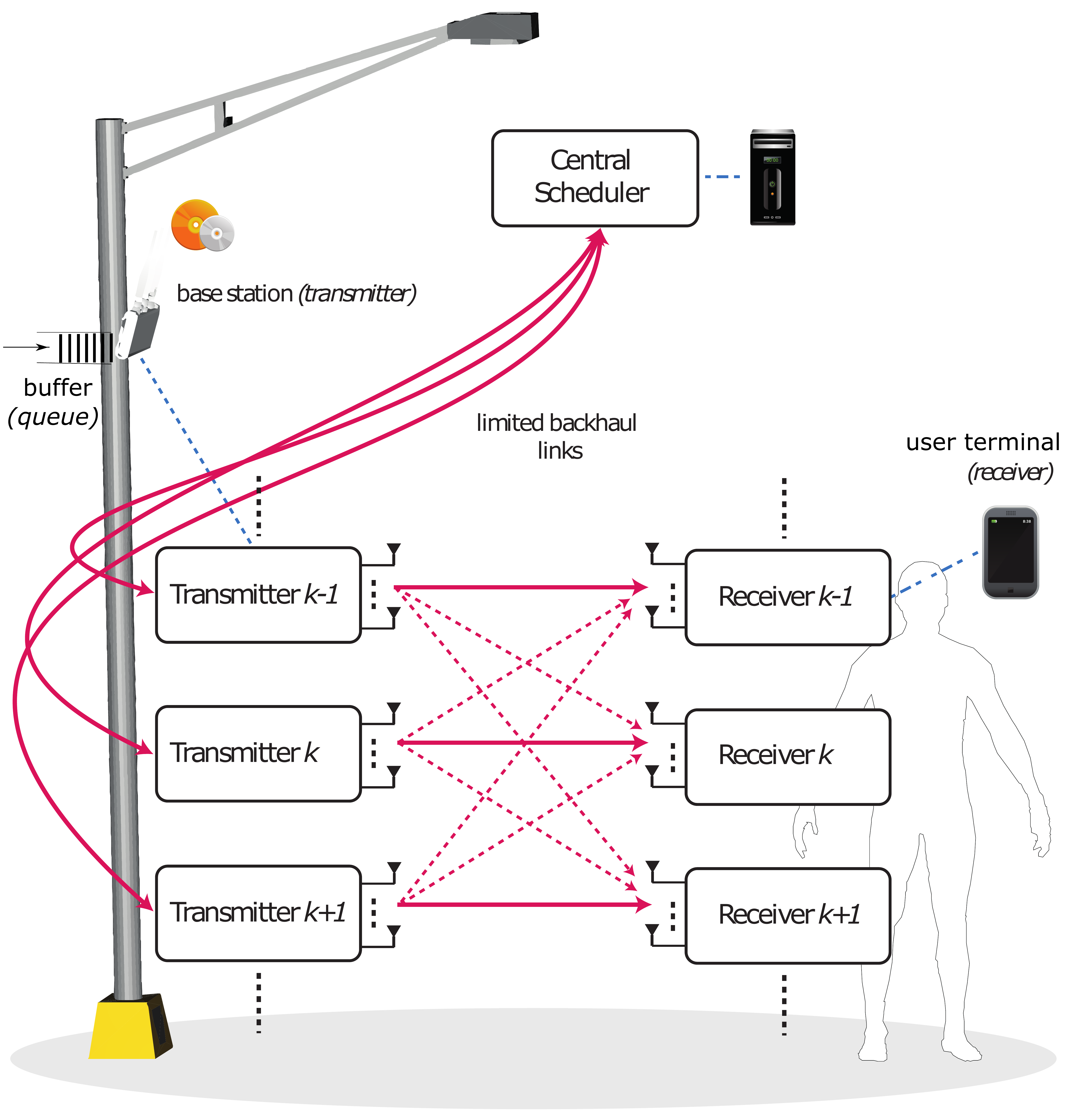}
\captionsetup{font=small}
\caption{\small{A sketch of $N$-user \ac{MIMO} interference network with limited backhaul.}}
\label{fig:scenario}
\end{figure} 
We consider the \ac{MIMO} interference channel with $N$ transmitter-receiver pairs shown in Fig. \ref{fig:scenario}. For simplicity of exposition, we consider a homogeneous network where all transmitters are equipped with $N_{\text{t}}$ antennas and all receivers (users) with $N_{\text{r}}$ antennas. We assume that time is slotted. As we will see later on, only a subset $\mathcal {L}(t)$, of cardinality $L(t)$, of pairs is active at each timeslot, with $L(t) \le N$.  While each transmitter communicates with its intended receiver, it also creates interference to other $L(t)-1$ unintended receivers. Transmitter $k$ has $d_k \le \min \left(  N_{\text{t}},N_{\text{r}}\right)$ independent data streams to  transmit to its intended user $k$. 

Given this channel model, the received signal at active user $k$ 
($\in \mathcal{L}(t)$) can be expressed as 
\begin{align}
\mathbf{y}_k=\sum\limits_{i \in \mathcal {L}(t)} \sqrt{ \frac{ \zeta_{ki}P }{d_i} } \mathbf{H}_{ki} \sum\limits_{j=1}^{d_i}  \mathbf{v}_i^{(j)} x_i^{(j)} + \mathbf{z}_k,
\label{eq:yk}
\end{align}
where $\mathbf{y}_k$ is the $N_r \times 1$ received signal vector, $\mathbf{z}_k$ is the additive white complex Gaussian noise with zero mean and covariance matrix $\sigma^2\mathbf{I}_{N_{\text{r}}}$, $\mathbf{H}_{ki}$ is the $N_{\text{r}} \times N_{\text{t}}$ channel matrix between transmitter $i$ and receiver $k$ with independent and identically distributed (i.i.d.) zero mean and unit variance complex Gaussian entries, $\zeta_{ki}$ represents
the path loss of channel $\mathbf{H}_{ki}$, $P$ is the total power at each transmitting node, which is equally allocated among its data streams, $x_i^{(j)}$ represents the $j$-th data stream from transmitter $i$, and $\mathbf{v}_i^{(j)}$ is the corresponding  $N_{\text{t}} \times 1$ precoding vector of unit norm.
For the rest of the paper, we denote by $\alpha_{ki}$ the fraction $\frac{ \zeta_{ki}P }{d_i}$.

\subsection{Interference Alignment Technique}

\ac{IA} is an efficient linear precoding technique that often achieves the full \ac{DoF} supported by \ac{MIMO} interference channels. In cases where the full \ac{DoF} cannot be guaranteed, \ac{IA} has been shown to provide significant gains in high \ac{SNR} sum-rate. To investigate \ac{IA} in our model, we start by examining the effective channels created after precoding and combining. For tractability, we restrict ourselves to a per-stream zero-forcing receiver. Recall that in the high (but finite) \ac{SNR} regime, in which \ac{IA} is most useful, gains from more involved receiver designs are limited \cite{ElAyach2012OverheadofIA}. In such a system, receiver $k$ uses the $N_{\text{r}} \times 1$ \emph{combiner} vector $\mathbf{u}_k^m$ of unit norm to detect its $m$-th stream, such as
\begin{align}
\hat{x}_k^{(m)}  \nonumber &= \left(\mathbf{u}_k^{(m)} \right)^{\! H} \mathbf{y}_k &\\
\nonumber &=   \overbrace{  \sqrt{\alpha}_{kk}  \left(\mathbf{u}_k^{(m)} \right)^{\! H} \mathbf{H}_{kk} \mathbf{v}_k^{(m)} x_k^{(m)}   }^{\text{desired signal}}     +  \overbrace{  \sqrt{\alpha}_{kk} \sum_{j=1,j\ne m}^{d_k} \left(\mathbf{u}_k^{(m)} \right)^{\! H} \mathbf{H}_{kk} \mathbf{v}_k^{(j)} x_k^{(j)} }^{\text{inter-stream interference (ISI)}}   &\\  &\qquad {} +   \overbrace{   \sum\limits_{\substack{ i \in \mathcal {L}(t), i\ne k}}  \sqrt{\alpha}_{ki}  \sum\limits_{j=1}^{d_i} \left(\mathbf{u}_k^{(m)} \right)^{\! H} \mathbf{H}_{ki} \mathbf{v}_i^{(j)} x_i^{(j)}  }^{\text{inter-user interference (IUI)}} + \overbrace{  \left(\mathbf{u}_k^{(m)} \right)^{\! H} \mathbf{z}_k}^{\text{noise}},  \label{eq:xk} 
\end{align} 
where the first term at the right-hand-side of this expression is the desired signal, the second one is the \ac{ISI} caused by the
same transmitter, and the third one is the \ac{IUI}
resulting from the other transmitters. In order to mitigate these
interferences and improve the system performances, \ac{IA} is performed
accordingly, that is designing the set of combiner and precoder vectors such that
\begin{align}
\left(\mathbf{u}_k^{(m)} \right)^{\! H} \mathbf{H}_{ki} \mathbf{v}_i^{(j)} = 0,  &&  \forall (i,j)\ne (k,m), \text{with} \, i,k \in \mathcal {L}(t). \label{eq:IAeq} 
\end{align} 
Note that the above conditions are those of a \emph{perfect interference alignment}. In other words, suppose that all the transmitting nodes have perfect global \ac{CSI} and each receiver obtains a perfect version of its corresponding combiner vector, \ac{ISI} and \ac{IUI} can be suppressed completely. However, obtaining the perfect global \ac{CSI} at the transmitters is not always practical due to the fact that backhaul links, which connect transmitters to each other, are of limited capacity. The \ac{CSI} sharing mechanism is detailed in the next subsection.

Finally, some assumptions  and remarks are in order. First, in our study, we assume that each active receiver obtains a perfect version of its corresponding combiner vector. The cost of this latter process is not considered in our analysis. In addition, it is worth noting that due to the limitation of spatial degree of freedom, the values of $d_k$ must fulfill the feasibility conditions of \ac{IA} \cite{Yetis2010Feasibility}. In what follows, we suppose \ac{IA} is feasible by selecting the data steams numbers $d_k$ carefully. Further, we recall that the total transmit power is split equally among the transmitters, and then each of which equally allocates its power among its data streams; it means that we do not perform power control for our system. This is done to further simplify the transmission scheme that relies on \ac{IA} technique, which does not lack complexity. 

\subsection{CSIT Sharing Over Limited Capacity Backhaul Links}
The process of \ac{CSI} sharing is restricted to the scheduled pairs (represented by subset $\mathcal{L}(t)$). Thus, here, even if we did not mention it, when we write “transmitter” (resp., “user”) we mean “active transmitter” (resp., “active user”). Three different scenarios regarding the \ac{CSI} sharing problem can be considered:
\begin{enumerate}[(a)]
\item Each transmitter receives all the required \ac{CSI} and independently
computes the \ac{IA} vectors,
\item The \ac{IA} processing node is a separate central node that
computes and distributes the \ac{IA} vectors to other transmitters,
\item One transmitter acts as the \ac{IA} processing node. 
\end{enumerate}
For the last two scenarios, one node performs the computations and then distributes the \ac{IA} vectors among transmitters. So, since the backhaul is limited in capacity, in addition to the quantization required for the \ac{CSI} sharing process, another quantization is needed to distribute the \ac{IA} vectors over the backhaul. This is not the case for the first scenario where only the first quantization process is needed. Thus, for simplicity of exposition and calculation, we focus on the first scenario, which we detail in the following. 

As alluded earlier, global \ac{CSI} is required at each transmitting node in order to design the \ac{IA} vectors that satisfy \eqref{eq:IAeq}. As shown in Fig. \ref{fig:scenario}, we suppose that all the transmitters are connected to a \ac{CS} via their limited backhaul links, meaning that this \ac{CS} serves as a way for connecting the transmitters to each other; as we will see later on, this scheduler decides which pairs to schedule at each timeslot. We assume a \ac{TDD} transmission strategy, which enables the transmitters to estimate their channels toward different users by exploiting the reciprocity of the wireless channel. We consider throughout this paper that there are no errors in the channel estimation. Under the adopted strategy, the users send their training sequences in the uplink phase, allowing each transmitter to estimate (perfectly) its \emph{local} \ac{CSI}, meaning that the $i$-th transmitter estimates perfectly the channels  $\mathbf{H}_{ki}$, $k,i \in \mathcal {L}(t)$. However, the local \ac{CSI}, excluding the direct links (since they do not enter in computing the \ac{IA} vectors), of other transmitters are obtained via backhaul links of limited capacity.
In such limited backhaul conditions, a codebook-based quantization technique needs to be adopted to reduce the huge amount of information exchange used for \ac{CSI} sharing, which we detail as follows. Let $\mathbf{h}_{ki} $ denote the vectorization of the channel matrix $\mathbf{H}_{ki}$. Then, for all $i \ne k$, transmitter $i$ selects the index $n_o$ that corresponds to the optimal codeword in a predetermined codebook $\mathcal{CB}=\left[\mathbf{\hat {h}}_{ki}^{(1)},...,\mathbf{\hat {h}}_{ki}^{(2^B)} \right] $ according to
\begin{equation}
 n_o=  \operatorname*{arg\,max}_{1\le n \le 2^B} \left\{  \left| \left( \mathbf{\tilde{h}}_{ki} \right)^{\! H} \, \mathbf{\hat {h}}_{ki}^{(n)}  \right|^2 \right\}, \label{eq:n0}
\end{equation}
in which $B$ is the number of bits used to quantize $\mathbf{H}_{ki}$ and  $\mathbf{\tilde{h}}_{ki} =  \frac{\mathbf{h}_{ki}}{\left\| \mathbf{h}_{ki}  \right\|}$ is the channel direction vector.
After quantizing all the matrices of its local \ac{CSI}, we assume that transmitter $i$ sends the corresponding optimal indexes to all other  active transmitters, which share the same codebook, allowing these transmitters to reconstruct the quantized local knowledge of transmitter $i$. Let us now define the quantization error as $e_{ki}= 1-\frac{ \left| \mathbf{\hat{h}}_{ki}^H \mathbf{h}_{ki} \right|^2 }  { \left\| \mathbf{h}_{ki} \right\|^2 } $ and adopt the same model in \cite{Jindal2006MIMO,Lau2012StabilityDelay} that relies on the theory of quantization cell approximation. The \ac{CDF} of $e_{ki}$ is then given by
 \begin{align}
\mathbb{P} \left\{ e_{ki} \le \varepsilon \right\} =
  \begin{dcases}
   2^B \varepsilon^Q, &  \qquad{} 0\le \varepsilon \le 2^{- \frac{B}{Q}} \\
   1, & \qquad{} \, \varepsilon > 2^{- \frac{B}{Q}}
  \end{dcases}
  \end{align}
 where $Q=N_{\mathrm{t}}N_{\mathrm{r}}-1$.

\subsection{Rate Model and Impact of Training}
 
Before proceeding with the description, we define the \emph{perfect case} as the case where the backhaul has an infinite capacity, which leads to a perfect global \ac{CSI} knowledge at the transmitters; so no quantization is needed. Further, we call \emph{imperfect case} the model described previously, where a quantization is performed over the backhaul of limited capacity.
 
For the perfect case, the \ac{IA} constraints null the \ac{ISI} and the \ac{IUI}, and no  residual interference exists. For the imperfect case, as explained in the previous subsection, each transmitter designs its \ac{IA} vectors based on a perfect version of its local \ac{CSI} and an imperfect (quantized) version of the local \ac{CSI} of other transmitters. For this reason, in this case, the IA technique is able to completely cancel the \ac{ISI} but not the \ac{IUI}. 
Thus, under such observations, the SINR/SNR for stream $m$ at active receiver $k$ can be written as
\begin{align}
 \gamma_k^{(m)} =
 \begin{dcases}
 \, \, \frac{ \alpha_{kk} \left|  \left(\mathbf{\hat{u}}_k^{(m)} \right)^{\! H} \mathbf{H}_{kk} \mathbf{\hat{v}}_k^{(m)}  \right|^2  }{  \sigma^2 +  \sum\limits_{\substack{i \in \mathcal {L}(t), i \ne k}}   \alpha_{ki}  \sum\limits_{j=1}^{d_i} \left| \left(\mathbf{\hat{u}}_k^{(m)} \right)^{\! H}  \mathbf{H}_{ki} \mathbf{\hat{v}}_i^{(j)} \right|^2     },   & \qquad{} \text{imperfect case} \\
 \, \, \frac{ \alpha_{kk} \left|  \left(\mathbf{u}_k^{(m)} \right)^{\! H} \mathbf{H}_{kk} \mathbf{v}_k^{(m)}  \right|^2  }{\sigma^2},   &  \qquad{} \text{perfect case}    \label{eq:SINR} 
  \end{dcases}
\end{align}
where $\mathbf{\hat{v}}_k^{(m)}$ and $\mathbf{\hat{u}}_k^{(m)}$ are designed under the limited backhaul case, that is to say using an imperfect global \ac{CSI} due to the quantization process over the backhaul, whereas $\mathbf{v}_k^{(m)}$ and $\mathbf{u}_k^{(m)}$ are designed under the unlimited backhaul case, i.e. using a perfect global \ac{CSI}.
As alluded earlier, only a subset $\mathcal{L}(t)$ (we recall that $ \left| \mathcal{L}(t) \right| =L(t)$) of pairs is scheduled at a time. For notational convenience, we will use \ac{SINR} as a general notation to denote \ac{SNR} for the perfect case and \ac{SINR} for the imperfect case, unless stated otherwise.

We now explain some useful points that are adopted in the rate model.
At a given timeslot, a rate of $R$ bits is assigned to stream $m$ of user $k$ if $\gamma_k^{(m)}$, i.e. the corresponding \ac{SINR}, is higher than or equal to a given threshold, which we denote by $\tau$; otherwise, the assigned rate is $0$. Let us denote by $\tilde{R}_k(t)$ the assigned rate (in units of bits/slot) for user $k$ at timeslot $t$, thus $\tilde{R}_k(t)$ is the sum of the assigned rates for all the streams of user $k$ at $t$.
For this model, channel acquisition cost is not negligible and should be considered. As mentioned earlier, we consider a system under \ac{TDD} mode where users send training sequences in the uplink so that the transmitters can estimate their channels; this is a promising approach, especially for systems with large antenna arrays at the transmitters, due to the fact that the feedback overhead does not scale with the number of antennas. This scheme uses orthogonal sequences among the users, so their lengths are proportional to the number of active users in the system. We assume that acquiring the \ac{CSI} of one user takes fraction $\theta$ of the slot. Thus, since we have $L(t)$ active users, the \emph{actual} rate for transmission to active user $k$ at timeslot $t$ is $(1-L(t)\theta) \tilde{R}_k(t)$. Let us define $B_k(t)=(1-L(t)\theta) \tilde{R}_k(t)$. Note that $B_k(t)$ is equal to $0$ if pair $k$ is not active at time $t$.

Under this setting, the average rate for active user $k$ can be written in function of the transmission success probability conditioned on the subset of active pairs as 
\begin{align}
\mathbb{E}\left\{ B_k(t) \mid \mathcal{L}(t) \right\} =(1-L(t) \theta) \sum_{m=1}^{d_k}   R \, \mathbb{P} \left\{\gamma_k^{(m)} \ge \tau \mid \mathcal{L}(t) \right\}  \label{eq:r_M_M1}.
\end{align}
It can be noticed that the feedback overhead ($1-L(t) \theta$) scales with the number of active pairs, meaning that when $L(t)$ is large there will be little time left to transmit in the timeslot before the channels change again. Here, it is clear that the fraction $L(t) \theta$ should be lower than $1$. Since the maximum number of pairs $N$ is such that $N \ge L(t)$, we should also have $N \theta <1$. In practice, the fraction of the timeslot dedicated for \ac{CSI} acquisition is less than $\frac{1}{2}$, i.e.  at least half of the timeslot is reserved for data transmission.

\subsection{Queue Dynamics, Stability and Scheduling Policy}  

For each user, we assume that the incoming data is stored in a respective queue (buffer) until transmission, and we denote by $\mathbf{q}(t)=\left[ q_1(t),...,q_N(t)  \right]$ the queue length vector. We designate by $\mathbf{A}(t)=\left[ A_1(t),...,A_N(t)  \right]$ the vector of number of bits arriving in the buffers in timeslot $t$, which is an i.i.d. in time process, independent across users and with $A_k(t)<A_{\text{max}}$. The mean arrival rate (in units of bits/slot) for user $k$ is denoted by $a_k= \mathbb{E} [A_k(t)]$. We recall that a user will get $B_k(t)$ served bits per slot if it gets scheduled and zero otherwise. Note that $B_k(t)$ is finite because $R$ is finite, so we can define a finite positive constant $B_{\text{max}}$ such that $B_k(t)<B_{\text{max}}$, for $k=1,\ldots,N$.

At each timeslot, the \ac{CS} selects the pairs to schedule based on the queue lengths and average rates (per user) in the system. To this end, we suppose that (i) this scheduler  has a full knowledge of average rate values under different combinations of choosing active pairs, which can be provided offline since an average rate is time-independent, (ii) at each timeslot, each transmitter sends its queue length to the \ac{CS} so that it can obtain all the queue dynamics of the system, and (iii) the cost of providing such knowledge to the scheduler will not be taken into account in our analysis. After selecting the set of pairs to be scheduled (represented by $\mathcal{L}(t)$), the \ac{CS} broadcasts this information so that the selected transmitter-user pairs activate themselves, and then the active users send their pilots in the uplink so that the (active) transmitters can estimate the \ac{CSI}. It is worth noting that, as alluded previously, if we select a large number of pairs ($L(t)$) for transmission, many pairs can communicate (i.e. this will leave a small fraction of time for transmission) but a high CSI acquisition cost is needed. On the other hand, a small $L(t)$ requires a low acquisition cost, but, at the same time, it allows a few number of simultaneous transmissions.
The decision of selecting active pairs is referred simply as the \emph{scheduling policy}. 
At the $t$-th slot, this policy can be represented by an indicator vector $\mathbf{s}(t) \in \mathcal{S} \coloneqq \left\{  0,1 \right\}^N$, where the $k$-th component of $\mathbf{s}(t)$, denoted by $s_k(t)$, is equal to $1$ if the $k$-th queue (pair) is scheduled or otherwise equal to $0$. It can be seen that the cardinality of set  $\mathcal{S}$ is equal to $\left| \mathcal{S}\right|=2^N$. Remark that, in terms of notation, $\mathbf{s}(t)$ and $\mathcal{L}(t)$ are used to represent the same thing, that is the scheduled pairs at timeslot $t$, but they illustrate it differently. Specifically, using $\mathbf{s}(t)$ the active pairs correspond to the non-zero coordinates (equal to $1$), whereas $\mathcal{L}(t)$ contains the indexes (positions) of these pairs. Let $\bm{\mathcal{L}}$ be the set of all possible subsets $\mathcal{L}(t)$.

Now, using the definition of $B_k(t)$, which was provided earlier,
the queueing dynamics (i.e. how the queue lengths evolve over time) can be described by the following
\begin{align}
q_k(t+1)=\max \left\{q_k(t)- B_k(t), 0\right\} + A_k(t),  \qquad \, \,  \forall k \in \{1,\ldots,N \}, \forall t \in \{0,1,\ldots \}, \label{eq:q_evo}
\end{align}
where we note that $B_k(t)$ depends on the scheduling policy.

In this work, the focus will be mainly on the stability of the system. Formally, its definition is as follows.
\begin{defi}[Strong Stability]
The condition for strong stability of the system can be expressed  as the following
\begin{align}
 \underset{T \rightarrow \infty}{\lim \sup}  \frac{1}{T} \sum\limits_{t=0}^{T-1}  \mathbb{E} \left\{ q_k(t) \right\} < \infty, \forall k \in \{ 1,...,N \}.
 \end{align}
\end{defi}
From this definition, stability implies that the mean queue length of every queue in the system is finite, further implying finite delays in single hop systems. Note that in the remainder of the manuscript “stable” will imply “strongly stable” unless stated otherwise. This definition leads us to the concept of stability region.
\begin{defi}[Stability Region]
The stability region can be defined as the set of mean arrival rate vectors for which all the queues are strongly stable. Furthermore, a scheduling policy (algorithm) that achieves this region is called throughput optimal.
\end{defi}
For the rest of the paper, when describing and characterizing stability regions, we implicitly mean that the system is stable in the \emph{interior} of the characterized region. Normally, for the boundary points, the system has at least a weaker form of stability called “mean rate stability”.

Now, we discuss stability optimal policies for our setting. To this end, we denote the stability region by $\Lambda$ and we define $\mathcal{V}$ as the set that contains the corner points (vertices) of this region. If we have a system where the arrival rates are known, the stability can be achieved by a predefined time-sharing strategy. Indeed, an arrival rate vector $\mathbf{a} \in \Lambda$ can be expressed as a convex combination of the points in $\mathcal{V}$. More in detail, we have $\mathbf{a}=\sum_{n=1}^{|\mathcal{V}|} p_n \mathbf{r}_n$, where $\mathbf{r}_n$ represents the $n$-th element of $\mathcal{V}$, $p_n \ge 0$ and $\sum_{n=1}^{|\mathcal{V}|} p_n =1$. We can find at least one point $\mathbf{a}^\prime$ on the boundary of $\Lambda$ such that $\mathbf{a} \preceq \mathbf{a}^\prime$. Since $\mathbf{a}^\prime \in \Lambda$, we can  write $\mathbf{a}^\prime = \sum_{n=1}^{|\mathcal{V}|} p^\prime_n \mathbf{r}_n$, with $p^\prime_n \ge 0$ and $\sum_{n=1}^{|\mathcal{V}|} p^\prime_n =1$. Recall that a point $\mathbf{r}_n$ represents a specific scheduling decision. Then, to achieve queues stability, each point (decision) $\mathbf{r}_n$ should be selected with probability $p^\prime_n$.  
In our system, as well as in most practical systems, a-priori knowledge of the arrival rates is not available, which is needed to calculate the set of probabilities $p^\prime_n$. We recall that at the beginning of each timeslot the \ac{CS} makes the scheduling decision, i.e. selects the set of active pairs, and knows only the queue lengths and the channel statistics. 
Then, in order to stabilize the queues in our system, we can consider the knowledge of average rates and queue lengths rather than arrival rates \cite{Lau2012StabilityDelay,Georgiadis06resourceallocation}, using the policy described as the following 
\begin{align}
 \Delta^{\text{*}} :  \mathbf{s}(t) = \operatorname*{arg\,max}_{ \mathbf{s} \in   \mathcal{S}  }  \left\{ \mathbf{r}(  \mathbf{s} ) \cdot \mathbf{q}(t)  \right\}, \label{eq:MW}
\end{align}
where “$\cdot$” is the scalar (dot) product, and $\mathbf{r}(  \mathbf{s} )$ is constructed by replacing the non-zero coordinates of $\mathbf{s}$, which represent the selected pairs, with their corresponding average rate values. More in detail, recalling that $\mathcal{L}$ represents the positions (indexes) of the non-zero coordinates of $\mathbf{s}$, vector $\mathbf{r}(  \mathbf{s} )$ contains $ \mathbb{E}\{ B_k \mid \mathcal{L} \}$ at position $k$ if the $k$-th coordinate of $\mathbf{s}$ is '$1$' and $0$ if this coordinate is '$0$'. The proposed algorithm is nothing but a weighted sum maximization, and in general it is called the \emph{Max-Weight} rule.  
Remark that, due to our centralized setting in which the \ac{CS} should select the set of active pairs at the beginning of each timeslot, the scheduling policy in our system depends on the average transmission rate and not on the instantaneous one. For this policy, the following statement holds.
\begin{lemma}
\label{le:OptSchedPolicy}
Under the adopted system, the scheduling policy $\Delta^{\text{*}}$ is throughput optimal, meaning that it can stabilize the system for every mean arrival rate vector in $\Lambda$.
\end{lemma}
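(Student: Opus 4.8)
The plan is the classical Lyapunov-drift argument for Max-Weight scheduling. Fix a mean arrival rate vector $\mathbf{a}$ in the interior of $\Lambda$ and take the quadratic Lyapunov function $V(\mathbf{q}) = \sum_{k=1}^N q_k^2$ with one-slot conditional drift $\Delta(\mathbf{q}(t)) \coloneqq \mathbb{E}\{V(\mathbf{q}(t+1)) - V(\mathbf{q}(t)) \mid \mathbf{q}(t)\}$. First I would square \eqref{eq:q_evo}, use the elementary bound $\big(\max\{q-B,0\}+A\big)^2 \le q^2 + A^2 + B^2 + 2q(A-B)$ together with $A_k(t) < A_{\max}$ and $B_k(t) < B_{\max}$, sum over $k$, and take the conditional expectation; since the arrivals are i.i.d. in time and independent of the queue lengths (so $\mathbb{E}\{A_k(t)\mid\mathbf{q}(t)\}=a_k$),
\begin{align}
\Delta(\mathbf{q}(t)) \;\le\; C \;+\; 2\sum_{k=1}^N q_k(t)\, a_k \;-\; 2\,\mathbb{E}\!\left\{ \sum_{k=1}^N q_k(t)\, B_k(t) \,\Big|\, \mathbf{q}(t) \right\},
\end{align}
with $C \coloneqq N\,(A_{\max}^2 + B_{\max}^2)$.

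Next I would evaluate the service term under $\Delta^{*}$. In the centralized model the active set $\mathcal{L}(t)$ is a deterministic function of $\mathbf{q}(t)$, and the channel realizations (hence the number of served bits) are i.i.d. across slots and independent of the queue lengths and arrivals; therefore $\mathbb{E}\{B_k(t)\mid\mathbf{q}(t)\} = \mathbb{E}\{B_k \mid \mathcal{L}(t)\}$, which is exactly the $k$-th coordinate of $\mathbf{r}(\mathbf{s}(t))$, so $\mathbb{E}\{\sum_k q_k(t) B_k(t)\mid\mathbf{q}(t)\} = \mathbf{r}(\mathbf{s}(t))\cdot\mathbf{q}(t)$, and by the definition \eqref{eq:MW} of $\Delta^{*}$ this equals $\max_{\mathbf{s}\in\mathcal{S}} \mathbf{r}(\mathbf{s})\cdot\mathbf{q}(t)$. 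Now I would use the interior hypothesis: pick $\epsilon>0$ with $\mathbf{a}+\epsilon\mathbf{1}\in\Lambda$; as recalled in the paragraph preceding the statement, $\mathbf{a}+\epsilon\mathbf{1}$ is dominated coordinate-wise by a convex combination $\sum_n p_n \mathbf{r}_n$ of the vertex rate vectors $\mathbf{r}_n\in\mathcal{V}\subseteq\{\mathbf{r}(\mathbf{s}):\mathbf{s}\in\mathcal{S}\}$, i.e. $a_k+\epsilon \le \sum_n p_n (\mathbf{r}_n)_k$ for all $k$. Multiplying by $q_k(t)\ge 0$ and summing over $k$ gives $\sum_k q_k(t)a_k + \epsilon\|\mathbf{q}(t)\|_1 \le \sum_n p_n\,\mathbf{r}_n\cdot\mathbf{q}(t) \le \max_{\mathbf{s}\in\mathcal{S}}\mathbf{r}(\mathbf{s})\cdot\mathbf{q}(t)$. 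Substituting into the drift bound yields $\Delta(\mathbf{q}(t)) \le C - 2\epsilon\|\mathbf{q}(t)\|_1$.

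The final step is the standard Lyapunov-drift theorem (see \cite{Georgiadis06resourceallocation}): taking full expectations, summing over $t=0,\dots,T-1$, telescoping $V$, and using $V\ge 0$ and $\mathbb{E}\{V(\mathbf{q}(0))\}<\infty$, one gets $\tfrac{2\epsilon}{T}\sum_{t=0}^{T-1}\sum_k \mathbb{E}\{q_k(t)\} \le C + \tfrac{\mathbb{E}\{V(\mathbf{q}(0))\}}{T}$, hence $\limsup_{T\to\infty}\tfrac{1}{T}\sum_{t=0}^{T-1}\sum_k \mathbb{E}\{q_k(t)\} \le \tfrac{C}{2\epsilon} < \infty$, which is strong stability of every queue. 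Since $\mathbf{a}$ was an arbitrary interior point of $\Lambda$, $\Delta^{*}$ is throughput optimal.

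The one genuinely non-routine ingredient is the step that converts the interior hypothesis into a strictly coordinate-wise-dominating stationary randomized schedule, i.e. the structural fact that $\Lambda$ is exactly the down-closure of $\mathrm{conv}\{\mathbf{r}(\mathbf{s}):\mathbf{s}\in\mathcal{S}\}$, so that the vertices in $\mathcal{V}$ are themselves scheduling-decision rate vectors. In the centralized setting adopted here this is clean, because a scheduling policy acts only through average rates and hence time-sharing over decisions realizes any convex combination of the $\mathbf{r}(\mathbf{s})$; still, this is where one must be careful about what the stability region is and why no admissible policy can stabilize a point outside it. The remaining ingredients — the drift algebra and the telescoping — are routine.
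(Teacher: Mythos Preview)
Your proof is correct and is precisely the standard Lyapunov-drift argument the paper has in mind: the paper merely states that the result follows from Foster's theorem and omits the details, but later (in the proof of Theorem~\ref{th:BtoB'}) it spells out exactly the same quadratic-Lyapunov computation you give here, with the same elementary squaring bound, the same constant, and the same telescoping conclusion. Your treatment is in fact more careful than the paper's omitted proof in making explicit why the interior hypothesis yields a strictly dominating stationary randomized schedule.
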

\begin{proof}
We show that policy $\Delta^{\text{*}}$ stabilizes the system for all $\mathbf{a} \in \Lambda$ by proving that the  Markov chain of the corresponding system is positive recurrent. For this purpose, we use Foster's theorem. Such proof is standard in the literature and is thus omitted for sake of brevity.
\end{proof}
\subsubsection*{Computational Complexity of $\Delta^{\text{*}}$}
For such optimal policy, an important factor to investigate is the computational complexity (CC), which we derive next. Because what we are looking for is the maximum over $2^N$ possible values, due to $2^N$ combinations, thus it takes $O(2^{N})$  after computing all values $\mathbf{r}(  \mathbf{s} )  \cdot \mathbf{q}(t)$ to find the maximum value (resp., the corresponding argument). Note that for two fixed vectors we can compute this product in time $O(N)$. Thus we would have $O(N 2^{N})$ ignoring computing $\mathbf{r}( \mathbf{s} )$, which can be done offline. We can notice that this computational complexity increases considerably with the maximum number of pairs $N$.
\begin{table}
\centering
\captionsetup{labelsep=newline,size=small,font=sc}
\caption{list of the parameters used in the model}   
\small
\begin{tabular}{|c||l|}
\hline
\textbf{Parameter}	& \textbf{Description}                              \\
\hline \hline
$N$		  &	Maximum number of  pairs  					                \\
\hline
$N_\text{t}$ 	  &	Number of antennas at each transmitter 			    \\
\hline
$N_\text{r}$	  & Number of antennas at each receiver  				 \\
\hline
$d_k$       &	Number of data streams 	for pair $k$	                 \\
\hline
$P$        &  Total power at each transmitter                            \\
\hline
$\theta$  &  Fraction of slot duration to probe one user 		          \\
\hline
$B$       &  Number of quantization bits		          \\
\hline                                       
$\tau$      &  SINR threshold    
\\       
\hline
$R$      &  Assigned rate corresponding to $\tau$                                   \\
\hline
$\mathcal{L}(t)$ & Subset of scheduled (active) pairs at timeslot $t$     \\
\hline
$L(t)$    & Cardinality of subset $\mathcal{L}(t)$                         \\
\hline
$\mathbf{s}(t)$     & Scheduling decision vector at timeslot $t$                         \\       
\hline
$\mathbf{a}=\left( a_1,\ldots,a_N \right)$   &  Vector of mean arrival rates (in bits per timeslot)                                                             \\
\hline
\end{tabular}
\end{table}

\section{Derivation of Success Probabilities and Average Rates}
\label{sec:calcul_avgrate}

In this section, we give the expression for the success probability of the \ac{SINR} and, subsequently, the expression for the average transmission rate under the imperfect case as well as under the perfect case. 

For the calculation of the average rate,
we recall that we adopt the model that relies on the success probability. Note that other average rate model exists and could be adopted (see \cite{Ayach2012Interference}), which consists in averaging the $\log_2 (1+ \text{SINR})$. We next provide a proposition in which we calculate the success probabilities under the considered setting.
\begin{proposition}
The probability that the received \ac{SINR} corresponding to stream $m$ of active user $k$ exceeds a threshold $\tau$ given that $\mathcal{L}(t)$ is the set of scheduled pairs (including pair $k$) can be given by
\begin{align}
\mathbb{P} \left\{ \gamma_k^{(m)} \ge \tau \mid \mathcal{L}(t) \right\}=
\begin{dcases}
\, \, e^{ -\frac{\sigma^2 \tau}{\alpha_{kk}}} \, \mathit{MGF}_{\! \! \mathit{RI}_k^{(m)}} \! \left(-\frac{\tau}{\alpha_{kk}} \right), &  \qquad{} \text{imperfect case} \\
\, \, e^{ -\frac{\sigma^2 \tau}{\alpha_{kk}}},  &  \qquad{} \text{perfect case}
\end{dcases}
\end{align}
where 
\begin{align}
\mathit{RI}_{\! k}^{(m)}= \sum\limits_{\substack{i \in \mathcal {L}(t), i \ne k}}   \alpha_{ki}  \sum\limits_{j=1}^{d_i} \left| \left(\mathbf{\hat{u}}_k^{(m)} \right)^{\! H}  \mathbf{H}_{ki} \mathbf{\hat{v}}_i^{(j)} \right|^2
\end{align}
is the residual interference, which appears in the denominator of $\gamma_k^{(m)}$ in the imperfect case, and 
$\mathit{MGF}_{ \! \! \mathit{RI}_{\! k}^{(m)} } (\cdot)$ stands for the \ac{MGF} of $\mathit{RI}_{\! k}^{(m)}$. \label{pr:SuccProb}
\end{proposition}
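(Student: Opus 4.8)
The plan is to isolate the single probabilistic ingredient that matters and reduce everything else to a one-line Laplace-transform identity. Write the effective desired-signal power as $G_k^{(m)} \coloneqq \left| \left(\mathbf{\hat{u}}_k^{(m)}\right)^{\! H} \mathbf{H}_{kk}\, \mathbf{\hat{v}}_k^{(m)} \right|^2$ in the imperfect case (and the analogous quantity with the unhatted vectors in the perfect case). The core of the argument will be the claim that, conditionally on $\mathcal{L}(t)$, the variable $G_k^{(m)}$ is exponentially distributed with unit mean and is independent of the residual interference $\mathit{RI}_k^{(m)}$. Granting this, \eqref{eq:SINR} gives, in the imperfect case, $\{\gamma_k^{(m)} \ge \tau\} = \bigl\{ \alpha_{kk}\,G_k^{(m)} \ge \tau\,(\sigma^2 + \mathit{RI}_k^{(m)}) \bigr\}$, so conditioning on $\mathit{RI}_k^{(m)}$ and using the exponential tail $\mathbb{P}\{\mathrm{Exp}(1) \ge x\} = e^{-x}$ for $x \ge 0$,
\begin{align}
\mathbb{P}\!\left\{ \gamma_k^{(m)} \ge \tau \mid \mathcal{L}(t) \right\} = \mathbb{E}\!\left[ e^{-\frac{\tau}{\alpha_{kk}}\left(\sigma^2 + \mathit{RI}_k^{(m)}\right)} \right] = e^{-\frac{\sigma^2 \tau}{\alpha_{kk}}}\, \mathbb{E}\!\left[ e^{-\frac{\tau}{\alpha_{kk}}\,\mathit{RI}_k^{(m)}} \right],
\end{align}
and the last expectation is by definition $\mathit{MGF}_{\mathit{RI}_k^{(m)}}(-\tau/\alpha_{kk})$, which is automatically finite (it lies in $[0,1]$) because $\mathit{RI}_k^{(m)} \ge 0$ and the evaluation point is negative, so no moment hypothesis is needed. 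The perfect case is then just the specialization $\mathit{RI}_k^{(m)} \equiv 0$, since there the exact alignment conditions \eqref{eq:IAeq} null both the ISI and the IUI.

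What actually has to be written out is the distributional claim about $G_k^{(m)}$. The key observation is that the direct channel $\mathbf{H}_{kk}$ enters the IA construction only through the within-pair stream-separation stage: the interference-aligning precoders $\{\mathbf{\hat{v}}_i^{(j)}\}$ and the IUI-zero-forcing receive subspace $\mathbf{U}_k$ at user $k$ are produced, via \eqref{eq:n0} and \eqref{eq:IAeq}, from the cross-link channels $\{\mathbf{H}_{\ell i}\}_{\ell \ne i}$ alone (in the imperfect case through their quantized versions), hence are independent of $\mathbf{H}_{kk}$. Conditioning on all of this cross-link information, the effective direct channel left after IUI zero-forcing, $\mathbf{G}_k = \mathbf{U}_k^{H}\mathbf{H}_{kk}\mathbf{V}_k$, is a square matrix whose entries are i.i.d. zero-mean unit-variance complex Gaussian (with $\mathbf{U}_k$ having orthonormal columns and the $d_k$ precoders of pair $k$ taken orthonormal), and $G_k^{(m)}$ is exactly the per-stream zero-forcing output gain $1/\bigl[(\mathbf{G}_k^{H}\mathbf{G}_k)^{-1}\bigr]_{mm}$ of this matrix. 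By the classical distribution of zero-forcing gains this quantity is $\mathrm{Exp}(1)$, the relevant number of degrees of freedom being $d_k - d_k + 1 = 1$, and this holds irrespective of the conditioning, so $G_k^{(m)} \sim \mathrm{Exp}(1)$ unconditionally as well.

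For the independence of $G_k^{(m)}$ from $\mathit{RI}_k^{(m)}$ I would invoke the companion fact that, for a square i.i.d. Gaussian $\mathbf{G}_k$, the zero-forcing gain $1/\bigl[(\mathbf{G}_k^{H}\mathbf{G}_k)^{-1}\bigr]_{mm}$ is independent of the corresponding zero-forcing combining \emph{direction} (the unit vector along the $m$-th row of $\mathbf{G}_k^{-1}$), a consequence of the unitary invariance of the law of $\mathbf{G}_k^{-1}$. Since $\mathbf{\hat{u}}_k^{(m)}$ depends on $\mathbf{H}_{kk}$ only through that direction, whereas $\mathit{RI}_k^{(m)} = \sum_{i \ne k}\alpha_{ki}\sum_{j}\bigl|(\mathbf{\hat{u}}_k^{(m)})^{H}\mathbf{H}_{ki}\mathbf{\hat{v}}_i^{(j)}\bigr|^2$ is a function of that same direction together with the conditionally fixed cross-link quantities, $\mathit{RI}_k^{(m)}$ is conditionally a function of the direction alone and is therefore conditionally — hence also unconditionally, given $\mathcal{L}(t)$ — independent of $G_k^{(m)}$. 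Substituting into the display above then completes the proof for both cases. I expect this isotropy/independence step to be the only real obstacle; everything else is the routine Laplace-transform manipulation already exhibited.
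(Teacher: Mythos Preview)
Your proof is correct and follows the same route as the paper: both rest on the fact that the effective desired-signal gain is $\mathrm{Exp}(1)$ and independent of $\mathit{RI}_k^{(m)}$, then condition on the interference and recognize the resulting Laplace transform as the MGF. The only difference is that the paper outsources the distributional claim to \cite[Appendix~A]{Ayach2012Interference} and leaves the independence implicit in the integration step, whereas you supply a self-contained argument via the cross-link/direct-link decomposition and the ZF gain--direction independence---precisely the point you flagged as the only real obstacle.
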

\begin{proof}
It was shown in \cite[Appendix A] {Ayach2012Interference} that  both $\left|  \left(\mathbf{\hat{u}}_k^{(m)} \right)^{\! H } \mathbf{H}_{kk} \mathbf{\hat{v}}_k^{(m)}  \right|^2$ and  $\left|  \left(\mathbf{u}_k^{(m)} \right)^{\! H } \mathbf{H}_{kk} \mathbf{v}_k^{(m)}  \right|^2$ have an exponential distribution with parameter 1, thus the proof for the prefect case follows directly. However, for the imperfect case, the proof is not straightforward and needs some investigations. By defining $G= \left|  \left(\mathbf{\hat{u}}_k^{(m)} \right)^{\! H } \mathbf{H}_{kk} \mathbf{\hat{v}}_k^{(m)}  \right|^2$, we can write
\begin{align}
\mathbb{P} \left\{ \gamma_k^{(m)} \ge \tau \mid \mathcal{L}(t) \right\} \nonumber &= \mathbb{P} \left\{ \frac{ G }{   \mathit{RI}_{\! k}^{(m)} + \sigma^2 }\ge \frac{\tau}{\alpha_{kk}} \right\} \nonumber\\ &=\mathbb{P} \left\{ G \ge \frac{ \mathit{RI}_{\! k}^{(m)} \tau}{\alpha_{kk}}   + \frac{ \sigma^2 \tau}{\alpha_{kk}}  \right\} \nonumber \\  &= \int_0^\infty
 \mathit{CCDF}_{\!  G} \left( \frac{\mathit{RI}_{\! k}^{(m)} \tau}{\alpha_{kk}}   + \frac{ \sigma^2 \tau}{\alpha_{kk}} \right) \mathit{PDF} \! \left( \mathit{RI}_{\! k}^{(m)}  \right) \,  d \mathit{RI}_{\! k}^{(m)} \nonumber \\  &= \int_0^\infty e^{-\frac{ \mathit{RI}_{\! k}^{(m)} \tau}{\alpha_{kk}}  - \frac{ \sigma^2 \tau}{\alpha_{kk}}} \, \mathit{PDF} \! \left( \mathit{RI}_{\! k}^{(m)} \right) \, d\mathit{RI}_{\! k}^{(m)},
\end{align}
where the last equality holds since $G$ is exponentially distributed with parameter $1$ and thus its complementary cumulative distribution function can be given by $\mathit{CCDF}_{\! G}(x)= e^{-x} $. Note that $\mathit{PDF} \! \left( \mathit{RI}_{\! k}^{(m)} \right)$ is the \ac{PDF} of $\mathit{RI}_{\! k}^{(m)}$. Thus, we get
\begin{align}
\mathbb{P} \left\{ \gamma_k^{(m)} \ge \tau \mid \mathcal{L}(t) \right\} \nonumber&= \int_0^\infty e^{- \frac{ \sigma^2 \tau}{\alpha_{kk}}  } \, e^{-\frac{ \mathit{RI}_{\! k}^{(m)} \tau}{\alpha_{kk}} }   \, \mathit{PDF} \! \left( \mathit{RI}_{\! k}^{(m)} \right) \, d\mathit{RI}_{\! k}^{(m)}  \\ &=    e^{- \frac{ \sigma^2 \tau}{\alpha_{kk} } } \mathit{MGF}_{ \! \! \mathit{RI}_{\! k}^{(m)} } \! \left(-\frac{\tau}{\alpha_{kk}} \right),
\end{align}
in which $\mathit{MGF}_{ \! \! \mathit{RI}_{\! k}^{(m)} } (\cdot)$ is  the \ac{MGF} of $\mathit{RI}_{\! k}^{(m)}$. This concludes the proof.
\end{proof} 
In the above result, the success probability expression in the imperfect case is given in function of the \ac{MGF} of the leakage interference $\mathit{RI}_{\! k}^{(m)}$. 
It is noteworthy to mention that the explicit expression of this \ac{MGF} will be given  afterwards during the average rate calculations. But first, 
let us focus on the expression $\mathit{RI}_{\! k}^{(m)}$. Indeed, we have 
\begin{align}
\mathit{RI}_{\! k}^{(m)} \nonumber &= \sum\limits_{\substack{i \in \mathcal {L}(t), i \ne k}}   \alpha_{ki}  \sum\limits_{j=1}^{d_i} \left| \left(\mathbf{\hat{u}}_k^{(m)} \right)^{\! H}  \mathbf{H}_{ki} \mathbf{\hat{v}}_i^{(j)} \right|^2  \\ \nonumber &= \sum\limits_{\substack{i \in \mathcal {L}(t), i \ne k}}    \alpha_{ki} \sum\limits_{j=1 }^{d_i} \left| \mathbf{h}_{ki}^{H} \, \mathbf{T}_{k,i}^{(m,j)} \right|^2 \\   &=  \sum\limits_{\substack{i \in \mathcal {L}(t), i \ne k}}    \alpha_{ki} \left\| \mathbf{h}_{ki} \right\|^2  \sum\limits_{j=1 }^{d_i} \left| \mathbf{\tilde{h}}_{ki}^H \, \mathbf{T}_{k,i}^{(m,j)} \right|^2,
\end{align}
in which $\mathbf{T}_{k,i}^{(m,j)} = \mathbf{\hat{v}}_i^{(j)} \otimes ((\mathbf{\hat{u}}_k^{(m)})^H)^T $ (where $\otimes$ is the Kronecker product) and $\mathbf{\tilde{h}}_{ki}$ is the normalized vector of channel $\mathbf{h}_{ki}$, i.e. $\mathbf{\tilde{h}}_{ki}= \frac{\mathbf{h}_{ki}}{\left\| \mathbf{h}_{ki} \right\|}$. Note that $((\mathbf{\hat{u}}_k^{(m)})^H)^T$ is nothing but the conjugate of $\mathbf{\hat{u}}_k^{(m)}$.
Following the model used in \cite{Xiaoming2014PerformanceAnaly}, the channel direction $\mathbf{\tilde{h}}_{ki}$ can be written as follows
\begin{align}
\mathbf{\tilde{h}}_{ki}=  \sqrt{1-e_{ki}} \, \mathbf{\hat{h}}_{ki} + \sqrt{e_{ki}} \, \mathbf{w}_{ki},
\end{align}
where $\mathbf{\hat{h}}_{ki}$ is the channel quantization vector of $\mathbf{h}_{ki}$ and $\mathbf{w}_{ki}$ is a unit norm vector isotropically distributed in the null space of $\mathbf{\hat{h}}_{ki}$, with $\mathbf{w}_{ki}$ independent of $e_{ki}$.
Since \ac{IA} is performed based on the quantized \ac{CSI} $\mathbf{\hat{h}}_{ki}$, we get 
\begin{align}
\left| \mathbf{\tilde{h}}_{ki}^H \, \mathbf{T}_{k,i}^{(m,j)} \right|^2 = \left|  \sqrt{1-e_{ki}}  \, \mathbf{\hat{h}}_{ki}^H \, \mathbf{T}_{k,i}^{(m,j)} + \sqrt{e_{ki}} \, \mathbf{w}_{ki}^H \, \mathbf{T}_{k,i}^{(m,j)} \right|^2 = e_{ki} \left|  \mathbf{w}_{ki}^H \, \mathbf{T}_{k,i}^{(m,j)} \right|^2.
\end{align}
Therefore, $\mathit{RI}_{\! k}^{(m)}$ can be rewritten as
\begin{align}
\mathit{RI}_{\! k}^{(m)} =  \sum\limits_{\substack{i \in \mathcal {L}(t), i \ne k}}   \alpha_{ki} \left\| \mathbf{h}_{ki} \right\|^2  e_{ki}  \sum\limits_{j=1}^{d_i}  \left|  \mathbf{w}_{ki}^H \, \mathbf{T}_{k,i}^{(m,j)} \right|^2. \label{eq:LI}
\end{align}

Based on the above results, we now have all the required materials to derive the average rate expressions for both the perfect and imperfect cases. We recall that if $\mathcal{L}(t)$ is the subset of scheduled pairs, the general formula of the average rate of active user $k$ can be given as 
\begin{align}
\mathbb{E}\left\{ B_k(t) \mid \mathcal{L}(t) \right\} =(1-L(t) \theta) \sum_{m=1}^{d_k}   R \, \mathbb{P} \left\{\gamma_k^{(m)} \ge \tau \mid \mathcal{L}(t) \right\}  \label{eq:r_M_M1}.
\end{align}
The explicit rate expressions we are looking for are provided in the following theorem.
\begin{theorem}
\label{th:AvgRate}
Given a subset of scheduled pairs, $\mathcal{L}(t)$, the average rate of user $k$ ($\in \mathcal{L}(t)$) is:
\begin{itemize}
\item For the imperfect case, this rate can be expressed as   
\end{itemize}
\begin{align}
(1-L(t) \theta) d_k   R e^{- \frac{ \sigma^2 \tau}{\alpha_{kk} } } \mathit{MGF}_{ \! \! \mathit{RI}_{\! k}^{(m)} } \! \left(-\frac{\tau}{\alpha_{kk}} \right),
\end{align}
in which the \ac{MGF} can be written as 
\begin{align}
\mathit{MGF}_{ \! \! \mathit{RI}_{\! k}^{(m)}} \! \left(-\frac{\tau}{\alpha_{kk}} \right)= \prod\limits_{i \in \mathcal{L}(t),i\ne k}  \left( \frac{\alpha_{ki}\tau d_i  }{ \alpha_{kk}   2^{\frac{B}{Q}}   }+1 \right)^{\! -Q} \,_2F_1(\breve{b}_i,Q;\breve{a}_i+\breve{b}_i;\frac{1}{\frac{ \alpha_{kk} 2^{\frac{B}{Q}} }{ \alpha_{ki} \tau d_i} +1} ),
\end{align}
for $j=1,\ldots,D$. In the above equation, $\,_2F_1$ represents the hypergeometric function, $\breve{a}_i=\frac{(Q+1)d_i}{Q}-\frac{1}{Q}$ and $\breve{b}_i= (Q-1)\breve{a}_i$. We recall that $Q=N_\text{t} N_\text{r}-1$.
\begin{itemize}
\item For the perfect case, the average rate we are looking for can be given by
\end{itemize}
\begin{align}
(1-L(t) \theta) d_k   R e^{- \frac{ \sigma^2 \tau}{\alpha_{kk} } }.
\end{align}
\end{theorem}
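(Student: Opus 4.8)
The plan is to plug the success probabilities of Proposition~\ref{pr:SuccProb} into the average-rate identity~\eqref{eq:r_M_M1} and then reduce the whole statement to an evaluation of the \ac{MGF} of the residual interference. The perfect case is immediate: Proposition~\ref{pr:SuccProb} gives $\mathbb{P}\{\gamma_k^{(m)}\ge\tau\mid\mathcal{L}(t)\}=e^{-\sigma^2\tau/\alpha_{kk}}$, which does not depend on $m$, so summing~\eqref{eq:r_M_M1} over the $d_k$ streams produces the factor $d_k$ and yields $(1-L(t)\theta)\,d_k R\,e^{-\sigma^2\tau/\alpha_{kk}}$. For the imperfect case the same proposition gives $\mathbb{P}\{\gamma_k^{(m)}\ge\tau\mid\mathcal{L}(t)\}=e^{-\sigma^2\tau/\alpha_{kk}}\,\mathit{MGF}_{\mathit{RI}_k^{(m)}}(-\tau/\alpha_{kk})$; since the (modelled) distribution of $\mathit{RI}_k^{(m)}$ will turn out to be the same for all streams $m$, the stream sum again contributes $d_k$, and everything boils down to computing one such \ac{MGF}.

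To evaluate $\mathit{MGF}_{\mathit{RI}_k^{(m)}}(-\tau/\alpha_{kk})=\mathbb{E}\bigl[\exp(-\tfrac{\tau}{\alpha_{kk}}\mathit{RI}_k^{(m)})\bigr]$ I would start from the form~\eqref{eq:LI}, $\mathit{RI}_k^{(m)}=\sum_{i\in\mathcal{L}(t),\,i\ne k}\alpha_{ki}\|\mathbf{h}_{ki}\|^2 e_{ki}Y_i$ with $Y_i=\sum_{j=1}^{d_i}|\mathbf{w}_{ki}^H\mathbf{T}_{k,i}^{(m,j)}|^2$. The channels $\mathbf{H}_{ki}$ being independent across $i$, the summands are mutually independent and the \ac{MGF} factorizes into $\prod_{i\ne k}\mathbb{E}\bigl[\exp(-\tfrac{\tau\alpha_{ki}}{\alpha_{kk}}\|\mathbf{h}_{ki}\|^2 e_{ki}Y_i)\bigr]$. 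Inside each factor I would peel off the three sources of randomness in turn. First, $\|\mathbf{h}_{ki}\|^2$ is the squared norm of an $N_{\text{t}}N_{\text{r}}$-dimensional standard complex Gaussian vector, hence $\mathrm{Gamma}(Q+1,1)$-distributed, and being the norm of a Gaussian vector it is independent of the direction $\mathbf{\tilde{h}}_{ki}$ and therefore of $(e_{ki},Y_i)$; conditioning on $(e_{ki},Y_i)$ and using $\mathbb{E}[e^{-u\|\mathbf{h}_{ki}\|^2}]=(1+u)^{-(Q+1)}$ collapses the factor to $\mathbb{E}\bigl[(1+\tfrac{\tau\alpha_{ki}}{\alpha_{kk}}e_{ki}Y_i)^{-(Q+1)}\bigr]$. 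Second, by the quantization-cell approximation $e_{ki}$ has density $2^{B}Q\,\varepsilon^{Q-1}$ on $[0,2^{-B/Q}]$ and, by the modelling assumptions, is independent of $\mathbf{w}_{ki}$ and hence of $Y_i$; after the substitution $\varepsilon=2^{-B/Q}u$, which makes the leading constant collapse to $1$, the $e_{ki}$-integral becomes $\int_0^1 Q u^{Q-1}(1+\eta u)^{-(Q+1)}\,du=(1+\eta)^{-Q}$ with $\eta=\tfrac{\tau\alpha_{ki}}{\alpha_{kk}2^{B/Q}}Y_i$ (the identity holds because $u^{Q}(1+\eta u)^{-Q}$ is an explicit antiderivative of the integrand). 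Each factor has thus become $\mathbb{E}_{Y_i}\bigl[(1+\tfrac{\tau\alpha_{ki}}{\alpha_{kk}2^{B/Q}}Y_i)^{-Q}\bigr]$.

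The third and last step is to model $Y_i$. The \ac{IA} conditions~\eqref{eq:IAeq} applied to the quantized channels force $(\mathbf{\hat{u}}_k^{(m)})^H\mathbf{\hat{H}}_{ki}\mathbf{\hat{v}}_i^{(j)}=0$ for $i\ne k$, i.e. $\mathbf{\hat{h}}_{ki}^H\mathbf{T}_{k,i}^{(m,j)}=0$, so every $\mathbf{T}_{k,i}^{(m,j)}$ lies in the $Q$-dimensional null space of $\mathbf{\hat{h}}_{ki}$ in which $\mathbf{w}_{ki}$ is isotropic; following \cite{Xiaoming2014PerformanceAnaly} one then treats the $d_i$ vectors $\mathbf{T}_{k,i}^{(m,j)}$ as i.i.d.\ isotropic unit vectors of that space, so that $\mathbb{E}[Y_i]=d_i/Q$ and, matching also the second moment, $Y_i/d_i$ is approximated by a $\mathrm{Beta}(\breve{a}_i,\breve{b}_i)$ variable with $\breve{b}_i=(Q-1)\breve{a}_i$ and $Q\breve{a}_i+1=(Q+1)d_i$, i.e. $\breve{a}_i=\tfrac{(Q+1)d_i-1}{Q}$; this distribution no longer depends on $m$, which is exactly the fact invoked in the first paragraph. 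Writing $\mathbb{E}[(1+\kappa_i X)^{-Q}]$ with $\kappa_i=\tfrac{\alpha_{ki}\tau d_i}{\alpha_{kk}2^{B/Q}}$ and $X\sim\mathrm{Beta}(\breve{a}_i,\breve{b}_i)$ as a Beta integral, then substituting $x\mapsto 1-x$ (which pulls out $(1+\kappa_i)^{-Q}$ and turns the argument into $\kappa_i/(1+\kappa_i)=1/(1+1/\kappa_i)$) and invoking Euler's integral representation of ${}_2F_1$, gives $(1+\kappa_i)^{-Q}\,{}_2F_1\bigl(\breve{b}_i,Q;\breve{a}_i+\breve{b}_i;\tfrac{\kappa_i}{1+\kappa_i}\bigr)$. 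Taking the product over $i\in\mathcal{L}(t)$, $i\ne k$, and reinstating the factors $e^{-\sigma^2\tau/\alpha_{kk}}$ and $(1-L(t)\theta)d_kR$ gives the stated imperfect-case expression, and together with the first paragraph this establishes the theorem.

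The Gamma \ac{MGF}, the elementary $e_{ki}$-integral, and the Euler identity for ${}_2F_1$ are all routine. The delicate point — and the one I expect to be the main obstacle — is the last step: justifying the independence used to separate $\|\mathbf{h}_{ki}\|^2$, $e_{ki}$ and $Y_i$, and, above all, settling on the model for $Y_i$ and checking that the first two moments of $Y_i/d_i$ are reproduced by $\mathrm{Beta}(\breve{a}_i,\breve{b}_i)$ with exactly these parameters. This is the only place where a modelling approximation (rather than an exact manipulation) enters, and it is precisely what makes the parameters of the hypergeometric function come out as in the statement.
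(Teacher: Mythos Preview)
Your proposal is correct and arrives at exactly the same closed form, but the route differs from the paper's. The paper lumps $e_{ki}\|\mathbf{h}_{ki}\|^2$ into a single $\mathrm{Gamma}(Q,2^{B/Q})$ variable (citing \cite{Xiaoming2014PerformanceAnal}), then invokes the known density of a Gamma$\times$Beta product in terms of the Kummer function \cite{Nadarajah2005ProductGammaBeta}, and finally looks up the Laplace transform of that density in \cite{Bateman1954Tables} to obtain the ${}_2F_1$ factor. You instead peel the three layers $\|\mathbf{h}_{ki}\|^2$, $e_{ki}$, $Y_i$ apart and integrate each one explicitly: the Gamma \ac{MGF}, the elementary antiderivative $u^Q(1+\eta u)^{-Q}$ for the quantization-error integral, and Euler's integral for the final Beta expectation. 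In effect your first two layers rederive from scratch the $\mathrm{Gamma}(Q,2^{B/Q})$ law the paper cites, and your third layer bypasses both the Kummer-function density and the Bateman table. This is more self-contained and arguably more transparent; the price is that the independence assumptions (which the paper hides inside the cited distributional results) are now on display and must be argued, as you rightly flag in your last paragraph. Both approaches rely on the same Beta moment-matching approximation for $Y_i/d_i$ with parameters $\breve{a}_i,\breve{b}_i$, so the modelling content is identical.
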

\begin{proof}
For the perfect case, the statement follows directly from Proposition \ref{pr:SuccProb}. Using this proposition, it can be seen that we need to calculate $ \mathit{MGF}_{ \! \! \mathit{RI}_{\! k}^{(m)}} \! \left(-\frac{\tau}{\alpha_{kk}} \right)$ in order to prove the statement for the imperfect case. To this end, we first recall that,
using \eqref{eq:LI},  we have $\mathit{RI}_{\! k}^{(m)} =  \sum\limits_{\substack{i \in \mathcal {L}(t), i \ne k}}   \alpha_{ki} \left\| \mathbf{h}_{ki} \right\|^2  e_{ki}  \sum\limits_{j=1}^{d_i}  \left|  \mathbf{w}_{ki}^H \, \mathbf{T}_{k,i}^{(m,j)} \right|^2 $. 
Since $\mathbf{w}_{ki}$ and $\mathbf{T}_{k,i}^{(m,j)}$ are independent and identically distributed (i.i.d.) isotropic vectors in the null space of $\mathbf{\hat{h}}_{ki}$, $\left|  \mathbf{w}_{ki}^H \, \mathbf{T}_{k,i}^{(m,j)} \right|^2$ is i.i.d. $ \text{Beta}(1,Q-1)$ distributed for all $i$, where $Q=N_\text{t}N_\text{r}-1$. Hence, 
$\sum\limits_{j=1}^{d_i}  \left|  \mathbf{w}_{ki}^H \, \mathbf{T}_{k,i}^{(m,j)} \right|^2$ is the sum of $d_i$ i.i.d. Beta variables, which can be approximated to another Beta distribution \cite{Johannesson1995Approximations}. Specifically, we have $\sum\limits_{j=1}^{d_i}  \left|  \mathbf{w}_{ki}^H \, \mathbf{T}_{k,i}^{(m,j)} \right|^2 \sim d_i \, \text{Beta}(\breve{a}_i,\breve{b}_i)$, where $\breve{a}_i=\frac{(Q+1)d_i}{Q}-\frac{1}{Q}$ and $\breve{b}_i= (Q-1)\breve{a}_i$. 
According to \cite{Xiaoming2014PerformanceAnal}, $e_{ki} \left\| \mathbf{h}_{ki} \right \|^2$ is $\text{Gamma}(Q,2^{\frac{B}{Q}})$ distributed, where $Q$ and $2^{\frac{B}{Q}}$ are the shape and rate parameters, respectively.
Let $ \delta =2^{\frac{B}{Q}}$. It follows that $\mathit{RI}_{\! k}^{(m)} =  \sum\limits_{\substack{i \in \mathcal {L}(t), i \ne k}} \rho_{ki}  X_i Y_i $, with $\rho_{ki}=\alpha_{ki} d_i$, $X_i \sim \text{Gamma}(Q,\delta)$ and $Y_i \sim \text{Beta}(\breve{a}_i,\breve{b}_i)$. \\
It is clear that $X_i Y_i$ is the product of a Gamma and Beta random variables, thus the \ac{PDF}  of $P_i=X_iY_i$ is given by \cite{Nadarajah2005ProductGammaBeta} 
\begin{align}
f_{P_i}(p_i)=
\frac{ \delta^Q \Gamma(\breve{b}_i) }{\Gamma(Q) B(\breve{a}_i,\breve{b}_i)} p_i^{Q-1} e^{-\delta p_i} \Psi(\breve{b}_i,1+Q-\breve{a}_i;\delta p_i),
\end{align}
\begin{minipage}{\textwidth}
where $\Psi$ is the Kummer function defined as
\begin{align}
\Psi(a,b;x)=\frac{1}{\Gamma(a)} \int_0^\infty e^{-xt} t^{a-1} (1+t)^{b-a-1}dt,
\end{align}
and where $\Gamma(\cdot)$ is the Gamma function and $B(\cdot,\cdot)$ is the Beta function.
\end{minipage}
Therefore, the \ac{MGF} of random variable $P_i$ can be written as
\begin{align}
\mathit{MGF}_{\! P_i}(-t) &= \int\limits_{-\infty}^{+\infty} e^{-t p_i} f_{P_i}(p_i) dp_i \nonumber &\\ &=  \kappa     \int\limits_{0}^{+ \infty}  p_i^{Q-1} e^{-t p_i -\delta p_i} \Psi(\breve{b}_i,1+Q-\breve{a}_i;\delta p_i)  dp_i   \nonumber &\\ &\stackrel{(i)}{=}   \kappa   \frac{ \Gamma(Q) \Gamma(\breve{a}_i) }{ \delta^Q \Gamma(\breve{a}_i+\breve{b}_i)  } \left(  \frac{t}{\delta}+1  \right)^{\! -Q}   \,_2F_1(\breve{b}_i,Q;\breve{a}_i+\breve{b}_i;\frac{1}{1+\frac{  \delta }{ t }}) \nonumber &\\ &\stackrel{(ii)}{=}     \left(  \frac{t}{\delta}+1  \right)^{\! -Q}   \,_2F_1(\breve{b}_i,Q;\breve{a}_i+\breve{b}_i;\frac{1}{1+\frac{  \delta }{ t }}),
\end{align}
 where $\kappa = \frac{ \delta^Q \Gamma(\breve{b}_i) }{\Gamma(Q) B(\breve{a}_i,\breve{b}_i)} \vspace{1mm}$.
 The equality (i) is obtained using \cite{Bateman1954Tables}, whereas the equality (ii) holds since the Beta function $B(\breve{a},\breve{b})= \frac{ \Gamma(\breve{a}) \Gamma(\breve{b})}{ \Gamma(\breve{a}+\breve{b})}$.
 It is clear that we can write $\mathit{RI}_{\! k}^{(m)} =  \sum\limits_{\substack{i \in \mathcal {L}(t), i \ne k}} \rho_{ki} P_i $, which is the sum of weighed (independent) random variables ($P_i$) with $\rho_{ki}$ as weights.
 The \ac{MGF} of $\mathit{RI}_{\! k}^{(m)}$ at $-t$ is then given by
 \begin{align}
 \mathit{MGF}_{\! \! \mathit{RI}_{ k}^{(m)}}(-t) \nonumber &= \prod\limits_{\substack{i \in \mathcal {L}(t), i \ne k}} \mathit{MGF}_{ \! P_i}(-t \rho_{ki}) \\ &=   \prod\limits_{\substack{i \in \mathcal {L}(t), i \ne k}}  \left(\frac{\alpha_{ki}d_i t}{\delta}+1 \right)^{\! -Q}  \,_2F_1(\breve{b}_i,Q;\breve{a}_i+\breve{b}_i;\frac{1}{\frac{\delta}{ \alpha_{ki}d_i t} +1}).
 \end{align}
This results from the fact that the  moment-generating function of a sum of independent random variables is the product of the moment-generating functions of these variables.
Hence, by taking $t=\left(-\frac{\tau}{\alpha_{kk}} \right)$ and recalling that $\delta =2^{\frac{B}{Q}}$, we eventually get 
 \begin{align}
 \mathit{MGF}_{ \! \! \mathit{RI}_{\! k}^{(m)}} \! \left(-\frac{\tau}{\alpha_{kk}} \right)= \prod\limits_{i \in \mathcal{L}(t),i\ne k}  \left( \frac{\alpha_{ki}\tau d_i  }{ \alpha_{kk}   2^{\frac{B}{Q}}   }+1 \right)^{\! -Q} \,_2F_1(\breve{b}_i,Q;\breve{a}_i+\breve{b}_i;\frac{1}{\frac{ \alpha_{kk} 2^{\frac{B}{Q}} }{ \alpha_{ki} \tau d_i} +1} ).
 \end{align}
 Notice that this \ac{MGF} expression is independent of the identity of the data stream, so we can write 
 $\sum_{m=1}^{d_k}\mathit{MGF}_{ \! \! \mathit{RI}_{\! k}^{(m)} } \! \left(-\frac{\tau}{\alpha_{kk}} \right)= d_k \, \mathit{MGF}_{ \! \! \mathit{RI}_{\! k}^{(m)} } \! \left(-\frac{\tau}{\alpha_{kk}} \right)$. Hence, the desired result follows.
\end{proof}

\section{Stability Analysis for the Symmetric Case}
\label{sec:PerAnalysis_sym}

In this section, we consider a symmetric system in which the path loss coefficients have the same value, namely $\zeta=\zeta_{ki}$, $\forall k,i$, and all the pairs have equal number of data streams, namely $d=d_k$, $\forall k$; note that we still assume different average arrival rates. Under this system, the \emph{feasibility condition} of \ac{IA}, given in \cite{Yetis2010Feasibility}, becomes $N_\text{t}+N_\text{r} \ge (L+1)d$, which we assume is satisfied here. We recall that at each timeslot, for the selected pairs, rate $R$ can be supported if the \ac{SINR} at the corresponding user is greater than or equal to a given threshold $\tau$; otherwise, the assigned rate is $0$. Let $\alpha= \frac{P\zeta}{d}$. Under this specific model, the \ac{SINR} of stream $m$ at user $k$ becomes
\begin{align}
  \gamma_k^{(m)} =
  \begin{dcases}
  \, \, \frac{ \alpha \left|  \left(\mathbf{\hat{u}}_k^{(m)} \right)^{\! H} \mathbf{H}_{kk} \mathbf{\hat{v}}_k^{(m)}  \right|^2  }{  \sigma^2 +  \sum\limits_{\substack{i \in \mathcal {L}, i \ne k}}  \alpha \left\| \mathbf{h}_{ki} \right\|^2  e_{ki}  \sum\limits_{j=1}^{d}  \left|   \mathbf{w}_{ki}^H \,  \mathbf{T}_{k,i}^{(m,j)} \right|^2  },   & \qquad{} \text{imperfect case} \\ 
  \, \, \frac{ \alpha \left|  \left(\mathbf{u}_k^{(m)} \right)^{\! H} \mathbf{H}_{kk} \mathbf{v}_k^{(m)}  \right|^2  }{\sigma^2},   &  \qquad{} \text{perfect case}   \label{eq:SINR_s} 
  \end{dcases}
\end{align}
As explained in the previous sections, if $\mathcal{L}$ is the subset of scheduled pairs, the average transmission rate per active user is given by $(1-L\theta) d R \, \mathbb{P} \left\{ \gamma_k^{(m)} \ge \tau \mid \mathcal{L} \right\}$.
Relying on Theorem \ref{th:AvgRate}, we get the following results.
\subsubsection{Imperfect Case}
the \emph{average transmission rate} for an active user $k \in \mathcal{L}$ can be given by 
\begin{align}
 (1-L\theta)d  R e^{-\frac{ \sigma^2 \tau }{\alpha}}  \left(  \left(   \frac{ d \tau }{  2^{\frac{B}{Q}} }+1  \right)^{-Q}  \,_2F_1(\breve{b},Q;\breve{a}+\breve{b};\frac{1}{ \frac{ 2^{\frac{B}{Q}} }{  d \tau } +1     }) \right)^{ L-1},  \label{eq:avr_s}  
\end{align} 
where $_2F_1$ is the hypergeometric function, $\breve{a}=\frac{(Q+1)d}{Q}-\frac{1}{Q}$ and $\breve{b}= (Q-1)\breve{a}$. It can be noticed that this average rate is independent of the identity of active user $k$ and the $L-1$ other active pairs, yet depends on the cardinality $L$ of subset $\mathcal{L}$. By denoting this rate as $r(L)$, the expression in $\eqref{eq:avr_s}$ can be re-written as 
\begin{align}
\label{eq:rsimple}
r(L) =  (1-L\theta) d R  e^{-\frac{ \sigma^2 \tau }{\alpha}}    F^{L-1},  
 	\end{align} 
in which $F=  \left(   \frac{ d \tau }{  2^{\frac{B}{Q}} }+1  \right)^{-Q}  \,_2F_1(\breve{b},Q;\breve{a}+\breve{b};\frac{1}{ \frac{ 2^{\frac{B}{Q}} }{  d \tau } +1     })$. 
Consequently, the \emph{total average transmission rate} of the system is given by
\begin{align}
 r_{\text{T}}(L) = L (1-L\theta) d R  e^{-\frac{ \sigma^2 \tau }{\alpha}}     F^{L-1}.   
\end{align}  
Studying the variation of these rate functions w.r.t. the number of active pairs $L$ is essential for the stability analysis and is thus described by the following lemma.
\begin{lemma}
Given a number of users to be scheduled, $L$,
the average transmission rate is a decreasing function with $L$, whereas the total average transmission rate is increasing from $0$ to $L_0$ and decreasing from $L_0$ to $\frac{1}{\theta}$, meaning that $r_\text{T}$ reaches its maximum at $L_0$, where $L_0 <\frac{1}{2\theta}$ and is given by
\begin{align}
 L_0 = \frac{ \frac{1}{\theta} - \frac{2}{\log F} - \sqrt{ \left(   \frac{2}{\log F}  - \frac{1}{\theta}  \right)^2 + \frac{4}{ \theta \log F}   }  }{2}.        \label{eq:l1}
\end{align}
\label{le:ratevariation}
\end{lemma}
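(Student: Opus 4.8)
The plan is to treat $L$ as a continuous variable on $(0,1/\theta)$ and reduce everything to the sign analysis of a derivative. First I would fix notation: write $C:=dR\,e^{-\sigma^{2}\tau/\alpha}>0$ and $c:=\log F$, so that $r(L)=C(1-L\theta)e^{c(L-1)}$ and $r_{\mathrm T}(L)=C\,L(1-L\theta)e^{c(L-1)}$. The one preliminary fact I need is that $0<F<1$, hence $c<0$: comparing the product in Theorem~\ref{th:AvgRate} with the single-factor MGF computed in its proof, one has $F=\mathit{MGF}_{P_i}(-d\tau)=\mathbb{E}\bigl[e^{-d\tau\,X_iY_i}\bigr]$, i.e. the Laplace transform of the strictly positive, non-degenerate variable $X_iY_i$ (product of a Gamma and a Beta) evaluated at the positive point $d\tau$, which therefore lies strictly between $0$ and $1$.

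For the first claim I would just differentiate: $r'(L)=C e^{c(L-1)}\bigl[-\theta+c(1-L\theta)\bigr]$. On $(0,1/\theta)$ both $-\theta$ and $c(1-L\theta)$ are negative (using $c<0$ and $1-L\theta>0$), so $r'(L)<0$ throughout and $r$ is strictly decreasing in $L$.

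For the total rate, $r_{\mathrm T}=L\,r$, and a short computation gives $r_{\mathrm T}'(L)=C e^{c(L-1)}\,g(L)$ with $g(L)=-c\theta L^{2}+(c-2\theta)L+1$. Since $-c\theta>0$ this is an upward parabola; $g(0)=1>0$, its discriminant equals $c^{2}+4\theta^{2}>0$, the product of its roots is $1/(-c\theta)>0$ and their sum is $(2\theta-c)/(-c\theta)>0$, so $g$ has two distinct positive roots $L_{-}<L_{+}$. I would then check by direct algebra that the smaller root $L_{-}$ is exactly the claimed $L_{0}$; the manipulation worth flagging is that the quantity under the square root in \eqref{eq:l1}, namely $\bigl(\tfrac{2}{\log F}-\tfrac{1}{\theta}\bigr)^{2}+\tfrac{4}{\theta\log F}$, collapses to $\tfrac{4}{(\log F)^{2}}+\tfrac{1}{\theta^{2}}$, after which the two expressions coincide (being careful with signs, since $\log F<0$). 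Next I would show the other root satisfies $L_{+}>1/\theta$: indeed $L_{+}=\tfrac{1}{2\theta}-\tfrac1c+\tfrac12\sqrt{\tfrac{4}{c^{2}}+\tfrac{1}{\theta^{2}}}\ge \tfrac{1}{2\theta}-\tfrac1c+\tfrac{1}{2\theta}=\tfrac1\theta-\tfrac1c>\tfrac1\theta$. Consequently $g>0$ on $(0,L_{0})$ and $g<0$ on $(L_{0},1/\theta)\subseteq(L_{-},L_{+})$, so $r_{\mathrm T}$ increases up to $L_{0}$ and decreases afterwards (note $r_{\mathrm T}(0)=r_{\mathrm T}(1/\theta)=0$ is consistent), attaining its maximum at $L_{0}$. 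Finally $L_{0}<\tfrac{1}{2\theta}$ follows from $L_{0}=\tfrac{1}{2\theta}-\tfrac1c-\tfrac12\sqrt{\tfrac{4}{c^{2}}+\tfrac{1}{\theta^{2}}}$ together with $\tfrac{1}{|c|}<\tfrac12\sqrt{\tfrac{4}{c^{2}}+\tfrac{1}{\theta^{2}}}$, which upon squaring is just $\tfrac{1}{c^{2}}<\tfrac{1}{c^{2}}+\tfrac{1}{4\theta^{2}}$.

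The main obstacle is not conceptual but bookkeeping: pinning down $0<F<1$ cleanly from the representation in Theorem~\ref{th:AvgRate}, and then carrying out the sign-sensitive algebra that identifies the smaller root of $g$ with the closed form \eqref{eq:l1} and that places the larger root beyond $1/\theta$. Everything else is a routine sign study of a quadratic multiplied by a positive exponential.
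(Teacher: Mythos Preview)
Your proof is correct and follows essentially the same route as the paper: differentiate $r$ and $r_{\mathrm T}$, reduce to the sign of the quadratic $-c\theta L^{2}+(c-2\theta)L+1$, locate its two positive roots, and verify $L_{0}<\tfrac{1}{2\theta}$ while the other root exceeds $\tfrac{1}{\theta}$. Your explicit justification that $0<F<1$ via the MGF interpretation and your use of Vieta's relations are tidier than the paper's more ad hoc sign checks, but the underlying argument is the same.
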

\begin{proof}
The proof is provided in Appendix \ref{app:ratevariation}.
\end{proof} 
From $\eqref{eq:l1}$ we can notice that $L_0$ is in general a real value. But, since it represents a number of users, we need to find the best and nearest integer to $L_0$, i.e. best in terms of maximizing the total average rate function. We denote this integer by $L_\text{I}$ and we assume without lost of generality that $L_\text{I}\le N$.  We propose the following simple procedure to compute $L_\text{I}$:
\begin{enumerate}[(a)]
\item Let $L_{01}=\left \lfloor{L_0}\right \rfloor$ and $L_{02}=\left \lceil{L_0}\right \rceil$ , i.e. the largest previous and the smallest following integer of $L_0$, respectively.
\item If $ r_\text{T}(L_{02}) \ge r_\text{T}(L_{11})$, put $L_\text{I} = L_{02}$; otherwise $L_\text{I} = L_{01}$.
\end{enumerate}

\subsubsection{Perfect Case}
In this case, no residual interference exists and the corresponding SNR expression is given in $\eqref{eq:SINR_s}$. Using Theorem \ref{th:AvgRate}, the \emph{average} and \emph{total average transmission rate} expressions can be given, respectively, by 
\begin{align}
 \mu(L) = (1-L\theta) d R e^{-\frac{ \sigma^2 \tau }{\alpha}}, \label{eq:pavr} 
\end{align} 
\begin{align}
 \mu_{\text{T}}(L)=  L(1-L\theta) d R e^{-\frac{ \sigma^2 \tau }{\alpha}}.  \label{eq:ptavr} 
\end{align} 
A similar observation to that given in the first case can be made here, that is the rate functions depend only on the cardinality $L$ of $\mathcal{L}$ and not on the subset itself.
Notice that $\mu(L)$ is a decreasing function with $L$, while $\mu_{\text{T}}(L)$ is concave at $\frac{1}{2\theta}$. Since $\frac{1}{2\theta}$ represents a number of pairs, we can use the procedure  proposed for the imperfect case to find the best and nearest integer to $\frac{1}{2\theta}$. For the remainder of this paper, we denote this integer by $L_\text{P}$ .

\subsection{Stability Analysis}

After presenting results on the average rate functions, we now provide a precise characterization of the stability region of the adopted system under both the imperfect and perfect cases.

\subsubsection{Imperfect Case} 
We first define the subset $S_L$ such as $S_L=\left\{\mathbf{s} \in \mathcal{S} : \| \mathbf{s} \| _1 =L \right\}$, where we recall that $\mathbf{s} \in \mathbb{Z}^{N}$ is the vector whose coordinates take values 0 or 1 (see Section \ref{sec:systemmodel}). The corresponding subset of average rate vectors is defined as $I_L=\left\{  r(L) \mathbf{s} :  \mathbf{s} \in S_L  \right\}$. For these subsets, we define the set $\mathcal{I}$ and its complementary set $\bar{\mathcal{I}}$ as $\mathcal{I}= \left\{ I_1, I_2,...,I_{L_\text{I}} \right\}$ and $\bar{\mathcal{I}}=\left\{ I_{L_\text{I}+1},...,I_{N} \right\}$. Notice that in terms of cardinality we have $|\mathcal{I}| +  |\bar{\mathcal{I}}| = |\mathcal{S}|$. Using these definitions, we can state the following lemma, which will be useful to characterize the stability region of the system.
\begin{lemma}
Each point in the set $\bar{\mathcal{I}}$ is inside the convex hull of $\mathcal{I}$. Consequently, this hull will also contain any point in the convex hull of $\bar{\mathcal{I}}$. 	\label{le:Rl}
\end{lemma}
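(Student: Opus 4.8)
The plan is to show that for each $L>L_\text{I}$, the point $r(L)\mathbf{s}$ (for any $\mathbf{s}\in S_L$) lies in the convex hull of the points in $\mathcal{I}$, and in fact it suffices to dominate it (coordinatewise) by a convex combination of points in $I_{L-1}$, since all rate vectors have nonnegative coordinates and the convex hull of $\mathcal{I}$ is downward-closed in the relevant sense. More precisely, I would first record the key monotonicity facts from Lemma \ref{le:ratevariation}: the per-user rate $r(L)$ is strictly decreasing in $L$, while the total rate $r_\text{T}(L)=L\,r(L)$ is increasing on $\{1,\dots,L_\text{I}\}$ and decreasing on $\{L_\text{I},\dots\}$. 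The decreasing part is what matters here: for $L>L_\text{I}$ we have $L\,r(L)\le (L-1)\,r(L-1)$, i.e. $r_\text{T}(L)\le r_\text{T}(L-1)$.

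The core step is a simple averaging/symmetrization argument. Fix $L>L_\text{I}$ and $\mathbf{s}\in S_L$ with support $\mathcal{L}$, $|\mathcal{L}|=L$. Consider the $L$ vectors obtained from $\mathbf{s}$ by deleting one of the $L$ active coordinates at a time; each such vector lies in $S_{L-1}$, so scaling it by $r(L-1)$ gives a point of $I_{L-1}\subseteq\mathcal{I}$. Averaging these $L$ points (each with weight $1/L$) yields the vector whose $k$-th coordinate is $r(L-1)\cdot\frac{L-1}{L}$ for each $k\in\mathcal{L}$ and $0$ otherwise, that is, $\frac{(L-1)r(L-1)}{L}\,\mathbf{s} = \frac{r_\text{T}(L-1)}{r_\text{T}(L)}\,r(L)\,\mathbf{s}$. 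Since $r_\text{T}(L-1)\ge r_\text{T}(L)$, the scalar $\lambda \coloneqq r_\text{T}(L)/r_\text{T}(L-1)\in(0,1]$, and we have exhibited $r(L)\mathbf{s} = \lambda\big(\tfrac1L\sum_{\ell} r(L-1)\mathbf{s}_\ell\big)$. To get it exactly inside the convex hull of $\mathcal{I}$ I then take the convex combination of this averaged point with the origin — but the origin is not in $\mathcal{I}$, so instead I would write $r(L)\mathbf{s}=\lambda\,\mathbf{p}+(1-\lambda)\mathbf{0}$ and note that $\mathbf{0}$ is dominated by (indeed lies in the convex hull of, or can be replaced by scaling toward any) point of $I_1$; alternatively, and more cleanly, since $r(1)>r(L-1)>0$, the segment from $\mathbf{p}$ toward the appropriate vertices of $I_1$ scaled down stays in the hull, and any point $\mathbf{0}\preceq\mathbf{x}\preceq\mathbf{p}$ with $\mathbf{p}$ in the hull and the hull being a polytope containing $\mathbf{0}$-adjacent faces is itself in the hull. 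The last sentence of the lemma is then immediate: the convex hull of $\mathcal{I}$ is a convex set containing every point of $\bar{\mathcal{I}}$, hence it contains the convex hull of $\bar{\mathcal{I}}$.

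The one point that needs care — the main obstacle — is the reduction "dominated by a point of the hull $\Rightarrow$ in the hull," because the convex hull of $\mathcal{I}$ as defined does not literally contain the origin (all its generating points have $\ell_1$-norm at least $r(L_\text{I})>0$ on their support patterns, though patterns of every size $1\le L\le L_\text{I}$ appear). I would handle this by observing that $I_1$ contains all $N$ unit-type vectors $r(1)\mathbf{e}_k$; hence for any target coordinate vector with entries in $[0,r(1)]$, one can write it as a convex combination of the $r(1)\mathbf{e}_k$'s and the origin, and the origin in turn is handled by noting we only ever need points with strictly positive total mass, so a single extra averaging against a fixed interior point suffices, or — cleanest — by proving the slightly stronger statement that the hull of $\mathcal{I}$ is "coordinate-convex" (contains $[\mathbf{0},\mathbf{p}]$ for each of its points $\mathbf{p}$), which follows because $\mathbf{0}$ can be approached as a degenerate convex combination only in the closure; to stay rigorous I would instead just exhibit, for $r(L)\mathbf{s}$, an explicit convex combination of points of $I_{L-1}$ and $I_1$ whose weights sum to $1$, using that $(L-1)r(L-1) - L\,r(L)\ge 0$ to allocate the residual mass onto suitable $r(1)\mathbf{e}_k$ vectors with $k\in\mathcal{L}$. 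Everything else is bookkeeping with the monotonicity of $r_\text{T}$ established in Lemma \ref{le:ratevariation}.
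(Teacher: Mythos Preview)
Your approach is essentially the same as the paper's. The ``leave-one-out averaging'' you describe is exactly the content of the paper's auxiliary Lemma~\ref{le:Sl}: for any $\mathbf{s}\in S_{L+1}$, averaging its $L{+}1$ sub-vectors in $S_L$ yields $\frac{L}{L+1}\mathbf{s}$, so $\mathbf{s}=\frac{L+1}{L}\times(\text{a point on the convex hull of }S_L)$. The paper then multiplies by $r(L{+}1)$ and uses $(L{+}1)r(L{+}1)<L\,r(L)$ for $L\ge L_\text{I}$ (Lemma~\ref{le:ratevariation}) to conclude that the target point lies strictly below a point of the convex hull of $I_L$ along the same ray from the origin, and iterates down to $I_{L_\text{I}}$.

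The one place where you are more careful than the paper is the origin issue. You are right that without the origin the convex hull of $\{I_1,\dots,I_{L_\text{I}}\}$ need not be downward-closed along rays. The paper sidesteps this by (implicitly) including the empty scheduling decision: in the worked example it writes $\mathcal{I}=\{(0,0);(r(1),0);(0,r(1))\}$, and the stability region $\Lambda_\text{I}$ is the \emph{closed} convex hull, which in this queueing context always contains $\mathbf{0}$. With $\mathbf{0}\in\mathcal{CH}\{\mathcal{I}\}$, the step ``$r(L)\mathbf{s}=\lambda\mathbf{p}+(1-\lambda)\mathbf{0}$ with $\lambda\in(0,1]$ and $\mathbf{p}\in\mathcal{CH}\{I_{L-1}\}$'' is already a genuine convex combination, and your attempted detour through $I_1$ vertices and coordinate-convexity becomes unnecessary. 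So your proof plan is correct; you just need to invoke that $\mathbf{0}$ (the null schedule) is a vertex of the region, which the paper takes for granted.
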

\begin{proof}

We first give and prove the following lemma that will help us in the proof of Lemma \ref{le:Rl}.
\begin{lemma}
 \label{pro:SL}
For any point $\mathbf{s}_{i,L+1}  \in  S_{L+1}$, there exists a point on the convex hull of $S_{L}$ that is in the same direction toward the origin as $\mathbf{s}_{i,L+1}$. Furthermore,
$\mathbf{s}_{i,L+1}$ can be written as $\frac{L+1}{L} \times$ its corresponding point on the convex hull of $S_{L}$. \label{le:Sl}
\end{lemma} 
\begin{proof}
We start the proof by first defining $\mathcal{E}_{i,L}$ as the set containing the points (vectors) that only have $L$ '1' (the other coordinate values are '0') and where the positions (indexes) of these '1' are the same as those of $L$ '1' coordinates of $\mathbf{s}_{i,L+1}$.
Note that the points in $\mathcal{E}_{i,L}$ are all different from each other.
The cardinality of $\mathcal{E}_{i,L}$, which is denoted by $\left| \mathcal{E}_{i,L} \right|$, is nothing but the result of the combination of $L+1$ elements taken $L$ at a time without repetition, and it can be computed as the following
\begin{align}
\left| \mathcal{E}_{i,L} \right|=\dbinom{L+1}{L} =\frac{(L+1)!}{L!(L+1-L)!}=L+1.
\end{align}
Thus, we have $L+1$ elements from $S_L$ that if we take them in a specific convex combination, we get a point on the same line (from the origin) as that of $\mathbf{s}_{i,L+1}$. This can be represented by
 \begin{align}
\sum\limits_{j \in \mathcal{E}_{i,L} } \delta_{j} \mathbf{s}_{j,L} \equiv \mathbf{s}_{i,L+1}, \label{eq:otsl} 
 \end{align}
where $\equiv$ is a notation used to represent the fact that these two points are on the same line from the origin, and where $\sum_{j \in \mathcal{E}_{i,L} } \delta_{j}=1$ and $\delta_{j}\ge 0$.
Let us suppose that all the coefficients $\delta_j = \frac{1}{L+1}$. This assumption satisfies the above constraints, namely $\sum_{j \in \mathcal{E}_{i,L} } \delta_{j}=1$ and $\delta_{j} \ge 0$. By replacing these coefficients in the term at the left-hand-side of $\eqref{eq:otsl}$, we get 
\begin{align}
 \sum\limits_{j \in \mathcal{E}_{i,L} } \delta_{j} \mathbf{s}_{j,L} = \frac{1}{L+1} \sum\limits_{j \in \mathcal{E}_{i,L} } \mathbf{s}_{j,L}= \frac{L}{L+1} \mathbf{s}_{i,L+1},
\end{align}
where the second equality holds since we have $L+1$ elements to sum (due to the fact that $\left| \mathcal{E}_{i,L} \right|=L+1$), each of which contains $L$ '1' at the same positions as $L$ '1' coordinates of $\mathbf{s}_{i,L+1}$, and (these elements) differ from each other in the position of one '1' (and consequently of one '0'); for instance, suppose that $N=5$, $L=2$ and $\mathbf{s}_{i,L+1}=(1,1,1,0,0)$, then the points $\mathbf{s}_{j,L}$ are given by the subset $\mathcal{E}_{i,L} = \left\{ (1,1,0,0,0);(1,0,1,0,0);(0,1,1,0,0) \right\}$. The sum corresponding to each coordinate is then equal to $L$. To complete the proof, it remains to show that $\frac{1}{L+1} \sum_{j \in \mathcal{E}_{i,L}} \mathbf{s}_{j,L}$ is on the convex hull of $S_{L}$. To this end, note that all the points in $S_{L}$ are on the same hyperplane (in $\mathbb{R}^N_+$), which is described by the equation $\sum_{k=1}^N \nu_k - L=0$; $\nu_k$ represents the $k$-th coordinate. Hence, a point on the convex hull of $S_{L}$ is also on this hyperplane. If we compute $\sum_{i=k}^N \nu_k$ for point $\frac{1}{L+1} \sum_{j \in \mathcal{E}_{i,L} } \mathbf{s}_{j,L}$, it yields $\frac{(L+1)L}{L+1} = L$ due to the definition  of $\mathcal{E}_{i,L}$, thus this point is on the defined hyperplane and consequently on the convex hull of $S_{L}$.

In order to better understand the result of this lemma, we provide a simple example for which the geometric illustration is in Figure \ref{fig:SnvsSn+1}. In this example, we take $N=2$, $S_1=\{ (1,0);(0,1) \}$ and $S_2=\{ (1,1) \}$. In addition, we define points $P_2=(1,1)$ and $P_1=(\frac{1}{2},\frac{1}{2})$. Note that $P_2 \in S_2$ and $P_1$ is on the convex hull of $S_1$. We can express $P_2$ as $ P_2= \frac{2}{1}[\frac{1}{2}(1,0)+\frac{1}{2}(0,1)]=2(\frac{1}{2},\frac{1}{2})=\frac{2}{1}P_1$. Thus, $P_2$ equals $\frac{2}{1} \times$ its corresponding point ($P_1$) on the convex hull of $S_1$.
      
 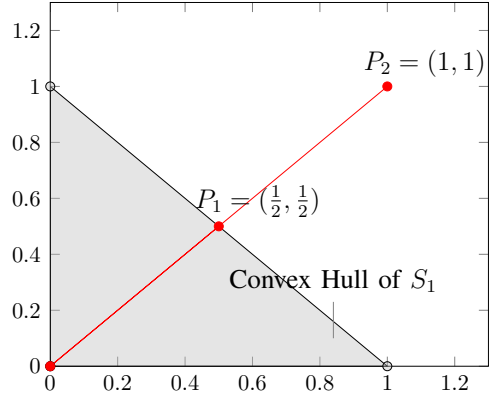
\begin{figure}[ht!]
 \centering
 \begin{tikzpicture}[scale=0.85]
  	\begin{axis}
  		[
  	    ymin=0,ymax=1.3,
  		xmin=0,xmax=1.3,
        ]  			
  	
  	\addplot[fill=lightgray,fill opacity=0.4,mark=*] coordinates {
     	(0,0)
  		(0,1)
  		(1,0)
  		(0,0)	
  	};
  	\addplot[color=red,mark=*] coordinates {
  	    (1,1)
  	  	(0,0)
  	  	(0.5,0.5)	
  	};	
  	 
  	 \node at (axis cs:1.32, 1) [anchor=south east] {\normalsize  $P_2=(1,1)$};
  	 \node at (axis cs:0.83, 0.5) [anchor=south east] {\normalsize  $P_1=(\frac{1}{2},\frac{1}{2})$};
  	 \node[coordinate,pin=above:{Convex Hull of $S_1$ }] 
  	 		at (axis cs:0.84,0.1) {};
  	\end{axis}
 \end{tikzpicture}
   \captionsetup{font=small}
   \caption{Example that illustrates the result of Lemma \ref{le:Sl}.}
  \label{fig:SnvsSn+1}
 \end{figure}  
This completes the proof of Lemma \ref{le:Sl}.
\end{proof}

Now, using the above lemma, a point $\mathbf{s}_{i,L+1}$ in $S_{L+1}$ can be expressed in function of $L+1$ specific points in $S_L$ as $\mathbf{s}_{i,L+1}= \frac{L+1}{L}\sum_{j \in \mathcal{E}_{i,L}} \delta_{j} \mathbf{s}_{j,L}$, which implies that
\begin{align} 
r(L+1) \mathbf{s}_{i,L+1}= r(L+1) \frac{L+1}{L}\sum_{j \in \mathcal{E}_{i,L}} \delta_{j} \mathbf{s}_{j,L},
\end{align}
where the definition of $\mathcal{E}_{i,L}$ can be found in the proof of Lemma \ref{le:Sl}.
By Lemma $\ref{le:ratevariation}$, we have $(L+1) r(L+1) < L r(L) $ for $L \ge L_\text{I}$. We thus get
\begin{align}
r(L+1) \frac{L+1}{L}\sum_{j \in \mathcal{E}_{i,L}} \delta_{j} \mathbf{s}_{j,L} < r(L) \frac{L}{L}\sum_{j \in \mathcal{E}_{i,L}} \delta_{j} \mathbf{s}_{j,L}=  r(L) \sum_{j \in \mathcal{E}_{i,L}} \delta_{j} \mathbf{s}_{j,L}. \label{eq:samline}
\end{align}
Note that the inequality operator in $\eqref{eq:samline}$ can be used since the two compared points are on the same line (from the origin).
Therefore, each point in $I_{L+1}$ is in the convex hull of $I_{L}$, for $L \ge L_\text{I}$, since $r(L+1) \mathbf{s}_{i,L+1} \in I_{L+1}$ and $r(L) \sum_{j \in \mathcal{E}_{i,L}} \delta_{j} \mathbf{s}_{j,L}$ is in the convex hull of $I_{L}$. Consequently, all the points in $I_{L+1}$ for $L \ge L_\text{I}$ (i.e. these points form $\bar{ \mathcal{I} }$) are in the convex hull of $I_{L_\text{I}}$, which is a subset of $\mathcal{I}$. Therefore, the desired result holds.

In the following, we illustrate the result of this lemma for the case where $N=2$ and $L_\text{I}=1$. For this example, we have $\mathcal{I}=\{(0,0);(r(1),0);(0,r(1))  \}$ and $\bar{\mathcal{I}}=\{ (r(2) ,r(2)) \}$. In addition, using Lemma $\ref{le:ratevariation}$, we can write $2 r(2)< r(1)$. 
From Figure \ref{fig:Rl}, we can easily notice that $\bar{\mathcal{I}}$ is in the convex hull of $\mathcal{I}$.
\begin{figure}[ht!]
  \centering
\begin{tikzpicture}[scale=0.85]
    	\begin{axis}
    		[ticks=none,
    	    ymin=0,ymax=1.3,
    		xmin=0,xmax=1.3,
    		xlabel={Mean Arrival Rate $a_1$},
    		xlabel near ticks,  
    		xlabel shift=10pt,  
    		ylabel={Mean Arrival Rate $a_2$},
    		ylabel near ticks,
    		ylabel shift=10pt]
    		  
    	\addplot[fill=lightgray,fill opacity=0.4,mark=*] coordinates {
       	(0,0)
    		(0,1)
    		(1,0)
    		(0,0)	
    	};
    	\addplot[color=red,mark=*] coordinates {
    	    (0.3,0.3)
    	  	(0,0)
    	  	(0.5,0.5)	
    	};		
    	\addplot[color=red,mark=x] coordinates {
    	     (1,1)
    	     (0,0)	
    	};		
    	\node at (axis cs:0.95, 0) [anchor=south west] {\normalsize  $(r(1),0)$};
    	\node at (axis cs:0, 1) [anchor=south west] {\normalsize  $(0,r(1))$};
    	\node at (axis cs:0.41, 0.31) [anchor=south east] {\normalsize  $(r(2),r(2))$};
    	\node at (axis cs:0.45, 0.5) [anchor=south west] {\normalsize  $(\frac{r(1)}{2},\frac{r(1)}{2})$};
    	 \node[coordinate,pin=above:{Convex Hull of $\mathcal{I}$ }] 
    	   	 		at (axis cs:0.84,0.1) {};
    	 
    	\end{axis}
\end{tikzpicture}
  \captionsetup{font=small}
  \caption{Example that shows the result of Lemma \ref{le:Rl}.}
  \label{fig:Rl}
\end{figure}
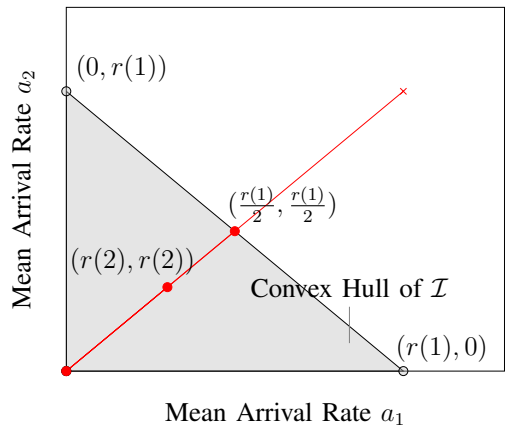   
  
This completes the proof.
\end{proof}

Now, we have all the required materials to characterize the stability region of the considered system.
We recall that the stability region is the set of all mean arrival rate vectors for which the system is strongly stable. Here, this region is given by the following theorem.
\begin{theorem}	
\label{th:stabregionSI}
The stability region of the system in the symmetric case with  limited backhaul can be characterized as
\begin{align}
\Lambda_{\text{I}}=\mathcal{CH} \left\{ I_1,I_2,...,I_{L_\text{I}} \right\} =\mathcal{CH} \left\{\mathcal{I}\right\}, 
\end{align}
where $\mathcal{CH}$ represents the closed convex hull.
\end{theorem}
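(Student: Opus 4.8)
The plan is to obtain $\Lambda_{\text{I}}$ in two stages: first invoke the standard characterization of the stability region of a queueing system whose controller selects, every slot, a service configuration from a finite menu, and then collapse the resulting convex hull down to the hull of $\mathcal{I}$ using Lemma \ref{le:Rl}.

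For the first stage, recall that in our setting a scheduling decision $\mathbf{s}\in\mathcal{S}$ yields the average service-rate vector $\mathbf{r}(\mathbf{s})$ whose $k$-th entry is $\mathbb{E}\{B_k\mid\mathcal{L}\}$ when $s_k=1$ and $0$ otherwise. By the general results on max-weight / backpressure scheduling (e.g. \cite{Tassiulas1992Stability,Neely2003Power,Georgiadis06resourceallocation,Lau2012StabilityDelay}), the set of stabilizable mean arrival vectors is precisely the convex hull $\mathcal{CH}\{\mathbf{r}(\mathbf{s}):\mathbf{s}\in\mathcal{S}\}$, understood — as stated in Section \ref{sec:systemmodel} — together with all vectors it dominates and up to boundary effects. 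Achievability of the interior is already delivered by Lemma \ref{le:OptSchedPolicy}, so the substantive half is the converse: if $\mathbf{a}$ lies outside this hull, then under any policy the long-run average number of bits served from some queue is strictly below $a_k$, because the empirical time-average of the service vectors over a long horizon is itself a convex combination of the $\mathbf{r}(\mathbf{s})$ — a strong-law argument is required here, since even given $\mathbf{s}$ the per-slot served bits remain random through the SINRs — and therefore cannot exceed $\mathbf{a}$ coordinate-wise; that queue then drifts to infinity, contradicting strong stability.

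For the second stage, specialize to the symmetric case. By Theorem \ref{th:AvgRate} (equation \eqref{eq:rsimple}) the average rate depends on $\mathbf{s}$ only through $\|\mathbf{s}\|_1$, so $\mathbf{r}(\mathbf{s})=r(\|\mathbf{s}\|_1)\,\mathbf{s}$, and hence
\begin{align}
\{\mathbf{r}(\mathbf{s}):\mathbf{s}\in\mathcal{S}\} \;=\; \bigcup_{L=0}^{N} I_L \;=\; \mathcal{I}\cup\bar{\mathcal{I}},
\end{align}
where $I_0=\{\mathbf{0}\}$ is the configuration in which no pair is scheduled. Consequently $\Lambda_{\text{I}}=\mathcal{CH}\{\mathcal{I}\cup\bar{\mathcal{I}}\}$. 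Now apply Lemma \ref{le:Rl}: every point of $\bar{\mathcal{I}}$ already lies in $\mathcal{CH}\{\mathcal{I}\}$, and adjoining to a set points contained in its own convex hull does not change that hull. Therefore $\mathcal{CH}\{\mathcal{I}\cup\bar{\mathcal{I}}\}=\mathcal{CH}\{\mathcal{I}\}=\mathcal{CH}\{I_1,\dots,I_{L_\text{I}}\}$, which is the asserted region.

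The symmetry reduction and the set-theoretic collapse are routine; the delicate point, and where the actual work sits, is the converse of the general characterization — ruling out that any (possibly queue-length-dependent, non-stationary) policy can sustain an arrival vector outside $\mathcal{CH}\{\mathbf{r}(\mathbf{s})\}$ despite the per-slot service being random given the schedule — together with pinning down the exact relation between the closed convex hull in the statement and the ``interior'' convention so that boundary points (only mean-rate stable) are correctly accounted for. In a write-up tailored to this paper one would most likely discharge this converse by citing the standard stability-region results, leaving Lemma \ref{le:Rl} as the genuinely new ingredient.
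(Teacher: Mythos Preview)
Your proposal is correct and follows essentially the same route as the paper: establish the general characterization $\Lambda_{\text{I}}=\mathcal{CH}\{\mathbf{r}(\mathbf{s}):\mathbf{s}\in\mathcal{S}\}=\mathcal{CH}\{\mathcal{I}\cup\bar{\mathcal{I}}\}$ via achievability plus a converse, then collapse it to $\mathcal{CH}\{\mathcal{I}\}$ using Lemma \ref{le:Rl}. The only cosmetic difference is that the paper spells out the converse explicitly via the stationary distribution $\pi(\mathbf{q})$ of the (assumed stable, hence positive-recurrent) queue-length Markov chain --- writing $\mathbf{r}_\text{s}=\sum_{\mathcal{L}}p(\mathcal{L})\,\mathbf{r}(\mathcal{L})$ with $p(\mathcal{L})=\sum_{\mathbf{q}:\mathcal{L}(\mathbf{q})=\mathcal{L}}\pi(\mathbf{q})$ --- rather than citing the standard results you invoke; your identification of Lemma \ref{le:Rl} as the genuinely new ingredient is exactly right.
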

\begin{proof}
First we prove that this region is achievable. Indeed,
a point $\mathbf{r}_\text{I}$ in $\Lambda_\text{I}$ can be written as the convex combination of the points in $\mathcal{I}$ as $\mathbf{r}_\text{I} = \sum_{i=1}^{|\mathcal{I}|} p_i \mathbf{r}_i$, where $\mathbf{r}_i$ represents a point in $\mathcal{I}$, $p_i \ge 0$ and $\sum_{i=1}^{|\mathcal{I}|} p_i=1$. Note that each point $\mathbf{r}_i$ represents a different scheduled subset of pairs. To achieve $\mathbf{r}_\text{I}$ it suffices to use a randomized policy that at the beginning of each timeslot selects (decision) $\mathbf{r}_i$ with  probability $p_i$. Since $\mathbf{r}_\text{I}$ is an arbitrary point in  $\Lambda_\text{I}$, we can claim that this region is achievable.
 
We then have to prove the converse, that is if there exists a centralized policy that stabilizes the system for a mean arrival rate vector $\mathbf{a}$, then $\mathbf{a} \in  \Lambda_\text{I}$. To this end, assume the system is stable for a mean arrival rate vector $\mathbf{a}$. As explained earlier, the scheduling decision (i.e. subset $\mathcal{L}$) under the centralized policy depends on the queues only, so we show this dependency by $\mathcal{L}(\mathbf{q})$. 
Let us denote by $\mathbf{r}_\text{s}$ the mean service rate vector, which can be given by 
$\lim\limits_{T \to \infty} \frac{1}{T} \sum_{t=0}^{T-1} \mathbf{B}(t)$, where $\mathbf{B}(t)$ is the vector for which the $k$-th component is given by $B_k(t)$. In addition, we denote by
$\mathbf{r}(\mathcal{L}(\mathbf{q}))$ the average rate if the queue state is $\mathbf{q}$ and the selected subset of pairs is $\mathcal{L}(\mathbf{q})$.
It is obvious that the set of all possible values of $\mathbf{r}(\mathcal{L}(\mathbf{q}))$ is nothing but $\mathcal{I} \cup \bar{\mathcal{I}}$.
Under the adopted model, the system can be described as a Markov chain, and since it is stable, it has a stationary distribution, which we denote by $\pi(\mathbf{q})$.
The mean service rate vector can then be expressed as the following
\begin{align}
\mathbf{r}_\text{s}= \sum\limits_{\mathbf{q} \in \mathbb{Z}_+^{N}} \pi(\mathbf{q}) \mathbf{r} (\mathcal{L}(\mathbf{q}))  = \sum\limits_{ \mathcal{L} \in \bm{\mathcal{L}} } \mathbf{r}(\mathcal{L}) \sum\limits_{\mathbf{q} \in \mathbb{Z}_+^{N}: \mathcal{L}(\mathbf{q})=\mathcal{L}  }  \pi(\mathbf{q})  > \mathbf{a},
\end{align}
where the operator $>$ is component-wise.
By setting $p(\mathcal{L})= \sum\limits_{\mathbf{q} \in \mathbb{Z}_+^{N}:\mathcal{L}(\mathbf{q})=\mathcal{L}  }  \pi(\mathbf{q})$ and  noticing that the set of all possible values of $\mathbf{r}(\mathcal{L})$ is the same as $\mathbf{r}(\mathcal{L}(\mathbf{q}))$, that is $\mathcal{I} \cup \bar{\mathcal{I}}$ , the mean service rate can be re-written as 
\begin{align}
\mathbf{r}_\text{s} = \sum\limits_{j=1}^{\left| \mathcal{\mathcal{I} \cup \bar{\mathcal{I}}} \right|} p_j  \mathbf{r}_j,
\end{align}
in which $j$ is used to denote decision $ \mathcal{L}$, meaning that $p_j=p(\mathcal{L})$, and $\mathbf{r}_j$ represents a point in set $\mathcal{I} \cup \bar{\mathcal{I}}$, and where $\left| \mathcal{\mathcal{I} \cup \bar{\mathcal{I}}} \right|$ represents the cardinality of this set. Hence, we can state that $\mathbf{r}_\text{s}$ is in the convex hull of $\mathcal{I} \cup \bar{\mathcal{I}}$. But, since we have demonstrated that $\bar{\mathcal{I}}$ is in the convex hull of $\mathcal{I}$ (see Lemma \ref{le:Rl}), we have $\mathbf{r}_\text{s} \in \Lambda_\text{I}$ and consequently $\mathbf{a} \in  \Lambda_\text{I}$. This completes the proof.
\end{proof}

Unlike classical results in which the stability region is given by the convex hull over all possible decisions, here the characterization is more precise and is defined by the decision subsets $S_{L}$ for all $L\le L_\text{I}$. In addition, this theorem provides an exact specification of the corner points (vertices) of the stability region, meaning that this region is characterized by the set $\mathcal{I}$ and not by the whole space $\mathcal{I} \cup \bar{\mathcal{I}}$. An additional point to note is that $\Lambda_{\text{I}}$ is a convex polytope in the $N$-dimensional space $\mathbb{R}_+^N$.

In order to choose the active pairs at each timeslot, we use the Max-Weight scheduling policy defined earlier (see $\eqref{eq:MW}$). Under the symmetric and imperfect case, this policy becomes
\begin{align}
\Delta^{\text{*}}_{\text{I}}  : \mathbf{s}(t)= \operatorname*{arg\,max}_{ \mathbf{s} \in   \mathcal{S}  }  \left\{ r(  \left\| \mathbf{s} \right\|_1) \, \mathbf{s} \cdot \mathbf{q}(t)  \right\},
\end{align}
where $\left\| \mathbf{s} \right\|_1$ gives the number of '1' coordinates in $\mathbf{s}$ (or equivalently, the number of active pairs $L$). Recall that these non-zero coordinates indicate which pairs to schedule.
For the proposed policy, the following proposition holds.
\begin{proposition}
\label{pro:optimality}
 The scheduling policy $\Delta^{\text{*}}_{\text{I}}$ is throughput optimal. In other words, $\Delta^{\text{*}}_{\text{I}}$ stabilizes the system for every arrival rate vector $\mathbf{a} \in \Lambda_{\text{I}} $.
\end{proposition}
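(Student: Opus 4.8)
The plan is to run the standard quadratic Lyapunov drift argument; the only point that is not routine is reconciling the fact that $\Delta^{*}_{\text{I}}$ weights the \emph{average} rates while the actual per-slot service $B_k(t)$ is random. I would take $V(\mathbf{q})=\sum_{k=1}^{N}q_k^2$ and, starting from the queue recursion \eqref{eq:q_evo} together with the elementary bound $(\max\{q-b,0\}+a)^2\le q^2+a^2+b^2+2q(a-b)$ valid for all $a,b,q\ge 0$, bound the one-slot conditional drift by
\begin{align}
\mathbb{E}\!\left\{V(\mathbf{q}(t+1))-V(\mathbf{q}(t))\mid \mathbf{q}(t)\right\}\le C+2\sum_{k=1}^{N}q_k(t)\Big(a_k-\mathbb{E}\{B_k(t)\mid \mathbf{q}(t)\}\Big),
\end{align}
where $C=N(A_{\text{max}}^2+B_{\text{max}}^2)$ is finite because $A_k(t)<A_{\text{max}}$ and $B_k(t)<B_{\text{max}}$, and where I used that the arrivals are i.i.d.\ in time and independent of the queues, so $\mathbb{E}\{A_k(t)\mid\mathbf{q}(t)\}=a_k$.

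Next I would evaluate the service term. Under $\Delta^{*}_{\text{I}}$ the scheduling vector $\mathbf{s}(t)$ is a deterministic function of $\mathbf{q}(t)$, so conditioning on $\mathbf{q}(t)$ also fixes the scheduled subset $\mathcal{L}(t)$; the residual randomness in $B_k(t)$ comes only from the channel realizations, which are independent of the queues and of the choice of $\mathbf{s}(t)$. Hence by \eqref{eq:r_M_M1} and \eqref{eq:rsimple} we have $\mathbb{E}\{B_k(t)\mid\mathbf{q}(t)\}=r(\|\mathbf{s}(t)\|_1)\,s_k(t)$, i.e.\ exactly the $k$-th coordinate of $\mathbf{r}(\mathbf{s}(t))$, and therefore
\begin{align}
\sum_{k=1}^{N}q_k(t)\,\mathbb{E}\{B_k(t)\mid\mathbf{q}(t)\}=\mathbf{q}(t)\cdot\mathbf{r}(\mathbf{s}(t))=\max_{\mathbf{s}\in\mathcal{S}}\big\{\,r(\|\mathbf{s}\|_1)\,\mathbf{s}\cdot\mathbf{q}(t)\,\big\},
\end{align}
the last equality being the definition of $\Delta^{*}_{\text{I}}$. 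I expect this to be the main obstacle to state cleanly: it is precisely the point where one must justify that maximizing the \emph{average}-rate weighted sum is legitimate, which hinges on $\mathbf{s}(t)$ depending on the past only through $\mathbf{q}(t)$ and on the per-slot randomness of $\mathbf{B}(t)$ averaging coordinate-wise to $\mathbf{r}(\mathbf{s}(t))$.

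Finally I would bring in the geometry from Theorem \ref{th:stabregionSI}, $\Lambda_{\text{I}}=\mathcal{CH}\{\mathcal{I}\}$, together with Lemma \ref{le:Rl}, which guarantees that every point of $\mathcal{I}\cup\bar{\mathcal{I}}$ also lies in this hull, so nothing is lost by maximizing over all of $\mathcal{S}$. Fix $\mathbf{a}$ in the interior of $\Lambda_{\text{I}}$; then there is $\epsilon>0$ with $\mathbf{a}+\epsilon\mathbf{1}\in\Lambda_{\text{I}}$, hence $\mathbf{a}+\epsilon\mathbf{1}=\sum_j p_j\,\mathbf{r}(\mathbf{s}_j)$ for some $\mathbf{s}_j\in\mathcal{S}$, $p_j\ge 0$, $\sum_j p_j=1$. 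Consequently $\max_{\mathbf{s}\in\mathcal{S}}\{r(\|\mathbf{s}\|_1)\,\mathbf{s}\cdot\mathbf{q}(t)\}\ge\sum_j p_j\,\mathbf{q}(t)\cdot\mathbf{r}(\mathbf{s}_j)=\mathbf{q}(t)\cdot(\mathbf{a}+\epsilon\mathbf{1})$, which substituted into the drift bound yields
\begin{align}
\mathbb{E}\!\left\{V(\mathbf{q}(t+1))-V(\mathbf{q}(t))\mid \mathbf{q}(t)\right\}\le C-2\epsilon\sum_{k=1}^{N}q_k(t).
\end{align}
By Foster's theorem (in its drift form) this negative-drift-outside-a-finite-set estimate makes the Markov chain $\mathbf{q}(t)$ positive recurrent and gives $\limsup_{T\to\infty}\frac{1}{T}\sum_{t=0}^{T-1}\sum_{k}\mathbb{E}\{q_k(t)\}\le C/(2\epsilon)<\infty$, i.e.\ strong stability, for every $\mathbf{a}$ in the interior of $\Lambda_{\text{I}}$; since $\mathbf{a}$ was arbitrary in $\Lambda_{\text{I}}$ (up to the boundary caveat already noted in the paper), $\Delta^{*}_{\text{I}}$ is throughput optimal.
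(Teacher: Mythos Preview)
Your proposal is correct and follows essentially the same approach the paper intends: the paper simply refers back to Lemma~\ref{le:OptSchedPolicy}, whose proof invokes Foster's theorem via the standard quadratic Lyapunov drift argument and omits the details as routine. You have carefully filled in those details (and in fact the drift computation you give mirrors the one the paper later spells out in full in the proof of Theorem~\ref{th:BtoB'}).
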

\begin{proof}
The proof can be done in the same way as the proof of Lemma \ref{le:OptSchedPolicy}.
\end{proof}
Based on the analysis done at the end of Section \ref{sec:systemmodel}, it was shown that applying policy $\Delta^{\text{*}}$ will result in a computational complexity (CC) of $O(N2^N)$. The same holds here for policy $\Delta^{\text{*}}_{\text{I}}$. Consequently, a moderately large $N$ will lead to considerably high CC. But, we recall that this analysis corresponds to the classical implementation of the Max-Weight algorithm, that is finding the maximum over $2^N$ products of two vectors. However, in our case the implementation of this algorithm does not require all this complexity. This is due to the fact that all the active users have the same average transmission rate. This structural property allows us to propose an equivalent reduced CC implementation of $\Delta^{\text{*}}_{\text{I}}$, which we provide in Algorithm \ref{al:schedecision}.
\begin{algorithm}[ht!]
 \caption{: A Reduced Computational Complexity Implementation of $\Delta^{\text{*}}$}
 \begin{algorithmic}[1]
 		\State Initialize $L_\text{s}=0$.
 		\State Sort the queues in a descending order.
 		\For {$l=1:1:N$}
 		\State Consider $sum_l=$ sum of the first $l$ queues.
 		\If {$r(l) \, sum_l >  r({L_\text{s}}) \, sum_{L_\text{s}}$} 
 		\State put $L_\text{s}=l$
 		\EndIf
 		\EndFor
 	    \State Schedule pairs corresponding to the first $L_\text{s}$ queues.
 \end{algorithmic}
 	\label{al:schedecision}
\end{algorithm} 

The proposed implementation depends essentially on two steps: the “sorting algorithm” and the “for loop”. From the literature, the complexity for the “sorting algorithm” can be given by $O( N^2 )$. For the “for loop”, the (worst case) complexity is also $O(N^2)$ since this loop is executed $N$ times (i.e. iterations) and every iteration has another dependency to $N$. Therefore, the computational complexity of the proposed implementation is $O(N^2 + N^2)$, or equivalently $O(N^2)$, which is very small compared to $O(N 2^{N})$, especially for large $N$. 
 
\subsubsection{Perfect Case} 
A similar study to that done for the imperfect case can be adopted here. To begin with, we define $P_L=\left\{ \mu(L) \mathbf{s} :  \mathbf{s} \in S_L  \right\}$; we recall that $S_L=\left\{\mathbf{s} \in \mathcal{S} : \| \mathbf{s} \| _1 =L \right\}$. 
As seen earlier for this case, the total average rate given in $\eqref{eq:ptavr}$ reaches its maximum at $\frac{1}{2\theta}$ for which  the best and nearest integer is denoted as $L_{\text{P}}$, where we assume without lost of generality that $L_{\text{P}} \le N$. In addition, the average rate $\mu(L)$  decreases with $L$. Under these observations, the stability region can be characterized as
\begin{theorem}
For the symmetric system with unlimited backhaul, the stability region can be defined as the following
\begin{align}
\Lambda_\text{P}=\mathcal{CH} \left\{ P_1,P_2,...,P_{L_{\text{P}}} \right\}. 
\end{align}
\end{theorem}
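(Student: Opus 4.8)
The plan is to follow verbatim the architecture of the proof of Theorem~\ref{th:stabregionSI}, the perfect case being in fact easier because $\mu_\text{T}(L)=L\mu(L)=L(1-L\theta)dRe^{-\sigma^2\tau/\alpha}$ is a genuine concave quadratic in $L$ rather than the product of a linear term with $F^{L-1}$. First I would establish the perfect-case analogue of Lemma~\ref{le:Rl}: setting $\mathcal{P}=\{P_1,\dots,P_{L_\text{P}}\}$ and $\bar{\mathcal{P}}=\{P_{L_\text{P}+1},\dots,P_N\}$, the claim is that every point of $\bar{\mathcal{P}}$ lies in $\mathcal{CH}\{\mathcal{P}\}$, and hence so does every point of $\mathcal{CH}\{\bar{\mathcal{P}}\}$; then $\mathcal{CH}\{\mathcal{P}\cup\bar{\mathcal{P}}\}=\mathcal{CH}\{\mathcal{P}\}$.

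The engine for this is the purely geometric Lemma~\ref{le:Sl}, which is independent of the rate model: for any $\mathbf{s}_{i,L+1}\in S_{L+1}$ one has $\mathbf{s}_{i,L+1}=\frac{L+1}{L}\sum_{j\in\mathcal{E}_{i,L}}\delta_j\mathbf{s}_{j,L}$ with $\sum_j\delta_j=1$, $\delta_j\ge0$. Multiplying by $\mu(L+1)$ gives $\mu(L+1)\mathbf{s}_{i,L+1}=\frac{(L+1)\mu(L+1)}{L\mu(L)}\,\mu(L)\sum_{j\in\mathcal{E}_{i,L}}\delta_j\mathbf{s}_{j,L}$. Since $\mu_\text{T}$ is a concave quadratic with maximum at $\tfrac{1}{2\theta}$ and $L_\text{P}+1>\tfrac{1}{2\theta}$ (because $L_\text{P}\ge\tfrac{1}{2\theta}-\tfrac12$), while $L_\text{P}$ is by construction the integer maximizing $\mu_\text{T}$, we get $(L+1)\mu(L+1)<L\mu(L)$ for every integer $L\ge L_\text{P}$; hence the scalar prefactor is strictly less than $1$, so $\mu(L+1)\mathbf{s}_{i,L+1}$ is dominated on its own ray from the origin by the point $\mu(L)\sum_{j\in\mathcal{E}_{i,L}}\delta_j\mathbf{s}_{j,L}\in\mathcal{CH}\{P_L\}$. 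Thus $P_{L+1}\subseteq\mathcal{CH}\{P_L\}$, and iterating the inclusion $\mathcal{CH}\{P_{L+1}\}\subseteq\mathcal{CH}\{P_L\}$ for $L=L_\text{P},\dots,N-1$ shows every point of $P_j$ with $j>L_\text{P}$ lies in $\mathcal{CH}\{P_{L_\text{P}}\}\subseteq\mathcal{CH}\{\mathcal{P}\}$, which is the claim.

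With this in hand the two inclusions follow exactly as in Theorem~\ref{th:stabregionSI}. For achievability, any $\mathbf{r}_\text{P}\in\mathcal{CH}\{\mathcal{P}\}$ is a convex combination $\sum_i p_i\mathbf{r}_i$ of vertices $\mathbf{r}_i\in\mathcal{P}$, each corresponding to a fixed scheduled subset, so the randomized policy that selects decision $\mathbf{r}_i$ with probability $p_i$ in each slot stabilizes $\mathbf{r}_\text{P}$ (standard Lyapunov/Foster argument, as in Lemma~\ref{le:OptSchedPolicy}). For the converse, if some centralized policy $\mathcal{L}(\mathbf{q})$ stabilizes $\mathbf{a}$, the induced Markov chain on the queue vector has a stationary distribution $\pi$, and the mean service rate $\mathbf{r}_\text{s}=\sum_{\mathbf{q}}\pi(\mathbf{q})\mathbf{r}(\mathcal{L}(\mathbf{q}))=\sum_{\mathcal{L}\in\bm{\mathcal{L}}}p(\mathcal{L})\mathbf{r}(\mathcal{L})$ satisfies $\mathbf{r}_\text{s}>\mathbf{a}$ component-wise; since the set of attainable $\mathbf{r}(\mathcal{L})$ is exactly $\mathcal{P}\cup\bar{\mathcal{P}}$, we obtain $\mathbf{r}_\text{s}\in\mathcal{CH}\{\mathcal{P}\cup\bar{\mathcal{P}}\}=\mathcal{CH}\{\mathcal{P}\}=\Lambda_\text{P}$, hence $\mathbf{a}\in\Lambda_\text{P}$.

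The only delicate point — and the main obstacle — is the strictness of $(L+1)\mu(L+1)<L\mu(L)$ at the single comparison $L=L_\text{P}$: for $L\ge L_\text{P}+1$ strictness is automatic since $L+1>L\ge L_\text{P}+1>\tfrac{1}{2\theta}$ and $\mu_\text{T}$ is strictly decreasing there, whereas for $L=L_\text{P}$ one only has $\mu_\text{T}(L_\text{P}+1)\le\mu_\text{T}(L_\text{P})$ from the defining property of $L_\text{P}$, with equality possible only in the borderline situation $\tfrac{1}{2\theta}=L_\text{P}+\tfrac12$; in that case $P_{L_\text{P}+1}$ lies on the boundary of $\mathcal{CH}\{\mathcal{P}\}$ rather than in its interior, which is consistent with the manuscript's standing convention that stability is asserted in the interior of the characterized region.
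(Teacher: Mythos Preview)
Your proposal is correct and follows exactly the approach the paper intends: its own proof is literally ``very similar to that of Theorem~\ref{th:stabregionSI}, just consider the average rate functions $\mu(L)$ and $\mu_\text{T}(L)$ instead of $r(L)$ and $r_\text{T}(L)$,'' which is precisely what you carry out. Your extra care about the borderline tie $\mu_\text{T}(L_\text{P}+1)=\mu_\text{T}(L_\text{P})$ is a nice observation, and in fact under the paper's rounding procedure (which selects the ceiling when $r_\text{T}(L_{02})\ge r_\text{T}(L_{01})$) this tie forces $L_\text{P}$ to be the larger of the two tied integers, so the strict inequality $(L_\text{P}+1)\mu(L_\text{P}+1)<L_\text{P}\mu(L_\text{P})$ actually always holds and the boundary issue does not arise.
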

\begin{proof}
The proof is very similar to that of Theorem \ref{th:stabregionSI}, just consider the average rate functions $\mu(L)$ and  $\mu_\text{T}(L)$ instead of $r(L)$ and $r_\text{T}(L)$; so we omit this proof to avoid redundancy.
\end{proof}
To achieve the stability region that is characterized in the above, we use the Max-Weight rule, which yields the optimal policy given by
\begin{align}
\Delta^{\text{*}}_{\text{P}} : \mathbf{s}(t)=  \operatorname*{arg\,max}_{ \mathbf{s} \in   \mathcal{S}  }  \left\{ \mu(  \left\| \mathbf{s} \right\|_1) \, \mathbf{s} \cdot \mathbf{q}(t)  \right\}.
\end{align}
As for the imperfect case, applying this policy using its classical implementation will result in a CC of $O(N2^N)$. Hence, to avoid a high complexity for large $N$, and since the structural properties of this policy and those of policy $\Delta^{\text{*}}_{\text{I}}$ are similar, the equivalent implementation proposed for the imperfect case can be applied here but after replacing $r(L)$ with $\mu(L)$. Consequently, we get a reduced complexity of $O(N^2)$.

\subsubsection{Compare the Imperfect and Perfect Cases in terms of Stability}

After having characterized the stability region for both the perfect and imperfect cases, we now investigate the gap between these two regions. This gap can be interpreted as the impact of having limited backhaul, and thus quantization, on the stability region of the system with unlimited capacity. It is straightforward that the quantization process will result in shrinking the stability region compared with that of the perfect case. To capture this shrinkage, we find the minimum fraction that the imperfect case achieves w.r.t. the stability region achieved in the perfect case. This fraction corresponds to the maximum gap.

To begin with, we first draw the attention to the fact that in addition to having $\mu(L) \ge r(L)$, we generally have $L_\text{P} > L_\text{I}$.  In order to provide some insights into how we will derive the required fraction, in Figure \ref{fig:ex1} we depict the general shapes of the two stability regions for a simple example where $L_\text{I}=1$ and $L_\text{P}=N=2$.
\begin{figure}[ht!]
  \centering
   \begin{tikzpicture}[scale=0.85]
    \begin{axis}
    		[ticks=none,
    	    ymin=0,ymax=1.7,
    		xmin=0,xmax=1.7,
    		xlabel={Mean Arrival Rate $a_1$},
    		xlabel near ticks,  
    		xlabel shift=10pt,  
    		ylabel={Mean Arrival Rate $a_2$},
    		ylabel near ticks,
    		ylabel shift=10pt]
    		  
    	\addplot[pattern=dots,opacity=0.4,mark=*] coordinates {
       	            (0,0)
    		        (0,1.3)
    		        (1.1,1.1)
    		        (1.3,0)
    		        (0,0)	
    	};
    	\addplot[fill opacity=0.4,fill=lightgray,mark=*] coordinates {
    	       	    (0,0)
    	    		(0,1.3)
    	    		(1.3,0)
    	    		(0,0)	
    	};
    	\addplot[color=red,mark=*] coordinates {
    	    (0,0)
    	    (0.65,0.65)
    	  	(1.1,1.1)	
    	};		
 		
    	\node at (axis cs:0.88, 0) [anchor=south west] {\normalsize $(r(1),0)$};
    	\node at (axis cs:0, 1.03) [anchor=south west] {\normalsize $(0,r(1))$};
    	\node at (axis cs:0.14, 0.52) [anchor=south west] {\normalsize $(\frac{r(1)}{2},\frac{r(1)}{2})$};
    	\node at (axis cs:1.26, 0) [anchor=south west] {\normalsize $(\mu(1),0)$};
    	\node at (axis cs:0, 1.25) [anchor=south west] {\normalsize $(0,\mu(1))$};
    	\node at (axis cs:1.5, 1.1) [anchor=south east] {\normalsize $(\mu(2),\mu(2))$};
    	 
   	\end{axis}
   \end{tikzpicture}
   \captionsetup{font=small}
   \caption{An illustration of the stability regions of the perfect (dotted region) and imperfect (gray region) cases for the symmetric system. Here, $L_\text{I}=1$ and $L_\text{P}=N=2$.}
   \label{fig:ex1}
\end{figure}
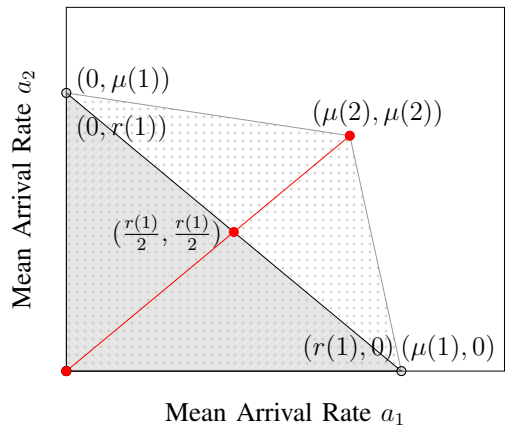   
From this figure, we can observe that we have different gaps over different  directions. To find the minimum fraction (i.e. maximum gap), we adopt the following approach. We take any point from subset $P_{L_{\text{P}}}$, and then we try to see how far is this point from the convex hull $\Lambda_{\text{I}}$ in the direction toward the origin. 
This is due to three reasons: (i) $I_{L_\text{I}}$ is a subset of the vertices  that characterize the convex hull of the imperfect case ($\Lambda_{\text{I}}$), (ii) $P_{L_{\text{P}}}$ is the subset that contains points (vertices) on the convex hull of the perfect case, and (iii) the points in $P_{L_{\text{P}}}$ are the farthest from $\Lambda_{\text{I}}$. Using this approach, we can state the following theorem. 
\begin{theorem}
\label{th:imvsp_frac}
For the symmetric system, the stability region in the imperfect case achieves at least a fraction $\frac{L_\text{I} r(L_\text{I})  }{ L_{\text{P}} \mu(L_{\text{P}}) }$ (which is $<1$) of the stability region achieved in the perfect case. Notice that this fraction is nothing but $\frac{ r_\text{T}(L_\text{I})  }{  \mu_\text{T}(L_{\text{P}}) }$.
\end{theorem}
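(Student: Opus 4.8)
The statement ``$\Lambda_\text{I}$ achieves at least a fraction $\beta := \frac{L_\text{I} r(L_\text{I})}{L_\text{P}\mu(L_\text{P})} = \frac{r_\text{T}(L_\text{I})}{\mu_\text{T}(L_\text{P})}$ of $\Lambda_\text{P}$'' means $\beta\,\Lambda_\text{P}\subseteq\Lambda_\text{I}$, and this is the inclusion I would prove. Since $\Lambda_\text{P}=\mathcal{CH}\{\mu(L)\mathbf{s}: 1\le L\le L_\text{P},\ \mathbf{s}\in S_L\}$ and $\Lambda_\text{I}$ is convex, it suffices to show that $\beta\,\mu(L)\mathbf{s}\in\Lambda_\text{I}$ for every generating point $\mu(L)\mathbf{s}$ of $\Lambda_\text{P}$. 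By the permutation symmetry of the symmetric model, this verification depends on $\mathbf{s}$ only through $L=\|\mathbf{s}\|_1$, so I would fix $L\in\{1,\dots,L_\text{P}\}$ and one $\mathbf{s}\in S_L$ and work on the ray $\{t\mathbf{s}:t\ge 0\}$.

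The key step is an explicit convex-combination identity, in the spirit of Lemmas \ref{le:Sl}--\ref{le:Rl}. Put $m:=\min(L,L_\text{I})$ and average the $\binom{L}{m}$ vectors $r(m)\mathbf{s}'$ obtained from the size-$m$ subsets $\mathbf{s}'$ of the support of $\mathbf{s}$. Each such $\mathbf{s}'$ lies in $S_m$ with $m\le L_\text{I}$, so $r(m)\mathbf{s}'\in I_m\subseteq\mathcal{I}$ is a vertex of $\Lambda_\text{I}$, and the uniform average is a genuine convex combination. Counting how many of these subsets contain a fixed coordinate in the support of $\mathbf{s}$ gives $\frac{1}{\binom{L}{m}}\sum_{\mathbf{s}'}r(m)\mathbf{s}' = \frac{m\,r(m)}{L}\mathbf{s} = \frac{r_\text{T}(m)}{L}\mathbf{s}\in\Lambda_\text{I}$. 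Since $\mathbf{0}\in\Lambda_\text{I}$ (the empty schedule) and $\Lambda_\text{I}$ is convex, the whole segment $\{t\mathbf{s}:0\le t\le r_\text{T}(m)/L\}$ lies in $\Lambda_\text{I}$, hence $\beta\,\mu(L)\mathbf{s}\in\Lambda_\text{I}$ whenever $\beta\,\mu(L)\le r_\text{T}(m)/L$, i.e. whenever $\beta\le r_\text{T}(\min(L,L_\text{I}))/\mu_\text{T}(L)$.

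It then remains to verify that $\beta=r_\text{T}(L_\text{I})/\mu_\text{T}(L_\text{P})$ is indeed $\le r_\text{T}(\min(L,L_\text{I}))/\mu_\text{T}(L)$ for all $L\le L_\text{P}$, i.e. that it is the minimum of that ratio over $L$. For $1\le L\le L_\text{I}$ the common factors of $r_\text{T}(L)$ and $\mu_\text{T}(L)$ cancel, leaving $F^{L-1}$, which is nonincreasing in $L$ because $0<F\le 1$ (indeed $F=r(2)/\mu(2)\le 1$, the imperfect success probability never exceeding the perfect one); so over this range the minimum is $F^{L_\text{I}-1}=r_\text{T}(L_\text{I})/\mu_\text{T}(L_\text{I})$. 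For $L_\text{I}<L\le L_\text{P}$ (this range being nonempty exactly when $L_\text{I}<L_\text{P}$, and $L_\text{I}\le L_\text{P}$ holds since $L_0<\tfrac{1}{2\theta}$ in Lemma \ref{le:ratevariation}) we have $m=L_\text{I}$ and the ratio equals $r_\text{T}(L_\text{I})/\mu_\text{T}(L)\ge r_\text{T}(L_\text{I})/\mu_\text{T}(L_\text{P})$, since $L_\text{P}$ maximizes $\mu_\text{T}$ over integers, with equality at $L=L_\text{P}$. Combining both ranges with $\mu_\text{T}(L_\text{I})\le\mu_\text{T}(L_\text{P})$ shows the global minimum is $r_\text{T}(L_\text{I})/\mu_\text{T}(L_\text{P})=\beta$; the same chain gives $\beta\le F^{L_\text{I}-1}\le 1$, with strict inequality as soon as quantization is actually present ($F<1$) or $L_\text{P}>L_\text{I}$.

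The argument uses no hard estimate: the geometric content sits entirely in the averaging identity of the middle paragraph, and the only thing requiring care is the bookkeeping --- choosing $m=\min(L,L_\text{I})$ so that the averaged points are legitimate vertices of $\Lambda_\text{I}$, and invoking the correct monotonicities ($r_\text{T}$ increasing on $\{1,\dots,L_\text{I}\}$, $\mu_\text{T}(\cdot)\le\mu_\text{T}(L_\text{P})$, $F\le 1$, and $L_\text{I}\le L_\text{P}$) from Lemma \ref{le:ratevariation} and the definitions of $L_\text{I},L_\text{P}$. I would expect the subtlest point in the write-up to be making fully precise the reduction ``it suffices to check generating points of $\Lambda_\text{P}$, and on the ray through such a point $\Lambda_\text{I}$ reaches at least $\frac{r_\text{T}(\min(L,L_\text{I}))}{L}\mathbf{s}$'', since that is where the structure of both regions (Theorem \ref{th:stabregionSI} and its perfect-case analogue) is genuinely used.
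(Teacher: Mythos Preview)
Your proof is correct and follows essentially the same route as the paper: both arguments reduce the inclusion $\beta\,\Lambda_\text{P}\subseteq\Lambda_\text{I}$ to a radial comparison at each vertex $\mu(L)\mathbf{s}$ of $\Lambda_\text{P}$, produce the point $\frac{m\,r(m)}{L}\mathbf{s}$ (with $m=\min(L,L_\text{I})$) on the boundary of $\Lambda_\text{I}$ along that ray via a uniform average of sub-supports, and then minimize the resulting ratio $r_\text{T}(\min(L,L_\text{I}))/\mu_\text{T}(L)$ over $L\in\{1,\dots,L_\text{P}\}$ using exactly the two case splits and monotonicities you list. The only cosmetic difference is that the paper packages the averaging identity as a separate lemma (Lemma~\ref{pro:SLn}) proved by iterating the one-step version Lemma~\ref{le:Sl}, whereas you obtain it in one shot by counting how many $m$-subsets contain a fixed coordinate; your version is slightly more direct but proves the identical fact. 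One small remark: your justification of $L_\text{I}\le L_\text{P}$ via ``$L_0<\tfrac{1}{2\theta}$'' is not quite complete because of the integer rounding, but the inequality does hold (if $r_\text{T}(n+1)>r_\text{T}(n)$ then $\mu_\text{T}(n+1)/\mu_\text{T}(n)>r_\text{T}(n+1)/r_\text{T}(n)\cdot F^{-1}>1$), and the paper does not spell this out either.
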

\begin{proof}
We start the proof by first giving and proving the following result.
\begin{lemma}
\label{pro:SLn}
Each point in $S_{L+1}$ can be written as $\frac{L+1}{L+1-n} \times$ some point on the convex hull of $S_{L+1-n}$, for $1\le n \le L$.
\end{lemma}
\begin{proof}
From Lemma \ref{pro:SL}, a point in $S_{L+1}$ can be expressed in function of some subset of points, represented by $\mathcal{E}_{i,L}$, in $S_{L}$ as $\mathbf{s}_{i,L+1}=\frac{L+1}{L}\sum_{i_1 \in \mathcal{E}_{i,L}} \delta_{i_1,L} \mathbf{s}_{i_1,L}$. More specifically, we found that the coefficients $\delta_{i_1,L}$ are all equal to $\frac{1}{L+1}$, and thus $\mathbf{s}_{i,L+1}=\frac{L+1}{L}\sum_{i_1 \in \mathcal{E}_{i,L}} \frac{1}{L+1} \mathbf{s}_{i_1,L}=\frac{1}{L}\sum_{i_1 \in \mathcal{E}_{i,L}}  \mathbf{s}_{i_1,L}$. 
Similarly, each point $\mathbf{s}_{i_1,L}$  ($\in S_{L}$) can be written in function of some specific subset of points, denoted by $\mathcal{E}_{i_1,L-1}$, in $S_{L-1}$ as $\mathbf{s}_{i_1,L}=\frac{1}{L-1} \sum_{i_2 \in \mathcal{E}_{i_1,L-1}} \mathbf{s}_{i_2,L-1}$. Following this reasoning until index $L-(n-1)$, we can express $\mathbf{s}_{i,L+1}$ as 
\begin{align}
\mathbf{s}_{i,L+1}  &=  \frac{1}{L(L-1) \ldots
(L-(n-1))}\sum\limits_{i_1 \in \mathcal{E}_{i,L}} \sum\limits_{i_2 \in \mathcal{E}_{i_1,L-1}}    \ldots   \sum\limits_{i_n \in \mathcal{E}_{i_{n-1},L-(n-1)}}  \mathbf{s}_{i_{n},L-(n-1)}.   \label{eq:proofSLn1}
 \end{align}
We denote by $\mathcal{E}_{n-1,L-(n-1)}$ the subset containing the points (vectors) that only have $L-n$ '1' (the other coordinate values are '0') and where the positions (indexes) of these '1' coordinates are the same as the positions of $L-n$ '1' coordinates of $\mathbf{s}_{i,L+1}$. It is important to point out that the elements in $\mathcal{E}_{n-1,L-(n-1)}$ are all different from each other.
It can be easily noticed that each $\mathcal{E}_{i_{n-1},L-(n-1)}$ (in $\eqref{eq:proofSLn1}$) is a subset of $\mathcal{E}_{{n-1},L-(n-1)}$. The cardinality of this latter set is the result of the combination of $L+1$ elements taken $L-(n-1)$ at a time without repetition, thus we get the following 
\begin{align}
\left| \mathcal{E}_{n-1,L-(n-1)} \right| = \dbinom{L+1}{L-(n-1)} =\frac{(L+1)!}{(L-(n-1))! \, n!}.
\end{align}
Let us now examine the nested summation in the expression in $\eqref{eq:proofSLn1}$. We can remark that the summands are the elements of $\mathcal{E}_{n-1,L-(n-1)}$. Hence, the result of this nested summation is nothing but a simple sum of the vectors in $\mathcal{E}_{n-1,L-(n-1)}$, each of which multiplied by the number of times it appears in the summation. For each vector, this number is the result of the number of possible orders in which we can remove $(L+1-(L-(n-1)))$ particular '1' coordinates from $\mathbf{s}_{i,L+1}$. It follows that the  required numbers are all equal to each other and given by $(L+1-(L-(n-1)))!=n!$.
From the above and the fact that $(L(L-1) \ldots
(L-(n-1)))^{-1}= (L-n)! (L!)^{-1}$, the expression in $\eqref{eq:proofSLn1}$ can be rewritten as 
\begin{align}
\mathbf{s}_{i,L+1} \nonumber &= \frac{(L-n)!}{L!} \sum_{j \in \mathcal{E}_{n-1,L-(n-1)} } n! \, \mathbf{s}_{j,L-(n-1)} \\
 &= \frac{(L-n)! \, n!}{L!}  \frac{(L+1)!}{(L-(n-1))! \, n!} \sum_{j \in \mathcal{E}_{n-1,L-(n-1)} } \frac{(L-(n-1))! \, n!}{(L+1)!}  \mathbf{s}_{j,L-(n-1)},  \label{eq:proofSLn2}
\end{align}
where the second equality is due to multiplying and dividing by $\left|  \mathcal{E}_{n-1,L-(n-1)} \right|$.
By noticing that the factor that multiplies the summation is equal to $\frac{L+1}{L-(n-1)}$, $\eqref{eq:proofSLn2}$ can be re-expressed as
\begin{align}
\mathbf{s}_{i,L+1} = \frac{L+1}{L-(n-1)} \sum_{j \in \mathcal{E}_{n-1,L-(n-1)} } \frac{(L-(n-1))! \, n!}{(L+1)!}  \mathbf{s}_{j,L-(n-1)} \label{eq:proofSLn3}.
\end{align}
From the proof of Lemma \ref{le:Sl}, we can claim that the point formed by the convex combination $\sum_{j \in \mathcal{E}_{n-1,L-(n-1)} }  \left|  \mathcal{E}_{n-1,L-(n-1)} \right|^{-1} \mathbf{s}_{j,L-(n-1)}$ is on the convex hull of $S_{L-(n-1)}$ and in the same direction from the origin as $\mathbf{s}_{i,L+1}$; this combination is convex since we have $\left|  \mathcal{E}_{n-1,L-(n-1)} \right|^{-1} > 0$ and  $\sum_{j \in \mathcal{E}_{n-1,L-(n-1)} }  \left|  \mathcal{E}_{n-1,L-(n-1)} \right|^{-1}=1$, meaning that the coefficients of this combination are non-negative and sum to $1$.  

In order to clarify the result of this lemma, we provide a simple example in which we set $N=4$ and $L+1=4$. Under this example, we know that $S_4$ will contain one point, namely $\mathbf{s}_{i,4}= (1,1,1,1)$. For this point, we want to find its corresponding point on the convex hull of $S_2$; this implies that $n=2$. 
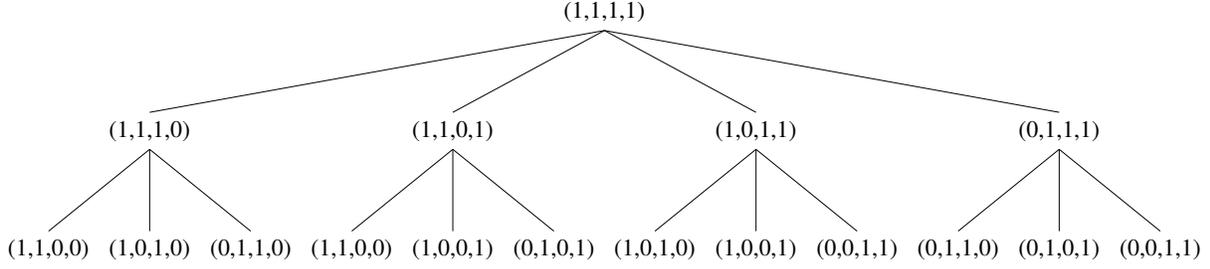
\begin{figure}
\resizebox{0.98\textwidth}{!}{%
\begin{tikzpicture}[level distance=60pt]
\Tree [.(1,1,1,1) 
             [.(1,1,1,0) [.(1,1,0,0) ] [.(1,0,1,0) ]  [.(0,1,1,0) ]]
             [.(1,1,0,1) [.(1,1,0,0) ] [.(1,0,0,1) ]  [.(0,1,0,1) ]]
             [.(1,0,1,1) [.(1,0,1,0) ] [.(1,0,0,1) ]  [.(0,0,1,1) ]] 
             [.(0,1,1,1) [.(0,1,1,0) ] [.(0,1,0,1) ]  [.(0,0,1,1) ]]      ]          
\end{tikzpicture}   }
\captionsetup{font=small}
   \caption{ A tree that shows the vectors in $S_2$ that yield $\mathbf{s}_{i,4}= (1,1,1,1)$. Here, $N=4$ and $n=2$.  }
\label{fig:tree}
\end{figure}
From the tree in Figure \ref{fig:tree}, it can be seen that
\begin{align}
\mathbf{s}_{i,4}\nonumber &=\frac{1}{3}((1,1,1,0)+(1,1,0,1)+(1,0,1,1)+(0,1,1,1)) \\ &=
\frac{1}{3}((1,1,0,0)+ (1,0,1,0)+(0,1,1,0)+(1,0,0,1)+ (0,1,0,1)+(0,0,1,1) ).
\end{align}
Remark that the $6$ different vectors in the second equality form the set
$\mathcal{E}_{n-1,L-(n-1)}=\mathcal{E}_{1,2}$, thus $\left|  \mathcal{E}_{1,2} \right|  =6 $. Using $\mathcal{E}_{1,2}$,       
the point that corresponds to $\mathbf{s}_{i,4}$ and that lies on the convex hull of $S_2$ is given by  
\begin{align}
\frac{1}{6}(1,1,0,0)+ \frac{1}{6}(1,0,1,0)+\frac{1}{6}(0,1,1,0)+\frac{1}{6}(1,0,0,1)+ \frac{1}{6}(0,1,0,1)+\frac{1}{6}(0,0,1,1).
\end{align}
We can obtain $\mathbf{s}_{i,4}$ by just multiplying this convex combination by a factor of $2$. This verifies the general formula provided in $\eqref{eq:proofSLn3}$.
\end{proof}
To find the minimum achievable fraction between the stability region of the imperfect case ($\Lambda_\text{I}$) and the stability region  of the perfect case ($\Lambda_\text{P}$), we examine the gap between each vertex that contributes in the characterization of $\Lambda_\text{P}$, where the set of these vertices is given by $\left\{ P_1,\ldots,P_{L_\text{P}} \right\}$, and the convex hull of $\Lambda_\text{I}$. 
To begin with, using the above lemma, we recall that a point in $S_{L_\text{P}}$ can be written in function of some point that lies on the convex hull of $S_{L_\text{I}}$, where these two points are in the same direction toward the origin. Furthermore, the gap between these two points can be captured using the fraction $\frac{L_\text{I}}{L_\text{P}}$.
Since any point in $P_{L_\text{P}}$ can be written as $r(L_\text{P})$ times its corresponding point in $S_{L_\text{P}}$ and a point on the convex hull of $I_{L_\text{I}}$ is $r(L_\text{I})$ times its corresponding point on the convex hull of $S_{L_\text{I}}$, we can claim that the fraction between any point in $P_{L_\text{P}}$ and its corresponding point on the convex hull of $I_{L_\text{I}}$, and thus on the convex hull of $\Lambda_\text{I}$, is given by 
\begin{align}
\frac{L_\text{I} r(L_\text{I})}{L_\text{P} \mu(L_\text{P})}=\frac{ r_\text{T}(L_\text{I})}{ \mu_\text{T}(L_\text{P})}. 
\end{align}
More generally, using the above approach, we can show that the fraction between any point (vertix) in $P_{L}$, for $ L_\text{I} \le L\le L_\text{P} $, and the convex hull of $\Lambda_\text{I}$ is equal to $\frac{L_\text{I} r(L_\text{I})}{L \mu(L)}=\frac{ r_\text{T}(L_\text{I})}{ \mu_\text{T}(L)}$.
For these fractions, since $\mu_\text{T}(L)$ increases for $L \le L_\text{P}$, the following holds
\begin{align}
\frac{ r_\text{T}(L_\text{I})}{ \mu_\text{T}(L_\text{P})} < \frac{ r_\text{T}(L_\text{I})}{ \mu_\text{T}(L_\text{P}-1)} < \ldots < \frac{ r_\text{T}(L_\text{I})}{ \mu_\text{T}(L_\text{I})}. \label{eq:inequality1}
\end{align}
On the other side, the fraction between any vertix in $P_L$, for $1 \le L \le L_\text{I}$, and its corresponding point on the convex hull of $\Lambda_\text{I}$ is given by $\frac{L r(L)}{L \mu(L)}=\frac{ r_\text{T}(L)}{ \mu_\text{T}(L)}$. This is due to the fact that the point on the convex hull of $\Lambda_\text{I}$ and that corresponds to any vertix in $P_L$, for $1 \le L \le L_\text{I}$, is nothing but a vertix in $I_L$. For the fractions in this case, it is obvious that
\begin{align}
\frac{ r_\text{T}(L_\text{I})}{ \mu_\text{T}(L_\text{I})} < \frac{ r_\text{T}(L_\text{I}-1)}{ \mu_\text{T}(L_\text{I}-1)} < \ldots < \frac{ r_\text{T}(1)}{ \mu_\text{T}(1)}. \label{eq:inequality2}
\end{align}   
Therefore, using the inequalities in \eqref{eq:inequality1} and \eqref{eq:inequality2}, the minimum fraction we are looking for is given by $\frac{ r_\text{T}(L_\text{I})}{ \mu_\text{T}(L_\text{P})}$. This completes the proof.
\end{proof}
We draw the attention to the fact that the result in the above theorem holds true even in the special case where $L_\text{P}=L_\text{I}$. Specifically, under this case, we get a fraction of $\frac{ r(L_\text{P})}{ \mu(L_\text{P})}=\frac{ r(L_\text{I})}{ \mu(L_\text{I})}$. For some insights, Figure \ref{fig:ex2} sketches the general shapes of the stability regions for $L_\text{P}=L_\text{I}=N=2$.
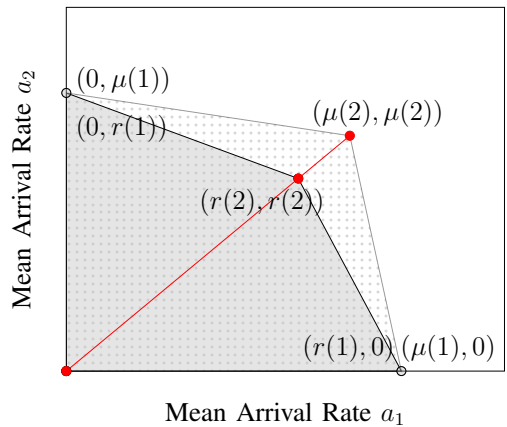
\begin{figure}[ht!]
  \centering
   \begin{tikzpicture}[scale=0.85]
    \begin{axis}
    		[ticks=none,
    	    ymin=0,ymax=1.7,
    		xmin=0,xmax=1.7,
    		xlabel={Mean Arrival Rate $a_1$},
    		xlabel near ticks,  
    		xlabel shift=10pt,  
    		ylabel={Mean Arrival Rate $a_2$},
    		ylabel near ticks,
    	    ylabel shift=10pt]  	
    		  
    	\addplot[pattern=dots,opacity=0.4,mark=*] coordinates {
       	            (0,0)
    		        (0,1.3)
    		        (1.1,1.1)
    		        (1.3,0)
    		        (0,0)	
    	};
    	\addplot[fill opacity=0.4,fill=lightgray,mark=*] coordinates {
    	       	    (0,0)
    	    		(0,1.3)
    	    		(0.9,0.9)
    	    		(1.3,0)
    	    		(0,0)	
    	};
    	\addplot[color=red,mark=*] coordinates {
    	    (0,0)
    	    (0.9,0.9)
    	  	(1.1,1.1)	
    	};		
 		
    	\node at (axis cs:0.88, 0) [anchor=south west] {\normalsize $(r(1),0)$};
    	\node at (axis cs:0, 1.03) [anchor=south west] {\normalsize $(0,r(1))$};
    	\node at (axis cs:0.48, 0.68) [anchor=south west] {\normalsize $(r(2),r(2))$};
    	\node at (axis cs:1.26, 0) [anchor=south west] {\normalsize $(\mu(1),0)$};
    	\node at (axis cs:0, 1.25) [anchor=south west] {\normalsize $(0,\mu(1))$};
    	\node at (axis cs:1.5, 1.1) [anchor=south east] {\normalsize $(\mu(2),\mu(2))$};
   	\end{axis}
   \end{tikzpicture}
   \captionsetup{font=small}
   \caption{An illustration of the stability regions of the perfect (dotted region) and imperfect (gray region) cases for the symmetric system. Here, $L_\text{P}=L_\text{I}=N=2$.}
   \label{fig:ex2}
\end{figure}

\subsection{Compare \ac{IA} to \ac{TDMA}-\ac{SVD} in terms of Stability}

In this subsection, we characterize the stability region of the case when we use \ac{SVD} technique with \ac{TDMA} instead of performing \ac{IA}. After that, we investigate which one between these two techniques outperforms the other in terms of stability. 

In the case where we apply \ac{TDMA} as a channel access method, there is only one active pair at a time, thus, at each timeslot, our system is reduced to a point-to-point \ac{MIMO} system. For this system, if we send a desired signal, denoted by $\mathbf{x}$, without any precoding scheme, the received signal is given by
\begin{align}
\mathbf{y}=\zeta \, \mathbf{H} \mathbf{x}+\mathbf{z}, \label{eq:y_svd}
\end{align} 
where $\mathbf{y}$ is the $N_{\text{r}} \times 1$ received signal vector, $\mathbf{x}$ is a $N_\text{t} \times 1$ complex vector, $\mathbf{H}$ denotes the $N_\text{r} \times N_\text{t}$ channel matrix with i.i.d. zero mean and unit variance complex Gaussian entries, and $\mathbf{z}$ is the additive white complex Gaussian noise vector with zero mean and covariance matrix $\sigma^2\mathbf{I}_{N_{\text{r}}}$. Recall that $\zeta$ is the path loss coefficient. Here, the only source of interference is the \ac{ISI} caused by the transmitter itself. To manage this problem, we use \ac{SVD} as a precoding technique. Specifically, by the singular value decomposition theorem we get 
\begin{align}
\mathbf{H}= \mathbf{U}\mathbf{D} \mathbf{V}^H, \label{eq:svdH}
\end{align} 
where $\mathbf{U}$ and $\mathbf{V}$ are $N_\text{r} \times N_\text{r}$ and $N_\text{t} \times N_\text{t}$ unitary matrices, respectively.  
$\mathbf{D}$ is a $N_\text{r} \times N_\text{t}$ diagonal matrix with non-negative square roots of the eigenvalues of matrix $\mathbf{H} \mathbf{H}^H$ in diagonal. These square roots are called the singular values of $\mathbf{H}$, and denoted by $\sqrt{\lambda_i}, i=1,2,\ldots,N_\text{r}$. We assume that we have $d$ ($\le \min(N_\text{t},N_\text{r})$) data streams to transmit. We also suppose that $\mathbf{H}$ is full rank, meaning that its rank is given by $\min(N_\text{t},N_\text{r})$; $d$ should be less than or equal to the rank of matrix $\mathbf{H}$. Let $\mathbf{y}^\prime=\mathbf{U}^H \mathbf{y}$, $\mathbf{x}^\prime=\mathbf{V}  \mathbf{x}$ and $\mathbf{z}^\prime =\mathbf{U}^H \mathbf{z}$. Using $\eqref{eq:y_svd}$ and $\eqref{eq:svdH}$ we obtain
\begin{align}
\mathbf{y}^\prime=\zeta \, \mathbf{D}\mathbf{x}+\mathbf{z}^\prime.
\end{align}
Note that $\mathbf{U}$ and $\mathbf{V}$ are unitary matrices, so $\mathbf{z}^\prime$ and $\mathbf{x}^\prime$ has the same distribution as $\mathbf{z}$ and $\mathbf{x}$, respectively.
Notice that if we send $\mathbf{x}^\prime$ instead of $\mathbf{x}$ and then, at the receiver, we multiply the corresponding received signal by $\mathbf{U}^H$, we can easily detect the transmitted signal. 
The equivalent \ac{MIMO} system can be seen as $d$ uncoupled parallel subchannels. It was proved that adaptive power allocation (i.e. water-filling algorithm) provides the highest capacity. However, to be able to come out with a fair comparison between \ac{IA} and \ac{TDMA}-\ac{SVD}, we should make the same assumption on power control, that is equal power allocation. Let us denote $k$ as the index of the active pair. The \ac{SNR} for stream $m$ can be written as
\begin{align}
\gamma_k^{(m)} = \frac{\zeta P}{N_\text{t} \, \sigma^2} \lambda_m, \qquad{} 1 \le m \le d. 
\end{align}
Let $\mathit{ma}=\max(N_\text{t},N_\text{r})$ and $\mathit{mi}=\min(N_\text{t},N_\text{r})$.
It was shown in \cite{Emre1999Capacityof} that the distribution of any one of the unordered eigenvalues is given by
\begin{align}
p(\lambda)=\frac{1}{\mathit{mi}} \sum_{n=0}^{\mathit{mi}-1} \frac{n!}{(n+\mathit{ma}-\mathit{mi})!} [L_{n}^{\mathit{ma}-\mathit{mi}}(\lambda)]^2 \lambda^{\mathit{ma}-\mathit{mi}} e^{-\lambda}, \quad \quad \lambda \ge 0,
\end{align} 
where $L_n^{\mathit{ma}-\mathit{mi}}(x)$ is the associated Laguerre polynomial of degree (order) $n$ and is given by
\begin{align}
L_n^{\mathit{ma}-\mathit{mi}}(\lambda)= \sum_{l=0}^n (-1)^l \frac{(n+\mathit{ma}-\mathit{mi})!}{(n-l)! (\mathit{ma}-\mathit{mi}+l)!  } \frac{\lambda^l}{ l!}. 
\end{align}  
Adopting the same rate model as for \ac{IA}, the average rate of the active user can be written as 
\begin{align}
(1-\theta) d R \, \mathbb{P}\left\{ \gamma_k^{(m)} \ge \tau \right\}.
\end{align}
Under this setting, it can be easily noticed that the average rate is independent of the identity of the active pair. This rate, which we denote by  $r_\text{svd}$, is derived in the following proposition.
\begin{proposition}
Under the \ac{TDMA}-\ac{SVD} technique, where one pair is active at a time, the average rate is given by 
\begin{align}
r_\text{svd} = (1-\theta) d R \sum_{n=0}^{\mathit{mi}-1}\Omega_n  \sum_{j=0}^{2n}\kappa_j \, \Gamma(j+\mathit{ma}-\mathit{mi}+1,\frac{N_\text{t} \, \sigma^2 \tau}{\zeta P}), \label{eq:r_svd}
\end{align}
where  $\Omega_n=\frac{ n!}{\mathit{mi}(n+\mathit{ma}-\mathit{mi})!}$, $\kappa_j=\sum_{i=0}^j \omega_i\omega_{j-i}$, $\omega_l=(-1)^l \frac{(n+\mathit{ma}-\mathit{mi})!}{(n-l)! (\mathit{ma}-\mathit{mi}+l)! } \frac{1}{ l!}$, with $\omega_l=0$ if $l>n$, and $\Gamma(\cdot,\cdot)$ is the upper incomplete Gamma function.
\end{proposition}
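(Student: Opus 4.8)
The plan is to reduce everything to a single scalar integral against the marginal density $p(\lambda)$ of an unordered eigenvalue of $\mathbf{H}\mathbf{H}^H$, and then evaluate that integral term-by-term using the upper incomplete Gamma function. First I would observe that, since each of the $d$ scheduled streams rides on one of the parallel subchannels and all \emph{unordered} eigenvalues share the common marginal density $p(\lambda)$ quoted above, the success probability $\mathbb{P}\{\gamma_k^{(m)}\ge\tau\}$ does not depend on $m$; hence the sum over the $d$ streams in the rate model simply contributes the factor $d$, giving $r_\text{svd}=(1-\theta)\,dR\,\mathbb{P}\{\gamma_k^{(m)}\ge\tau\}$ with a single representative stream.

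Next I would translate the SINR/SNR event into an eigenvalue event. Since $\gamma_k^{(m)}=\frac{\zeta P}{N_\text{t}\sigma^2}\lambda_m$, we have $\gamma_k^{(m)}\ge\tau \iff \lambda_m\ge c$, where $c\coloneqq \frac{N_\text{t}\sigma^2\tau}{\zeta P}$. Therefore
\begin{align}
\mathbb{P}\{\gamma_k^{(m)}\ge\tau\}=\int_c^\infty p(\lambda)\,d\lambda
=\frac{1}{\mathit{mi}}\sum_{n=0}^{\mathit{mi}-1}\frac{n!}{(n+\mathit{ma}-\mathit{mi})!}\int_c^\infty \bigl[L_n^{\mathit{ma}-\mathit{mi}}(\lambda)\bigr]^2\,\lambda^{\mathit{ma}-\mathit{mi}}\,e^{-\lambda}\,d\lambda .
\end{align}
Now I would expand the Laguerre polynomial: writing $L_n^{\mathit{ma}-\mathit{mi}}(\lambda)=\sum_{l=0}^n \omega_l\lambda^l$ with $\omega_l=(-1)^l\frac{(n+\mathit{ma}-\mathit{mi})!}{(n-l)!(\mathit{ma}-\mathit{mi}+l)!}\frac{1}{l!}$ (and $\omega_l=0$ for $l>n$), the square is a polynomial of degree $2n$ whose coefficients are exactly the discrete convolution of the $\omega$'s, namely $\bigl[L_n^{\mathit{ma}-\mathit{mi}}(\lambda)\bigr]^2=\sum_{j=0}^{2n}\kappa_j\lambda^j$ with $\kappa_j=\sum_{i=0}^{j}\omega_i\omega_{j-i}$ (Cauchy product). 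Substituting this and exchanging the finite sums with the integral (legitimate since all sums are finite), the integrand in each term becomes $\lambda^{\,j+\mathit{ma}-\mathit{mi}}e^{-\lambda}$, and
\begin{align}
\int_c^\infty \lambda^{\,j+\mathit{ma}-\mathit{mi}}\,e^{-\lambda}\,d\lambda=\Gamma\!\left(j+\mathit{ma}-\mathit{mi}+1,\,c\right),
\end{align}
by the definition of the upper incomplete Gamma function. Collecting $\Omega_n=\frac{n!}{\mathit{mi}(n+\mathit{ma}-\mathit{mi})!}$ and the factor $(1-\theta)dR$ then yields precisely \eqref{eq:r_svd}.

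I do not expect a genuinely hard step here; the argument is essentially a bookkeeping exercise. The only points that require a little care are: (i) justifying that the per-stream success probability is $m$-independent (so that the sum over streams is just a multiplication by $d$), which follows from the exchangeability of the unordered eigenvalues; and (ii) correctly identifying the coefficients of the squared Laguerre polynomial with the convolution $\kappa_j=\sum_{i=0}^j\omega_i\omega_{j-i}$ and keeping the index ranges straight (in particular using the convention $\omega_l=0$ for $l>n$ so that the outer sum can run up to $j$ uniformly). Everything else is a direct term-by-term integration.
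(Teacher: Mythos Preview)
Your proposal is correct and follows essentially the same route as the paper's own proof: translate the SNR threshold into an eigenvalue threshold, expand the squared Laguerre polynomial via the Cauchy product $\kappa_j=\sum_{i=0}^j\omega_i\omega_{j-i}$, and integrate each monomial against $e^{-\lambda}$ to obtain the upper incomplete Gamma terms. The only minor addition you make beyond the paper is the explicit remark on exchangeability to justify the factor $d$, which the paper leaves implicit.
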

\begin{proof}
To begin with, let $\omega_l=(-1)^l \frac{(n+\mathit{ma}-\mathit{mi})!}{(n-l)! (\mathit{ma}-\mathit{mi}+l)!  } \frac{1}{ l!}$ and $\Omega_n=\frac{ n!}{\mathit{mi}(n+\mathit{ma}-\mathit{mi})!}$. Then, we have 
$L_n^{\mathit{ma}-\mathit{mi}}(\lambda)= \sum_{l=0}^n \omega_l \lambda_m^l$ and 
$p(\lambda_m)= \sum_{n=0}^{\mathit{mi}-1}\Omega_n [L_n^{\mathit{ma}-\mathit{mi}}(\lambda_m)]^2 \lambda_m^{\mathit{ma}-\mathit{mi}} e^{-\lambda_m}$. For the Laguerre polynomial we have
\begin{align}
[L_n^{\mathit{ma}-\mathit{mi}}(\lambda_m)]^2= \sum_{j=0}^{2n}\kappa_j \lambda_m^j,
\end{align}
where $\kappa_j=\sum_{i=0}^j \omega_i\omega_{j-i}$, with $\omega_s=0$ if $s>n$. On the other side, since $\gamma_k^{(m)} = \frac{\zeta P}{N_\text{t} \, \sigma^2} \lambda_m$, the corresponding success probability can be written as 
\begin{align}
\mathbb{P}\left\{ \gamma_k^{(m)} \ge \tau \right\} \nonumber &= \mathbb{P}\left\{ \lambda
_m \ge \frac{N_\text{t} \, \sigma^2 \tau}{\zeta P} \right\} \nonumber \\ &= \sum_{n=0}^{\mathit{mi}-1}\Omega_n  \sum_{j=0}^{2n}\kappa_j \int_{\frac{N_\text{t} \, \sigma^2 \tau}{\zeta P}}^{\infty} \lambda_m^{j+\mathit{ma}-\mathit{mi}} e^{-\lambda_m} d\lambda_m \nonumber\\ &= \sum_{n=0}^{\mathit{mi}-1}\Omega_n  \sum_{j=0}^{2n}\kappa_j \Gamma(j+\mathit{ma}-\mathit{mi}+1,\frac{N_\text{t} \, \sigma^2 \tau}{\zeta P}),
\end{align}
where $\Gamma(\cdot,\cdot)$ stands for the upper incomplete Gamma function. Hence, the desired result follows.
\end{proof}

Let us now focus on the stability region of \ac{TDMA}-\ac{SVD}. Under this technique, as mentioned before, only one pair is active at a timeslot, thus the subset of decision vectors is given by $S_1$.  
Using the above, the stability region for \ac{TDMA}-\ac{SVD} can be described as follows. 
\begin{proposition}
	\label{stabregSVD}
If we apply \ac{TDMA}-\ac{SVD} technique, the stability region of the corresponding system can be given by 
\begin{align}
\Lambda_{\text{svd}}=\mathcal{CH} \left\{ J_1 \right\}, 
\end{align}
where $J_1=\left\{ r_{\text{svd}} \, \mathbf{s}, \forall  \mathbf{s} \in S_1  \right\}$.
\end{proposition}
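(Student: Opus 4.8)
The statement is essentially the $L_\text{I}=1$ specialization of Theorem \ref{th:stabregionSI}, so the plan is to reuse its two-part (achievability + converse) structure, noting that the \ac{TDMA} constraint eliminates the need for any analogue of Lemma \ref{le:Rl}. Indeed, under \ac{TDMA} at most one pair is active per slot, so the set of admissible scheduling decisions is not all of $\mathcal{S}$ but only $\{\mathbf{0}\}\cup S_1$ (idle, or serve exactly one pair), and the corresponding set of attainable per-slot average-rate vectors is exactly $\{\mathbf{0}\}\cup J_1$, with $J_1=\{r_\text{svd}\,\mathbf{s}:\mathbf{s}\in S_1\}=\{r_\text{svd}\,\mathbf{e}_k:k=1,\dots,N\}$ by the preceding proposition (the rate $r_\text{svd}$ being independent of the identity of the active pair). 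Since the achievable rate set already lies in $\{\mathbf{0}\}\cup J_1$, its convex hull is directly $\mathcal{CH}\{J_1\}$ (the simplex with vertices $r_\text{svd}\mathbf{e}_k$, together with the origin), and there is no ``excess'' set $\bar{\mathcal{I}}$ whose points would have to be reabsorbed into the hull.

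\textbf{Achievability.} Take any $\mathbf{a}$ in the interior of $\mathcal{CH}\{J_1\}$. Then there is a boundary point $\mathbf{a}'\succeq\mathbf{a}$ with $\mathbf{a}'=\sum_{k=1}^{N}p_k\,(r_\text{svd}\mathbf{e}_k)$, $p_k\ge 0$, $\sum_k p_k\le 1$ (any slack being assigned to the idle decision $\mathbf{0}$). Consider the stationary randomized policy that, independently each slot, schedules pair $k$ with probability $p_k$ and idles otherwise. Its long-run service rate for queue $k$ equals $p_k\,r_\text{svd}\cdot\mathbb{P}\{\gamma_k^{(m)}\ge\tau\}$-weighted sum $=$ the $k$-th component of $\mathbf{a}'$, which strictly dominates $a_k$; standard Loynes/Lyapunov-drift arguments (as invoked for Lemma \ref{le:OptSchedPolicy}, and in \cite{Georgiadis06resourceallocation}) then give strong stability of every queue. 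Hence every interior point of $\mathcal{CH}\{J_1\}$ is stabilizable.

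\textbf{Converse.} Suppose a (centralized) \ac{TDMA}-\ac{SVD} policy stabilizes some $\mathbf{a}$. As in the proof of Theorem \ref{th:stabregionSI}, the queue vector $\mathbf{q}(t)$ is a Markov chain whose decision at state $\mathbf{q}$ depends only on $\mathbf{q}$; stability yields a stationary distribution $\pi$, and the mean service-rate vector is $\mathbf{r}_\text{s}=\sum_{\mathbf{q}}\pi(\mathbf{q})\,\mathbf{r}(\mathcal{L}(\mathbf{q}))=\sum_{\mathcal{L}}\mathbf{r}(\mathcal{L})\,p(\mathcal{L})$ with $p(\mathcal{L})=\sum_{\mathbf{q}:\mathcal{L}(\mathbf{q})=\mathcal{L}}\pi(\mathbf{q})$. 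Because the \ac{TDMA} restriction forces every $\mathbf{r}(\mathcal{L})\in\{\mathbf{0}\}\cup J_1$, the vector $\mathbf{r}_\text{s}$ is a convex combination of points of $\{\mathbf{0}\}\cup J_1\subseteq\mathcal{CH}\{J_1\}$, hence $\mathbf{r}_\text{s}\in\mathcal{CH}\{J_1\}$; and strong stability forces $\mathbf{r}_\text{s}\succeq\mathbf{a}$ componentwise, so $\mathbf{a}\in\mathcal{CH}\{J_1\}$ by downward-closedness of the hull. Together with achievability this proves the claim; the accompanying throughput-optimal policy is the Max-Weight rule specialized to ``serve the pair with the longest queue,'' whose optimality follows from the Foster-theorem argument of Lemma \ref{le:OptSchedPolicy}.

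\textbf{Main obstacle.} There is no genuinely hard step: the result is a direct corollary of the machinery already developed for the symmetric \ac{IA} case. The only points demanding care are bookkeeping ones --- explicitly including the idle decision $\mathbf{0}$ so that $\mathcal{CH}\{J_1\}$ is downward closed (matching the convention used in Figures \ref{fig:Rl}--\ref{fig:ex2}), and handling the strict-versus-nonstrict service-rate inequality at the boundary (where only mean-rate stability is claimed). One could even shorten the write-up to a single sentence invoking the proof of Theorem \ref{th:stabregionSI} with $S_1$ as the sole nonempty decision class.
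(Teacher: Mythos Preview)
Your proposal is correct and takes essentially the same approach as the paper, which simply states that the proof is ``very similar to the proof of Theorem \ref{th:stabregionSI}'' and omits the details. You have faithfully fleshed out that argument---the two-part achievability/converse structure with the simplification that no analogue of Lemma \ref{le:Rl} is needed---and your closing remark that the write-up could be shortened to a single sentence invoking Theorem \ref{th:stabregionSI} is precisely what the paper does.
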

\begin{proof}
The proof of this proposition is very similar to the proof of Theorem \ref{th:stabregionSI} and is thus omitted to avoid repetition. 
\end{proof}

To schedule one of the pairs at timeslot $t$, the \ac{CS}, which we assume has a full knowledge of the queue lengths and the average transmission rate of \ac{SVD}, applies the Max-Weight rule for $\mathbf{s} \in  S_1$ (with $S_1$ is the subset of different combinations of choosing one pair), such as 
\begin{align}
\mathbf{s}(t)= \operatorname*{arg\,max}_{ \mathbf{s} \in   S_1  }  \left\{ r_\text{svd} \, \mathbf{s} \cdot \mathbf{q}(t)  \right\}.
\end{align} 
But, since $r_\text{svd}$ is independent of the identity of the active pair, this policy will always schedule the pair with the largest queue length. After finding this pair, the \ac{CS} broadcasts this information so that the corresponding user activates itself and then sends the training sequence in the uplink phase, letting its intended transmitter estimate the channel. 

Now, our aim is to compare the performances of \ac{IA} and \ac{TDMA}-\ac{SVD} techniques in terms of stability. Before proceeding with the analysis, we point out that if the stability region of \ac{IA} surpasses that of \ac{TDMA}-\ac{SVD}, this will take place only on a part of the second region. In other words, the first stability region cannot completely cover the second stability region. This observation comes from the following facts: (i) as we proved earlier, in the imperfect (resp., perfect) case the points in $I_1$ (resp., $P_1$) are vertices of the stability region of \ac{IA} given by $\Lambda_\text{I}$ (resp., $\Lambda_\text{P}$), (ii) each one of these vertices lies on a different axis in $\mathbb{R}^N_+$, with a coordinate value $r(1)$ (resp., $\mu(1)$), and (iii) the points in $J_1$, which are (the only) vertices of $\Lambda_\text{svd}$ and lie on the different axes of $\mathbb{R}^N_+$, have the same coordinate value, that is $r_\text{svd}$, which satisfies $r_\text{svd} > r(1)$ (resp., $r_\text{svd} > \mu(1)$). Notice that in our system we have $r(1)=\mu(1)$. On the other side, it is straightforward to see that for the converse case, that is when the stability region of \ac{TDMA}-\ac{SVD} surpasses that of \ac{IA}, we have a full coverage. To get some insights, the stability region of the imperfect case of \ac{IA} and that of \ac{TDMA}-\ac{SVD} for the two above-mentioned scenarios are depicted in Figure \ref{subfig:a} and Figure \ref{subfig:b}, under a simple example in which   $L_\text{I}=N=2$. Please, refer to Appendix \ref{app:examples} for more examples and illustrations.
Now, for the analysis about the comparison between the two techniques in terms of stability, we adopt the following approach. We investigate if there exists a number $L$ such that the stability region for  \ac{IA} surpasses the stability region of \ac{TDMA}-\ac{SVD}. This leads us to the following theorem.
\begin{figure}[ht!]

\begin{subfigure}[b]{0.5\textwidth}
   \begin{tikzpicture}[scale=0.85]
    \begin{axis}
    		[ticks=none,
    	    ymin=0,ymax=1.7,
    		xmin=0,xmax=1.7,
    		xlabel={Mean Arrival Rate $a_1$},
    		xlabel near ticks,  
    		xlabel shift=10pt,  
    		ylabel={Mean Arrival Rate $a_2$},
    		ylabel near ticks,
    	    ylabel shift=10pt]  	
    		  
    	\addplot[pattern=dots,opacity=0.4,mark=*] coordinates {
       	            (0,0)
    		        (0,1.5)
    		        (1.5,0)
    		        (0,0)	
    	};
    	\addplot[fill opacity=0.4,fill=lightgray,mark=*] coordinates {
    	       	    (0,0)
    	    		(0,1.3)
    	    		(0.9,0.9)
    	    		(1.3,0)
    	    		(0,0)	
    	};
    	\node at (axis cs:0.95, 0) [anchor=south west] {\normalsize $(r(1),0)$};
    	\node at (axis cs:0, 1.14) [anchor=south west] {\normalsize $(0,r(1))$};
    	\node at (axis cs:0.7, 0.94) [anchor=south west] {\normalsize $(r(2),r(2))$};
    	\node at (axis cs:1.34, 0) [anchor=south west] {\normalsize $(r_\text{svd},0)$};
    	\node at (axis cs:0, 1.4) [anchor=south west] {\normalsize $(0,r_\text{svd})$};
   	\end{axis}
   \end{tikzpicture}
 \caption{}
 \label{subfig:a}
\end{subfigure}
\begin{subfigure}[b]{0.5\textwidth}
   \begin{tikzpicture}[scale=0.85]
    \begin{axis}
    		[ticks=none,
    	    ymin=0,ymax=2.1,
    		xmin=0,xmax=2.1,
    		xlabel={Mean Arrival Rate $a_1$},
    		xlabel near ticks,  
    		xlabel shift=10pt,  
    		ylabel={Mean Arrival Rate $a_2$},
    		ylabel near ticks,
    	    ylabel shift=10pt]  	
    		  
    	\addplot[pattern=dots,opacity=0.4,mark=*] coordinates {
       	            (0,0)
    		        (0,2)
    		        (2,0)
    		        (0,0)	
    	};
    	\addplot[fill opacity=0.4,fill=lightgray,mark=*] coordinates {
    	       	    (0,0)
    	    		(0,1.3)
    	    		(0.9,0.9)
    	    		(1.3,0)
    	    		(0,0)	
    	};
    	\node at (axis cs:0.88, 0) [anchor=south west] {\normalsize $(r(1),0)$};
    	\node at (axis cs:0, 1.10) [anchor=south west] {\normalsize $(0,r(1))$};
    	\node at (axis cs:0.7, 0.94) [anchor=south west] {\normalsize $(r(2),r(2))$};
    	\node at (axis cs:1.65, 0) [anchor=south west] {\normalsize $(r_\text{svd},0)$};
    	\node at (axis cs:0, 1.85) [anchor=south west] {\normalsize  $(0,r_\text{svd})$};
   	\end{axis}
   \end{tikzpicture}
\caption{}  
\label{subfig:b}
\end{subfigure}
 \captionsetup{font=small,skip=-0.352cm}  
 \caption{Stability regions of \ac{TDMA}-\ac{SVD} (dotted region) and \ac{IA} under the imperfect case (gray region) for the symmetric system, where $L_\text{I}=N=2$. (a) \ac{IA} outperforms \ac{TDMA}-\ac{SVD} and (b)  \ac{TDMA}-\ac{SVD} outperforms \ac{IA}.}
 \label{fig:ex4}
\end{figure}
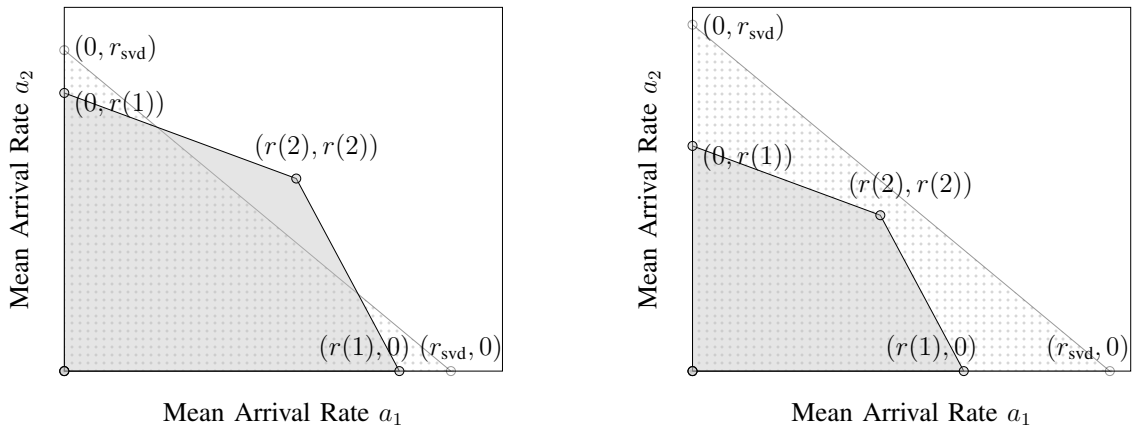   
\begin{theorem}
	\label{th:IAvsSVD}
	For the symmetric system with limited backhaul (i.e. imperfect case), \ac{IA} can outperform \ac{TDMA}-\ac{SVD} in terms of stability if there exists a number $L$ such that $Lr(L) >r_\text{svd}$, with $1\le L \le L_\text{I}$. If this condition is not satisfied, then  \ac{TDMA}-\ac{SVD} technique gives better performances than \ac{IA}. For the same system but with unlimited backhaul (i.e. perfect case), we get a similar result, that is
	\ac{IA} can yield better stability gain than \ac{TDMA}-\ac{SVD} if there is a number $L$ such that $L\mu(L) >r_\text{svd}$, with $1\le L \le L_\text{P}$; otherwise \ac{TDMA}-\ac{SVD} provides better stability performances.
\end{theorem}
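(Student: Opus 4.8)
The plan is to observe that the whole statement collapses to evaluating the linear functional $\mathbf{a}\mapsto\|\mathbf{a}\|_1$ on two polytopes. By Proposition \ref{stabregSVD}, $J_1=\{r_\text{svd}\mathbf{e}_1,\dots,r_\text{svd}\mathbf{e}_N\}$ (with $\mathbf{e}_k$ the $k$-th unit vector), and since a stability region is downward closed (see Section \ref{sec:systemmodel}), $\Lambda_\text{svd}=\mathcal{CH}\{J_1\}$ is exactly the sublevel set $\{\mathbf{a}\in\mathbb{R}_+^N:\|\mathbf{a}\|_1\le r_\text{svd}\}$. On the other side, by Theorem \ref{th:stabregionSI} the region $\Lambda_\text{I}$ is a polytope whose vertices lie among the origin and the points $r(L)\mathbf{s}$ with $\mathbf{s}\in S_L$, $1\le L\le L_\text{I}$, all of which belong to $\Lambda_\text{I}$; the $\ell_1$-norms of these points are $0$ and the values $L\,r(L)=r_\text{T}(L)$. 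Since a linear functional attains its maximum over a polytope at a vertex, $\max_{\mathbf{a}\in\Lambda_\text{I}}\|\mathbf{a}\|_1=\max_{1\le L\le L_\text{I}} L\,r(L)$.

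Putting these together yields the dichotomy directly. If there is $L^\star\in\{1,\dots,L_\text{I}\}$ with $L^\star r(L^\star)>r_\text{svd}$, pick $\mathbf{s}^\star\in S_{L^\star}$ (possible because $L^\star\le L_\text{I}\le N$); then $\mathbf{v}^\star=r(L^\star)\mathbf{s}^\star$ lies in $\Lambda_\text{I}$ and has $\|\mathbf{v}^\star\|_1=L^\star r(L^\star)>r_\text{svd}$, so $\mathbf{v}^\star\in\Lambda_\text{I}\setminus\Lambda_\text{svd}$: \ac{IA} stabilizes arrival-rate vectors that \ac{TDMA}-\ac{SVD} cannot, which is what ``\ac{IA} outperforms'' means here. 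Conversely, if $L\,r(L)\le r_\text{svd}$ for every $L\in\{1,\dots,L_\text{I}\}$, then $\|\mathbf{a}\|_1\le\max_{1\le L\le L_\text{I}}L\,r(L)\le r_\text{svd}$ for all $\mathbf{a}\in\Lambda_\text{I}$, hence $\Lambda_\text{I}\subseteq\Lambda_\text{svd}$, and the inclusion is strict since $r_\text{svd}\mathbf{e}_1\in\Lambda_\text{svd}$ while the coordinate-$1$ extent of $\Lambda_\text{I}$ is $\max_{1\le L\le L_\text{I}}r(L)=r(1)<r_\text{svd}$ (Lemma \ref{le:ratevariation} gives $r(L)$ decreasing, and $r_\text{svd}>r(1)$ was noted before the theorem); so here \ac{TDMA}-\ac{SVD} is strictly better. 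For completeness one should add that even when \ac{IA} ``outperforms'' the coverage is only partial, i.e.\ $\Lambda_\text{svd}\not\subseteq\Lambda_\text{I}$, again because $r_\text{svd}\mathbf{e}_1\notin\Lambda_\text{I}$ — so the two regions are incomparable, matching Figure \ref{fig:ex4}. The perfect-case statement is obtained verbatim after replacing $r(\cdot),r_\text{T}(\cdot),L_\text{I}$ by $\mu(\cdot),\mu_\text{T}(\cdot),L_\text{P}$, using the characterization of $\Lambda_\text{P}$ and $\mu(1)=r(1)<r_\text{svd}$.

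The only step that genuinely leans on the earlier machinery — and the one I would be most careful about — is the reduction ``it suffices to let $L$ range over $\{1,\dots,L_\text{I}\}$'': this is exactly the content of Theorem \ref{th:stabregionSI} (which rests on Lemma \ref{le:Rl}), namely that the points of $I_L$ with $L>L_\text{I}$ are not vertices of $\Lambda_\text{I}$; equivalently, by the unimodality of $r_\text{T}$ in Lemma \ref{le:ratevariation} one has $\max_{1\le L\le N}L\,r(L)=L_\text{I}\,r(L_\text{I})$, so some $L$ witnesses $L\,r(L)>r_\text{svd}$ iff some $L\le L_\text{I}$ does. Everything else is elementary polytope geometry and needs no computation.
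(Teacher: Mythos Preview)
Your argument is correct and, in fact, a bit cleaner than the paper's. The paper proves the theorem by invoking the ray-based machinery of Lemma~\ref{le:Sl}: a point of $S_L$ is $\frac{L}{1}$ times its radial projection onto the convex hull of $S_1$, so the ratio of the $I_L$-vertex to the corresponding boundary point of $\Lambda_\text{svd}$ along the same ray is $\frac{Lr(L)}{r_\text{svd}}$, and IA surpasses TDMA--SVD exactly when this ratio exceeds $1$ for some $L$; the restriction to $L\le L_\text{I}$ then comes from Lemma~\ref{le:Rl}. Your route sidesteps the ray lemma entirely by observing that $\Lambda_\text{svd}$ is simply the $\ell_1$-sublevel set $\{\mathbf{a}\in\mathbb{R}_+^N:\|\mathbf{a}\|_1\le r_\text{svd}\}$, so the question ``does $\Lambda_\text{I}$ stick out of $\Lambda_\text{svd}$?'' reduces to maximizing the linear functional $\|\cdot\|_1$ over the polytope $\Lambda_\text{I}$, which is attained at a generating point and equals $\max_{1\le L\le L_\text{I}}Lr(L)$. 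The two arguments encode the same geometric fact---a vertex $r(L)\mathbf{s}$ lies outside the simplex $\Lambda_\text{svd}$ iff its $\ell_1$-norm $Lr(L)$ exceeds $r_\text{svd}$---but yours is more self-contained and does not need the auxiliary decomposition lemmas; the paper's ray viewpoint, on the other hand, is what it reuses elsewhere (e.g.\ Theorem~\ref{th:imvsp_frac}) to produce explicit achievable fractions between nested regions. Your handling of the partial-coverage remark ($\Lambda_\text{svd}\not\subseteq\Lambda_\text{I}$ because $r_\text{svd}>r(1)$) and the reduction to $L\le L_\text{I}$ via Theorem~\ref{th:stabregionSI}/Lemma~\ref{le:ratevariation} are both fine.
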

\begin{proof}
The proof of this theorem is in the same spirit as the proof of Theorem \ref{th:imvsp_frac}, thus only the outline is given to avoid repetition. From Lemma \ref{le:Sl}, we can express a point in $S_{L}$ as $\frac{L}{1}$ times a point on the convex hull of $S_1$, where these two points are on the same line from the origin. Thus, the fraction between a point in $I_{L}$ and 
its corresponding point on the convex hull of $J_1$ can be given by $\frac{Lr(L)}{r_\text{svd}}$. So, the point in $I_{L}$ surpasses its corresponding point on the convex hull of $J_1$ if $\frac{Lr(L)}{r_\text{svd}} > 1$, or equivalently if $Lr(L) > r_\text{svd}$, with $1 \le L \le L_\text{I}$. Note that it suffices to test this condition for all $ L \le L_\text{I}$ since the points in $\bar{\mathcal{I}}$ are inside the convex hull of the points in $\mathcal{I}$ (see Lemma \ref{le:Rl}).
The same approach can be adopted for the perfect case, and we obtain  $L \mu(L) > r_\text{svd}$ with $L \le L_\text{P}$ as a sufficient condition to have $\Lambda_{\text{svd}}$ (partially) surpassed by $\Lambda_{\text{P}}$. 
This completes the proof.
\end{proof}

This theorem allows us to decide if the system should be deployed with \ac{TDMA}-\ac{SVD} or \ac{IA} as an interference management technique. For the imperfect (resp., perfect) case, this decision is made based on the existence (or not) of a number of pairs $L$ such that $Lr(L) >r_{\text{svd}}$ (resp., $L\mu(L) >r_{\text{svd}}$), with $1\le L \le L_\text{I}$ (resp., $1\le L \le L_\text{P}$). Specifically, if this condition is satisfied, it may be beneficial to use the \ac{IA} technique since we have a part of its stability region that surpasses the stability region of \ac{TDMA}-\ac{SVD} (given by $\Lambda_{\text{svd}}$). On the other hand, if this condition is not satisfied, then the stability region of \ac{IA} is entirely inside $\Lambda_{\text{svd}}$, and thus it is better to use \ac{TDMA}-\ac{SVD} technique.

For both the imperfect and perfect cases of \ac{IA}, if the corresponding condition (defined above) is satisfied, meaning that the stability region of \ac{IA} surpasses (partially) the stability region of \ac{TDMA}-\ac{SVD}, we can achieve a bigger stability region by deciding to switch between these two techniques instead of deciding to always use one of them; as seen before, here the decision was to always apply \ac{IA}. 
At each timeslot, we choose the interference management technique that yields the greater Max-Weight result. Specifically, for the imperfect case we compute 
\begin{align}
\max \left\{   \operatorname*{max}_{ \mathbf{s} \in   \mathcal{S}  }  \left\{ r(  \left\| \mathbf{s} \right\|_1) \, \mathbf{s} \cdot \mathbf{q}(t)  \right\},  \operatorname*{max}_{ \mathbf{s} \in   S_1  }  \left\{   r_\text{svd} \, \mathbf{s} \cdot \mathbf{q}(t)  \right\}   \right\},
\end{align}
and, similarly, for the perfect case we find 
\begin{align}
\max \left\{   \operatorname*{max}_{ \mathbf{s} \in   \mathcal{S}  }  \left\{ \mu(  \left\| \mathbf{s} \right\|_1) \, \mathbf{s} \cdot \mathbf{q}(t)  \right\},  \operatorname*{max}_{ \mathbf{s} \in   S_1  }  \left\{   r_\text{svd} \, \mathbf{s} \cdot \mathbf{q}(t)  \right\}   \right\}.
\end{align}
In the following theorem we provide a precise characterization of the resulting stability region for both cases.
\begin{theorem}
\label{th:StabRegIAandSVD}
Using an approach that consists in switching between \ac{IA} and \ac{TDMA}-\ac{SVD} by selecting at each timeslot the technique that yields the highest Max-Weight result, the resulting stability region under the imperfect case of \ac{IA} can be characterized as 
\begin{align}
\mathcal{CH} \left\{ I_1,I_2,...,I_{L_\text{I}}, J_1 \right\}, 
\end{align}
whereas for the perfect case we get 
\begin{align}
\mathcal{CH} \left\{ P_1,P_2,...,P_{L_{\text{P}}}, J_1 \right\}. 
\end{align}
\end{theorem}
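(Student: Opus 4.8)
The plan is to recognize the switching scheme as a single Max-Weight maximization over an enlarged decision set and then to replay the proof of Theorem~\ref{th:stabregionSI} almost verbatim. First I would note that, in the imperfect case, the available actions are the IA decisions --- the pairs $(\mathbf{s},r(\|\mathbf{s}\|_1)\mathbf{s})$ with $\mathbf{s}\in\mathcal{S}$ --- together with the TDMA-SVD decisions --- the pairs $(\mathbf{s},r_\text{svd}\mathbf{s})$ with $\mathbf{s}\in S_1$ --- so that the set of all attainable mean-rate vectors is exactly $\mathcal{I}\cup\bar{\mathcal{I}}\cup J_1$. Picking at each slot the larger of $\max_{\mathbf{s}\in\mathcal{S}}r(\|\mathbf{s}\|_1)\mathbf{s}\cdot\mathbf{q}(t)$ and $\max_{\mathbf{s}\in S_1}r_\text{svd}\mathbf{s}\cdot\mathbf{q}(t)$ and using the corresponding decision is precisely $\operatorname*{arg\,max}$ of $\mathbf{r}\cdot\mathbf{q}(t)$ over this action set, i.e.\ the Max-Weight rule \eqref{eq:MW} for the enlarged problem; by the same Foster-Lyapunov argument as in Lemma~\ref{le:OptSchedPolicy} and Proposition~\ref{pro:optimality} it is throughput optimal for that problem.

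Next I would characterize the stability region of the enlarged problem exactly as in Theorem~\ref{th:stabregionSI}. For achievability, any point of $\mathcal{CH}\{I_1,\dots,I_{L_\text{I}},J_1\}$ is a convex combination of finitely many attainable rate vectors, each corresponding to a concrete IA or SVD decision, and the randomized policy drawing that decision with the matching probability stabilizes all queues. For the converse, if a centralized policy stabilizes $\mathbf{a}$ then the induced Markov chain is positive recurrent with stationary law $\pi(\cdot)$ and its mean service rate $\mathbf{r}_\text{s}=\sum_{\mathbf{q}}\pi(\mathbf{q})\mathbf{r}(\mathcal{L}(\mathbf{q}))$ rewrites as a convex combination $\sum_j p_j\mathbf{r}_j$ with $\mathbf{r}_j$ ranging over $\mathcal{I}\cup\bar{\mathcal{I}}\cup J_1$; since $\mathbf{r}_\text{s}>\mathbf{a}$ component-wise, $\mathbf{a}\in\mathcal{CH}(\mathcal{I}\cup\bar{\mathcal{I}}\cup J_1)$. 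Thus the region stabilized by the switching scheme is $\mathcal{CH}(\mathcal{I}\cup\bar{\mathcal{I}}\cup J_1)$.

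Finally I would collapse this hull using Lemma~\ref{le:Rl}: every point of $\bar{\mathcal{I}}$ already lies in the convex hull of $\mathcal{I}$, so $\mathcal{CH}(\mathcal{I}\cup\bar{\mathcal{I}}\cup J_1)=\mathcal{CH}(\mathcal{I}\cup J_1)=\mathcal{CH}\{I_1,\dots,I_{L_\text{I}},J_1\}$, the claimed region; the perfect case is identical after replacing $r(\cdot)$, $I_L$, $L_\text{I}$ by $\mu(\cdot)$, $P_L$, $L_\text{P}$ and using the perfect-case analogue of Lemma~\ref{le:Rl}. There is no genuine obstacle here; the only points needing care are (i) to see that ``switch to whichever technique gives the higher Max-Weight score'' really is one $\operatorname*{arg\,max}$ over the union of the two action sets, so that throughput optimality transfers for free, and (ii) to keep the statement at the level of a convex-hull equality --- adding $J_1$ may make some of the listed $I_L$ (indeed all of $I_1$, since $r_\text{svd}>r(1)$) redundant as vertices, but that leaves $\mathcal{CH}\{I_1,\dots,I_{L_\text{I}},J_1\}$ unchanged.
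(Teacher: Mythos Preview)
Your proposal is correct and follows essentially the same approach as the paper: achievability via randomized time-sharing over the enlarged action set, converse via the stationary-distribution argument of Theorem~\ref{th:stabregionSI}, and the collapse of $\bar{\mathcal{I}}$ using Lemma~\ref{le:Rl}. The only cosmetic difference is that the paper packages achievability as $\mathcal{CH}\{\mathcal{CH}\{I_1,\dots,I_{L_\text{I}}\},\mathcal{CH}\{J_1\}\}=\mathcal{CH}\{I_1,\dots,I_{L_\text{I}},J_1\}$, thereby invoking the already-proven characterizations of $\Lambda_\text{I}$ and $\Lambda_\text{svd}$ (which have Lemma~\ref{le:Rl} baked in), whereas you work directly with $\mathcal{I}\cup\bar{\mathcal{I}}\cup J_1$ and apply Lemma~\ref{le:Rl} explicitly at the end.
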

\begin{proof}
First we will prove that the region in the statement of the Theorem is achievable by the
proposed policy. Indeed,
the switching process can be seen as selecting \ac{IA} and \ac{TDMA}-\ac{SVD} with probabilities $\pi_\text{ia}$ and $\pi_\text{svd}$, respectively, where $\pi_\text{ia} + \pi_\text{svd} = 1$. Hence, the resulting stability region can be given by $\pi_\text{ia} \Lambda_\text{ia}+\pi_\text{svd} \Lambda_\text{svd}$, where $\Lambda_\text{ia}$ represents $\Lambda_\text{I}$ (resp., $\Lambda_\text{P}$). It means that the resulting region is nothing but the convex hull of the stability regions of \ac{IA} and \ac{TDMA}-\ac{SVD}. More in details, knowing that the stability region of \ac{IA} under, for example, the imperfect case is $\mathcal{CH} \left\{ I_1,I_2,...,I_{L_\text{I}} \right\}$ and that of \ac{TDMA}-\ac{SVD} is $\mathcal{CH} \left\{J_1 \right\}$, the resulting stability region is given by the following
\begin{align}
\mathcal{CH} \left\{ \mathcal{CH} \left\{ I_1,I_2,...,I_{L_\text{I}} \right\}, \mathcal{CH} \left\{J_1 \right\}   \right\}= \mathcal{CH} \left\{ I_1,I_2,...,I_{L_\text{I}}, J_1 \right\}.
\end{align}

We then need to prove the converse, that is, if a centralized policy achieves stability, then the mean arrival rate lies in (the interior of) the region given by the theorem.
The proof of this part can be done in the same way as the proof of Theorem \ref{th:stabregionSI} and is thus omitted to avoid repetition.

The same analysis holds for the perfect case of \ac{IA}. This completes the proof.
\end{proof}
One last thing to mention is that here the analysis is independent of the knowledge of the arrival rate vector $\mathbf{a}$, which is unknown in general. Next, we assume that we know this rate vector based on which the interference management technique will be selected.

\subsection{Select \ac{IA} or \ac{TDMA}-\ac{SVD} based on the Arrival Rate Vector}

In this subsection, we want to select the interference management technique based on the arrival rate vector, which we suppose is known here; we recall that this vector is denoted by $\mathbf{a}$. We next provide the analysis for this selection process under the imperfect case of \ac{IA}, while noting that a similar analysis can be used under the perfect case. 

The stability regions of \ac{IA} (with the imperfect case) and \ac{TDMA}-\ac{SVD} were already characterized and denoted, respectively, by $\Lambda_\text{I}$ and $\Lambda_\text{svd}$. 
For sake of guaranteeing system stability, we assume that $\mathbf{a}$ is in the union of these two regions. As explained earlier, we recall that when we say $\Lambda_\text{svd}$ surpasses $\Lambda_\text{I}$, it implies that the stability region of \ac{TDMA}-\ac{SVD} completely covers that of \ac{IA}. On the other hand, for the converse case, the stability region of \ac{IA} partially exceeds that of \ac{TDMA}-\ac{SVD}. Two cases are to consider: $\Lambda_\text{svd}$ covers $\Lambda_\text{I}$, and  $\Lambda_\text{I}$ (partially) surpasses $\Lambda_\text{svd}$. In the first case we propose using \ac{TDMA}-\ac{SVD} since under this technique the stability performances are better than those under \ac{IA}, whereas in the second case we adopt the following reasoning based on the position of $\mathbf{a}$ compared to $\Lambda_\text{svd}$ and $\Lambda_\text{I}$ : (i) $\mathbf{a}$ is inside  $\Lambda_\text{I}$ but outside $\Lambda_\text{svd}$, it is straightforward to perform \ac{IA} technique, (ii) $\mathbf{a}$ is inside  $\Lambda_\text{svd}$ but outside $\Lambda_\text{I}$, it is clear that we should use \ac{TDMA}-\ac{SVD}, and (iii)  $\mathbf{a}$ is inside  $\Lambda_\text{I}$ and $\Lambda_\text{svd}$, we suggest applying \ac{TDMA}-\ac{SVD} because in addition to the fact that it can guarantee the system stability (as \ac{IA}), as mentioned previously, this technique does not require any backhaul usage, which is not the case for \ac{IA}. 

The above algorithm (reasoning) requires testing if point $\mathbf{a}$ is in the stability region of \ac{IA} or \ac{TDMA}-\ac{SVD}.
It is obvious that the boundary of this latter region lies on a hyperplane constructed using a set of points, each of which is on a different axis but having the same coordinate value, namely $r_\text{svd}$. Hence, the equation of this hyperplane can be written as $r_\text{svd}^{-1} \sum_{k=1}^N \nu_k =1$, or equivalently $\sum_{k=1}^N \nu_k =r_\text{svd}$, where $\nu_k$ represents the $k$-coordinate. Thus, point $\mathbf{a}$ is in $\Lambda_\text{svd}$ if
\begin{align}
\sum_{k=1}^N a_k < r_\text{svd}.
\end{align}
On the other side, in order to make this test for $\Lambda_\text{I}$, we formulate an optimization problem. In detail, we know that any point in $\Lambda_\text{I}$ can be written as the convex combination of the vertices of this convex hull; the set of these vertices is given by $\mathcal{I}$ (see Lemma \ref{le:Rl}). We let these vertices form the columns of a $N \times \left| \mathcal{I} \right|$ matrix denoted by $\mathbf{A}$, where $\left| \mathcal{I} \right|$ is the cardinality of set $\mathcal{I}$. To test if vector $\mathbf{a}$ is in the region $\Lambda_\text{I}$, we simply try to find if there exists a convex combination of the columns of $\mathbf{A}$ that can produce $\mathbf{a}$, where the coefficients of this combination are non-negative and sum to $1$.   
This is equivalent to solve the following problem
\begin{align}
\underset{\boldsymbol{\delta}}{\text{minimize}}
&\qquad  || \mathbf{A} \boldsymbol{\delta} - \mathbf{a} ||_2  \label{eq:optprob1}  \\
\text{subject to}
&\qquad \,  \mathbf{1}^T \boldsymbol{\delta} =1 \label{eq:consprob1} \\
&\qquad \,  \,  \boldsymbol{\delta} \ge 0 \label{eq:consprob2}
\end{align}  
where $\boldsymbol{\delta}$ denotes the vector of coefficients of the convex combination and $\mathbf{1}$ is the all-ones vector.   
As stated before, here we are trying to find if there is a convex combination of the columns of $\mathbf{A}$ that yields $\mathbf{a}$. Any solution $\boldsymbol{\delta}^\text{*}$ to this problem that gives the objective function $\left\| \mathbf{A} \boldsymbol{\delta}^\text{*} - \mathbf{a} \right\|_2=0$ (or equivalently, $\mathbf{A} \boldsymbol{\delta}^\text{*} - \mathbf{a}=\mathbf{0}$) is considered as feasible. This feasible solution ensures that point $\mathbf{a}$ is in  $\Lambda_\text{I}$. Note that we can define an equivalent problem to the one defined before by putting condition $\eqref{eq:consprob1}$ in the subject function. Specifically, let $\mathbf{A}_1$ denote the matrix formed by adding a row vector of ones at the end of $\mathbf{A}$, and let $\mathbf{a}_1$ the vector constructed by adding coordinate one at the end of $\mathbf{a}$. Thus, we can define the following equivalent problem
\begin{align}
\underset{\boldsymbol{\delta}}{\text{minimize}}
&\qquad  || \mathbf{A}_1 \boldsymbol{\delta} - \mathbf{a}_1 ||_2  \label{eq:optprob21}  \\
\text{subject to}
&\qquad \, \, \boldsymbol{\delta} \ge 0 \label{eq:consprob21} 
\end{align}  
We can easily see that the existence of a feasible solution, which gives $\mathbf{A_1} \boldsymbol{\delta}^* - \mathbf{a}_1 =\mathbf{0}$, ensures the satisfaction of condition $\eqref{eq:consprob1}$. This is due to the fact that the last coordinate value of $\mathbf{A_1} \boldsymbol{\delta}$, given by $\sum_{j}^{\left|   \mathcal{I}\right|} \delta_j$, is equal to the last coordinate value of $\mathbf{a}_1$ (equals to $1$). Notice that the equivalent problem defined above is nothing but the non-negative least squares problem. In general, the original problem and its equivalent one have no analytic solutions, however there exist several (low-complexity) algorithms that can be used to solve these problems numerically \cite{Boyd2004}.

A very similar analysis can be adopted for the perfect case of \ac{IA}, in which we replace $\Lambda_\text{I}$ by $\Lambda_\text{P}$, and then the columns of $\mathbf{A}$ represent the vertices of this latter region. In order to choose the interference management technique, similar reasoning and formulations to those used for the imperfect case can be considered here.

\subsection{Impact of $B$ and  $N$ on the System Stability Region}

Here the analysis is restricted for the imperfect case of \ac{IA}, where the backhaul is of finite capacity. We recall that under the adopted system the number of bits, $B$, and the maximum number of pairs, $N$, are considered as unchanged. However, since the stability analysis depends essentially on these two parameters, it is important to investigate the impact of changing these parameters on the system stability region. But before conducting such a study, we note that an increasing from $B^\prime$ to $B$ can be seen as a decreasing from $B$ to $B^\prime$; the same remark can be made for $N^\prime$ and $N$. That is to say, it suffices to study one of these two ways of changing the parameters under investigation. Here, we choose to reduce these parameters, meaning that we study the impact of reducing $B$ to $B^\prime$ and $N$ to $N^\prime$. 
We next investigate the impact of each parameter reduction on the stability region of the system. 

\subsubsection{Reduce the Number of Bits $B$}


To begin with, let $\Delta_{B}^{\text{*}}$ and $\Delta_{B^\prime}^{\text{*}}$ denote the same algorithm, that is the Max-Weight policy, for the same maximum number of pairs $N$, but the first one considers the case where the number of bits is equal to $B$ and for the second one this number is $B^\prime$. Further, let $\mathcal{L}_B$ and $\mathcal{L}_{B'}$ denote the subsets of pairs selected by $\Delta_{B}^{\text{*}}$ and $\Delta_{B'}^{\text{*}}$, respectively. Also, we denote by $\Lambda_B$ and $\Lambda_{B^\prime}$ the stability regions achieved by $\Delta_{B}^\text{*}$ and $\Delta_{B^\prime}^\text{*}$, respectively. In addition, we define $r(L,B)$ as the average rate $r(L)$ with a number of bits $B$. Equivalently, $r(L,B^\prime)$ is the average rate function $r(L)$ in which we replace $B$ by $B^\prime$.
For this model, we can state the following theorem.
\begin{theorem}
\label{th:BtoB'}
 For the same system in which the maximum number of pairs is $N$, if we decrease the number of bits from $B$ to $B^\prime$, the stability region in the second case (with $B'$), given by $\Lambda_{B^\prime}$, can be bounded as 
 \begin{align}
 \frac{r(N,B^\prime)}{r(N,B)} \Lambda_B  \subseteq \Lambda_{B^\prime} \subseteq \Lambda_B. \label{eq:BB'_frac}
 \end{align}
\end{theorem}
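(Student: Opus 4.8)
The plan is to reduce the whole statement to a single monotonicity fact about the quantity $F=F(B)$ appearing in \eqref{eq:rsimple}: that $F$ is non-decreasing in $B$ and that $0<F(B)\le 1$. I would not establish this from the closed form involving $\,_2F_1$ — there $F$ is a product of one factor that increases with $B$ and one that decreases with $B$, so the sign is not transparent — but from the probabilistic representation used in the proof of Theorem~\ref{th:AvgRate}. In the symmetric case each of the $L-1$ identical factors in the moment-generating function of the residual interference equals $F=\mathbb{E}\bigl[e^{-d\tau XY}\bigr]$, where $X\sim\text{Gamma}(Q,2^{B/Q})$, $Y\sim\text{Beta}(\breve a,\breve b)$, and $Y$ is independent of $X$ and of $B$. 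Since a $\text{Gamma}(Q,\delta)$ variable has the law of $\delta^{-1}$ times a $\text{Gamma}(Q,1)$ variable, the law of $X$ is stochastically decreasing in the rate $2^{B/Q}$, hence $X$, and therefore $XY$, is stochastically decreasing in $B$; applying $\mathbb{E}$ to the bounded decreasing map $x\mapsto e^{-d\tau x}$ then shows that $F$ is non-decreasing in $B$, and $F\le\mathbb{E}[1]=1$. Two consequences follow at once: first, $r(L,B')\le r(L,B)$ for every $L$ (with equality at $L=1$, since $r(1)$ does not depend on $B$); second, because $F(B')/F(B)\in(0,1]$ when $B'\le B$, the ratio $r(L,B')/r(L,B)=\bigl(F(B')/F(B)\bigr)^{L-1}$ is non-increasing in $L$, so $c:=r(N,B')/r(N,B)\in(0,1]$ satisfies $c\,r(L,B)\le r(L,B')$ for all $1\le L\le N$.

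Next I would record a structural remark that makes the argument insensitive to the fact that the integer $L_\text{I}$ may itself change when $B$ is changed. Writing $I_L(B):=\{r(L,B)\,\mathbf{s}:\mathbf{s}\in S_L\}$, Lemma~\ref{le:Rl} gives that each $I_L(B)$ with $L_\text{I}(B)<L\le N$ lies in $\Lambda_B$, and since the empty schedule puts $\mathbf{0}\in\Lambda_B$, we get $\Lambda_B=\mathcal{CH}\bigl(\{\mathbf{0}\}\cup\bigcup_{L=1}^{N}I_L(B)\bigr)$, and likewise $\Lambda_{B'}=\mathcal{CH}\bigl(\{\mathbf{0}\}\cup\bigcup_{L=1}^{N}I_L(B')\bigr)$. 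In particular $r(L,B)\,\mathbf{s}\in\Lambda_B$ and $r(L,B')\,\mathbf{s}\in\Lambda_{B'}$ for every $\mathbf{s}\in S_L$ and every $1\le L\le N$, whatever the values of $L_\text{I}(B)$ and $L_\text{I}(B')$ turn out to be.

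The right inclusion $\Lambda_{B'}\subseteq\Lambda_B$ is then immediate: every vertex of $\Lambda_{B'}$ is either $\mathbf{0}\in\Lambda_B$ or a point $r(L,B')\,\mathbf{s}$ with $1\le L\le N$, which, using $r(L,B')\le r(L,B)$, equals the convex combination $\frac{r(L,B')}{r(L,B)}\bigl(r(L,B)\,\mathbf{s}\bigr)+\bigl(1-\frac{r(L,B')}{r(L,B)}\bigr)\mathbf{0}$ of two points of $\Lambda_B$; passing to the convex hull gives the inclusion. For the left inclusion $c\,\Lambda_B\subseteq\Lambda_{B'}$ I would take an arbitrary $\mathbf{a}\in\Lambda_B$, expand it as a convex combination $\mathbf{a}=\lambda_0\mathbf{0}+\sum_{L=1}^{L_\text{I}(B)}\sum_{\mathbf{s}\in S_L}\lambda_{L,\mathbf{s}}\,r(L,B)\,\mathbf{s}$ of the vertices of $\Lambda_B$, and scale to obtain $c\,\mathbf{a}=\lambda_0\mathbf{0}+\sum_{L,\mathbf{s}}\lambda_{L,\mathbf{s}}\,c\,r(L,B)\,\mathbf{s}$. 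Since $c\,r(L,B)\le r(L,B')$ for $1\le L\le L_\text{I}(B)\le N$, each term equals $\frac{c\,r(L,B)}{r(L,B')}\bigl(r(L,B')\,\mathbf{s}\bigr)+\bigl(1-\frac{c\,r(L,B)}{r(L,B')}\bigr)\mathbf{0}$, a convex combination of $r(L,B')\,\mathbf{s}\in\Lambda_{B'}$ and $\mathbf{0}\in\Lambda_{B'}$, hence lies in $\Lambda_{B'}$; therefore $c\,\mathbf{a}\in\Lambda_{B'}$, and as $\mathbf{a}$ was arbitrary, $c\,\Lambda_B\subseteq\Lambda_{B'}$, which is the left inclusion in \eqref{eq:BB'_frac}. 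The only non-routine ingredient is the monotonicity of $F$ in $B$; the stochastic-dominance route keeps that step very short, whereas pushing it through the explicit $\,_2F_1$ expression — a product of an increasing and a decreasing factor — would be the main obstacle.
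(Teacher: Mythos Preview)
Your proof is correct and takes a genuinely different route from the paper's. The paper establishes the lower inclusion via a Lyapunov drift argument: it shows that the Max-Weight policy $\Delta_{B'}^{*}$ satisfies $(\mathbf{r}\cdot\mathbf{q})^{(\Delta_{B'}^{*})}\ge \frac{r(L_B,B')}{r(L_B,B)}\,(\mathbf{r}\cdot\mathbf{q})^{(\Delta_B^{*})}$, minimizes this fraction over $L_B\le N$ to obtain $r(N,B')/r(N,B)$, and then plugs this into the standard quadratic-Lyapunov drift bound to conclude strong stability for any $\mathbf{a}\in\frac{r(N,B')}{r(N,B)}\Lambda_B$. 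The upper inclusion $\Lambda_{B'}\subseteq\Lambda_B$ is essentially taken for granted there.

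You bypass the drift machinery entirely and argue directly on the polytopes: once $c\,r(L,B)\le r(L,B')\le r(L,B)$ for all $L\le N$, both inclusions follow from writing each vertex as a convex combination of a corresponding vertex in the other region and the origin. This is more elementary and also handles explicitly the subtlety that $L_\text{I}$ may change with $B$, by recasting $\Lambda_B$ as the convex hull over \emph{all} $I_L(B)$, $1\le L\le N$ (plus $\mathbf{0}$), which is justified by Lemma~\ref{le:Rl}. Your stochastic-dominance derivation of the monotonicity of $F$ in $B$ is also more transparent than the paper's ``it is clear that $F(B')<F(B)$'', since, as you note, the closed-form $\,_2F_1$ representation is a product of one increasing and one decreasing factor. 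The paper's approach has the advantage of being the template reused later (e.g.\ for Theorem~\ref{th:beta_A}) where no explicit vertex description is available; your geometric argument is shorter here precisely because the symmetric stability region \emph{does} have such a description.
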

\begin{proof}
The proof consists in three steps. We first show that $(\mathbf{r} \cdot \mathbf{q})^{(\Delta_{B^\prime}^{\text{*}})} \ge \frac{r(L_{B},B^\prime)}{r(L_{B},B)}  (\mathbf{r} \cdot \mathbf{q})^{(\Delta_{B}^{\text{*}})} $. We then minimize the fraction $\frac{r(L_{B},B^\prime)}{r(L_{B},B)}$ under the condition that the number of active pairs, $L_B$, can be less than or equal to $N$; we get $\frac{r(N,B^\prime)}{r(N,B)}$ as a minimum fraction. 
Finally, we show that the stability region $\Lambda_{B^\prime}$ achieves at least a fraction $\frac{r(N,B^\prime)}{r(N,B)}$ of the stability region $\Lambda_B$ and we conclude that $\Lambda_{B^\prime}$ can be bounded as given in \eqref{eq:BB'_frac}.\\
\textbf{Step 1:} 
Recall that under the symmetric case all the active pairs have the same average rate, which we denote here by $r(L_B,B)$. 
Thus, we can write 
$(\mathbf{r} \cdot \mathbf{q})^{(\Delta_{B}^{\text{*}})}= r(L_B,B) \sum_{k \in \mathcal{L}_{B} } q_k$,
where $L_B=\left|  \mathcal{L}_{B} \right|.$
Similarly, we get $(\mathbf{r} \cdot \mathbf{q})^{(\Delta_{B^\prime}^{\text{*}})}= r(L_{B^\prime},B^\prime) \sum_{k \in \mathcal{L}_{B^\prime} } q_k $, with $L_{B^\prime}=\left| \mathcal{L}_{B^\prime} \right|$. Since $\Delta_{B^\prime}^{\text{*}}$ maximizes the product $\mathbf{r} \cdot \mathbf{q}$ for the case where the number of bits is $B^\prime$, it follows that 
\begin{align}
r(L_{B},B^\prime) \sum_{k \in \mathcal{L}_{B} } q_k \le r(L_{B^\prime},B^\prime) \sum_{k \in \mathcal{L}_{B^\prime} } q_k.  \label{eq:r.q}
\end{align}
Also, using the definition of $\Delta_{B}^{\text{*}}$, that is maximizing $\mathbf{r} \cdot \mathbf{q}$ for the case where $B$ is the number of bits, we have
\begin{align}
r(L_{B},B^\prime) \sum_{k \in \mathcal{L}_{B}} q_k \le r(L_{B},B) \sum_{k \in \mathcal{L}_{B}} q_k.
\end{align} 
To get $r(L_{B},B) \sum_{k \in \mathcal{L}_{B}} q_k \le  \beta^{-1} r(L_{B},B^\prime) \sum_{k \in \mathcal{L}_{B}} q_k $, for some $\beta \le 1$, it suffices to take $\beta^{-1} \ge \frac{r(L_{B},B)}{r(L_{B},B^\prime)}$, or equivalently $\beta \le \frac{r(L_{B},B^\prime)}{r(L_{B},B)}$. We consider the equality in the latter relation, i.e. $\beta = \frac{r(L_{B},B^\prime)}{r(L_{B},B)}$. Combining this result with the inequality in $\eqref{eq:r.q}$ yields 
\begin{align}
r(L_{B^\prime},B^\prime) \sum_{k \in \mathcal{L}_{B^\prime} } q_k \ge  \frac{r(L_{B},B^\prime)}{r(L_{B},B)} r(L_{B},B) \sum_{k \in \mathcal{L}_{B}} q_k. \label{eq:BB'frac}
\end{align} 
\textbf{Step 2:}
We now want to find the minimum fraction $\frac{r(L_{B},B^\prime)}{r(L_{B},B)}$ w.r.t. $L_B$, such as
 \begin{align}
 \underset{L_B}{\text{minimize}}
 &\qquad  \frac{r(L_{B},B^\prime)}{r(L_{B},B)}  \label{eq:frac3BB'} \\
 \text{subject to}
 &\qquad \, L_B \le N
 \end{align}  
To solve this problem, we show that the objective function to minimize in  $\eqref{eq:frac3BB'}$ is a decreasing function w.r.t. $L_B$. Indeed, using $\eqref{eq:rsimple}$, we have
 \begin{align}
 \frac{r(L_{B},B^\prime)}{r(L_{B},B)}= \frac{(1-L_B\theta)d R e^{-\frac{ \sigma^2 \tau }{\alpha}}   \left( F(B^\prime) \right)^{L_B-1} } {(1-L_B\theta)d R e^{-\frac{ \sigma^2 \tau }{\alpha}}  \left(  F(B) \right)^{L_B-1} }= \left(\frac{F \left(B^\prime \right)}{F(B)} \right)^{L_B-1}, 
 \end{align}
in which function $F$ was already defined for equation $\eqref{eq:rsimple}$. It is clear that $F\left(B^\prime \right) < F\left(B\right)$ because $B^\prime < B$, which implies that $\left(\frac{F \left(B^\prime \right)}{F(B)} \right)^{L_B-1}$ decreases with $L_B$. Since $L_B\le N$, the optimization problem reaches its minimum at $L_B=N$. Therefore, the minimum fraction we are looking for can be given by $\frac{r(N,B^\prime)}{r(N,B)}$. \\
\textbf{Step 3:} 
Using the minimum fraction derived before, we now want to examine the stability region achieved by $\Delta_{B^\prime}^{\text{*}}$.
To this end, we define the quadratic \emph{Lyapunov function} as
\begin{align}
\mathit{Ly}(\mathbf{q}(t)) \triangleq \frac{1}{2} \left( \mathbf{q}(t) \cdot \mathbf{q}(t) \right)= \frac{1}{2} \sum_{k=1}^N q_k(t)^2.
\end{align}
From the evolution equation for the queue lengths (see $\eqref{eq:q_evo}$) we have
\begin{align}
\mathit{Ly}(\mathbf{q}(t+1))-\mathit{Ly}(\mathbf{q}(t)) \nonumber &=\frac{1}{2} \sum_{k=1}^{N} \left[  q_k(t+1)^2-q_k(t)^2 \right]  \\  \nonumber &= \frac{1}{2} \sum_{k=1}^{N} \left[  \max \left\{q_k(t)-B_k(t),0 \right\}^2 +A_k(t)^2-q_k(t)^2  \right] \\ & \le  \sum_{k=1}^{N}  \frac{ \left[   A_k(t)^2+B_k(t)^2  \right] }{2} + \sum_{k=1}^{N} q_k(t)  \left[ A_k(t)- B_k(t) \right],  \label{eq:Lya}
\end{align}
where in the final inequality we have used the fact that for any $q \ge 0$, $A \ge 0$, $B \ge 0$, we have
\begin{align}
\nonumber ( \max \left\{q-B,0 \right\}+A)^2 \le q^2 + A^2 +B^2 +2 q(A-B).
\end{align}
Now define $\mathit{Dr}(\mathbf{q}(t))$ as the \emph{conditional Lyapunov drift} for timeslot $t$
\begin{align}
\mathit{Dr}(\mathbf{q}(t)) \triangleq \mathbb{E} \left\{ \mathit{Ly}(\mathbf{q}(t+1))-\mathit{Ly}(\mathbf{q}(t)) \mid  \mathbf{q}(t)  \right\}. 
\end{align}
From $\eqref{eq:Lya}$, we have that $Dr(\mathbf{q}(t)) $ for a general scheduling policy satisfies
\begin{align}
Dr(\mathbf{q}(t)) \le  \mathbb{E} \left\{  \sum_{k=1}^{N}  \frac{  A_k(t)^2+B_k(t)^2   }{2}  \mid \mathbf{q}(t) \right\} + \sum_{k=1}^{N} q_k(t) a_k -  \mathbb{E} \left\{  \sum_{k=1}^{N}  q_k(t) B_k(t) \mid \mathbf{q}(t) \right\}, \label{eq:Dri}
\end{align}
where we have used the fact that arrivals are i.i.d. over slots and hence independent of current queue backlogs, so that $\mathbb{E} \left\{ A_k(t) \mid \mathbf{q}(t) \right\}= \mathbb{E} \left\{ A_k(t)  \right\} = a_k$. Now define $E$ as a finite positive constant that bounds the first term on the right-hand-side of the above drift inequality, so that for all $t$, all possible $q_k(t)$,
and all possible control decisions that can be taken, we have
\begin{align}
\mathbb{E} \left\{  \sum_{k=1}^{N}  \frac{  A_k(t)^2+B_k(t)^2   }{2}  \mid \mathbf{q}(t) \right\} \le E.
\end{align}
Note that $E$ exists since $A_k(t)< A_{\text{max}}$ and $B_k(t)< B_{\text{max}}$.
Using the expression in $\eqref{eq:Dri}$ yields
\begin{align}
Dr(\mathbf{q}(t)) & \le E + \sum_{k=1}^{N} q_k(t) a_k - \mathbb{E} \left\{  \sum_{k=1}^{N} q_k(t)   B_k(t) \mid \mathbf{q}(t) \right\}. \label{eq:Dri1}
\end{align}
The conditional expectation at the right-hand-side of the above inequality is with respect to the randomly observed channel states and the (possibly random) scheduling policy.
%
Thus, the drift under $\Delta_{B^\prime}^{\text{*}}$ can be expressed as
%
\begin{align}
Dr^{(\Delta_{B^\prime}^{\text{*}})}(\mathbf{q}(t)) \le   E - \sum_{k=1}^{N} q_k(t)  \left[ \mathbb{E} \left\{  B^{(\Delta_{B^\prime}^{\text{*}})}_k(t) \mid \mathbf{q}(t) \right\} - a_k \right], \label{eq:DrB'}
\end{align}
Note that here we have $\mathbb{E} \left\{ B^{(\Delta_{B^\prime}^{\text{*}})}_k(t) \mid \mathbf{q}(t), \mathcal{L}_{B^\prime} \right\}= r(L_{B^\prime},B^\prime)$, thus  $\mathbb{E} \left\{  B^{(\Delta_{B^\prime}^{\text{*}})}_k(t) \mid \mathbf{q}(t) \right\}= \mathbb{E} \left\{  r(L_{B^\prime},B^\prime) \mid \mathbf{q}(t) \right\}$, where the expectation at the left-hand-side of this latter equality is over the randomly observed channel state and the randomness of policy $\Delta_{B^\prime}^{\text{*}}$, whereas the expectation at the right-hand-side of this equality is (only) over the randomness of $\Delta_{B^\prime}^{\text{*}}$. Similarly, we have  $\mathbb{E} \left\{  B^{(\Delta_{B}^{\text{*}})}_k(t) \mid \mathbf{q}(t) \right\}= \mathbb{E} \left\{  r(L_{B},B) \mid \mathbf{q}(t) \right\}$. Hence, using $\eqref{eq:BB'frac}$ and the fact that the minimum fraction is $\beta=\frac{r(N,B^\prime)}{r(N,B)}$, we can claim that
\begin{align}
\sum_{k=1}^N q_k(t) \mathbb{E} \left\{  B^{(\Delta_{B^\prime}^{\text{*}})}_k(t) \mid \mathbf{q}(t) \right\}  \ge \sum_{k=1}^N q_k(t) \beta \, \mathbb{E} \left\{  B^{(\Delta_{B}^{\text{*}})}_k(t) \mid \mathbf{q}(t)  \right\}.
\end{align}
Plugging this directly into $\eqref{eq:DrB'}$ yields
\begin{align}
Dr^{(\Delta_{B^\prime}^{\text{*}})}(\mathbf{q}(t)) \le   E - \sum_{k=1}^N q_k(t) \left[ \beta \, \mathbb{E} \left\{  B^{(\Delta_{B}^{\text{*}})}_k(t) \mid \mathbf{q}(t) \right\} - a_k  \right]. \label{eq:DrB'1}
\end{align}
The above expression can be re-expressed as 
\begin{align}
Dr^{(\Delta_{B^\prime}^{\text{*}})}(\mathbf{q}(t)) \le   E - \beta \sum_{k=1}^N q_k(t) \left[ \mathbb{E} \left\{  B^{(\Delta_{B}^{\text{*}})}_k(t) \mid \mathbf{q}(t) \right\} - a^\prime_k  \right], \label{eq:DrB'2}
\end{align}
in which $a_k=\beta a_k^\prime$. Because $\Delta_{B}^{\text{*}}$ maximizes the weighted sum $\sum_{k=1}^N q_k(t) \mathbb{E} \left\{  B_k(t) \mid \mathbf{q}(t) \right\}$ over all alternative decisions, we have 
\begin{align}
\sum_{k=1}^N q_k(t) \mathbb{E} \left\{  B^{(\Delta_{B}^{\text{*}})}_k(t) \mid \mathbf{q}(t) \right\}  \ge \sum_{k=1}^N q_k(t) \mathbb{E} \left\{  B^{(\Delta_{B})}_k(t) \mid \mathbf{q}(t)  \right\}.
\end{align}
where $\Delta_{B}$ represents any alternative (possibly randomized) scheduling decision that can be made on timeslot $t$. Plugging the above directly into \eqref{eq:DrB'2} yields 
\begin{align}
Dr^{(\Delta_{B^\prime}^{\text{*}})}(\mathbf{q}(t)) \le   E - \beta \sum_{k=1}^N q_k(t) \left[ \mathbb{E} \left\{  B^{(\Delta_{B})}_k(t) \mid \mathbf{q}(t) \right\} - a^\prime_k  \right].
\end{align}
%
Now suppose the arrival rate vector $\mathbf{a}^\prime$ is interior to the stability region. For these arrivals, there always exists an $\epsilon_{\text{max}}(\mathbf{a}^\prime)$ such that $\mathbb{E} \left\{  B^{(\Delta_B)}_k(t) \right\} \ge a^\prime_k+\epsilon_{\text{max}}(\mathbf{a^\prime})$, $\forall k \in \left\{1,\ldots,N \right\}$. Taking an expectation of $Dr^{(\Delta_{B^\prime}^{\text{*}})}$ over the randomness of the queue lengths and summing over $t \in \left\{ 0,1,\ldots,T-1 \right\}$ for some integer $T >0$ we get
\begin{align}
\mathbb{E}  \left\{ \mathit{Ly}(\mathbf{q}(T))  \right\} - \mathbb{E}  \left\{ \mathit{Ly}(\mathbf{q}(0))  \right\} \le E T - \epsilon_{\text{max}}(\mathbf{a}^\prime) \sum_{t=0}^{T-1} \sum_{k=1}^{N} \mathbb{E} \left\{ q_k(t) \right\}.
\end{align}
Rearranging terms, dividing by $\epsilon_{\text{max}}(\mathbf{a}^\prime) T$, and taking a $\limsup$ we eventually obtain
\begin{align}
\limsup\limits_{T \to \infty} \frac{1}{T} \sum_{t=0}^{T-1} \sum_{k=1}^{N} \mathbb{E} \left\{ q_k(t) \right\} \le \frac{E}{\epsilon_{\text{max}}(\mathbf{a}^\prime)}.
\end{align}
It follows that $\Delta_{B^\prime}^{\text{*}}$ stabilizes any arrival rate vector $\mathbf{a}=\beta \mathbf{a}^\prime$. Therefore, since $\mathbf{a}^\prime$ can be any point (vector) in the stability region of
$\Delta_{B}^{\text{*}}$, we can claim that $\Delta_{B^\prime}^{\text{*}}$ stabilizes any arrival rate vector interior to fraction $\beta$ of the stability region of  $\Delta_{B}^{\text{*}}$, meaning that $\Delta_{B^\prime}^{\text{*}}$ achieves up to $\Lambda_{B^\prime}=\frac{r(N,B^\prime)}{r(N,B)} \Lambda_B$. Note that this achievable region corresponds to the worst case, that is when the fraction is $\frac{r(N,B^\prime)}{r(N,B)}$. Hence, since the fraction is greater than or equal to $\frac{r(N,B^\prime)}{r(N,B)}$, we get
\begin{align}
\frac{r(N,B^\prime)}{r(N,B)} \Lambda_B  \subseteq \Lambda_{B^\prime} \subseteq \Lambda_B,
\end{align}
meaning that $\Lambda_{B^\prime}$ achieves  \emph{at least} a fraction $\frac{r(N,B^\prime)}{r(N,B)}$ of $\Lambda_{B}$. This completes the proof.
\end{proof}

\subsubsection{Reduce the Maximum Number of Pairs $N$}
Here, depending on whether $N$ and/or $N'$ are larger than $L_\text{I}$ or not (with $ N' \le N$), three cases are to investigate. We recall that $L_\text{I}$ is the number of pairs that maximizes the total average rate $r_\text{T}(L)$. It is worth noting that, here, we relax the condition on $L_\text{I}$, that is we consider the general case where $L_\text{I}$ can be greater than $N$, so that we can provide a complete study that covers all the possible cases. 
Before continuing the analysis, we provide some important remarks that are essential for a better understanding. When $L_\text{I} \ge N$, the stability region is characterized as the convex hull of the subsets $I_1, I_2,\dots, I_N$, or equivalently
\begin{align}
\mathcal{CH} \left\{ I_1, I_2, \dots, I_{N} \right\}.
\end{align} 
On the other hand, when $L_\text{I} \le N$, the stability region is given by 
\begin{align}
\mathcal{CH} \left\{ I_1, I_2, \dots, I_{L_\text{I}} \right\}.
\end{align} 
Note that these two claims result from Theorem \ref{th:stabregionSI}.
Using these remarks, we now provide the three different cases and study their impact on the stability region of our system.\\
{\footnotesize$\bullet$} We start by the case where $N' \le N \le L_\text{I}$. 
With a maximum number of pairs $N$, the stability region is characterized as the following
\begin{align}
\mathcal{CH} \left\{ I_1, I_2, \dots , I_{N^\prime-1},  I_{N^\prime} ,  I_{N^\prime+1}, \dots, I_{N} \right\},
\end{align} 
whereas with $N'$ this region is represented by
\begin{align}
\mathcal{CH} \left\{ I_1, I_2, \dots, I_{N^\prime} \right\}.
\end{align} 
It can be seen that the first stability region (with $N$) includes the second one (with $N^\prime$). Another important observation is that the gap between these two regions depends on the chosen direction. That is to say that the difference between the two convex hulls is not the same (i.e. asymmetric gap) and it depends on what direction and location we pick. In order to capture the maximum difference between these two regions, we proceed as follows: we choose any vertex from $I_{N}$ and then we find the fraction between the distance of this vertex from the origin and the distance from the origin to a point on the convex hull of $I_{N^\prime}$ and on the same line (from the origin) to the selected vertex. Note that the same approach was adopted for the comparison between the perfect and imperfect cases. Using the above and the result in Theorem \ref{th:imvsp_frac}, the minimum fraction between the two regions is 
\begin{align}
\frac{N^\prime \, r(N^\prime,B) }{ N \, r(N,B) }= \frac{r_\text{T}(N^\prime,B)}{r_\text{T}(N,B)}.
\end{align} 
{\footnotesize $\bullet$} The second case corresponds to consider $N^\prime \le L_\text{I} \le  N$. With $N$ as a maximum number of pairs, the system stability region is given by  
\begin{align}
\mathcal{CH} \left\{ I_1, I_2, \dots, I_{L_\text{I}} \right\},
\end{align} 
while with $N^\prime$ this region is characterized as 
\begin{align}
\mathcal{CH} \left\{ I_1, I_2, \ldots, I_{N^\prime} \right\}.
\end{align} 
Here, the common subsets of vertices between the two stability regions are $I_1, I_2, \dots, I_{N^\prime}$. Using a similar approach as for the first case, the fraction between the two convex hulls is equal to
\begin{align}
\frac{ N^\prime \, r(N^\prime ,B) }{L_\text{I} \, r(L_\text{I},B) }=\frac{ r_\text{T}(N^\prime ,B) }{ r_\text{T}(L_\text{I},B) }.
\end{align} 
{\footnotesize $\bullet$} The last case appears when $L_\text{I} \le N^\prime \le N$. With $N$ or $N^\prime$ as a maximum number of pairs, the stability region is the same and given as follows  
\begin{align}
\mathcal{CH} \left\{ I_1, I_2, \dots, I_{L_\text{I}} \right\}.
\end{align} 
It means that the gap between the convex hulls is zero in this case, hence the fraction we are looking for is equal to $1$.

\section{Algorithmic Design and Performance Analysis for the General Case}
\label{sec:generalcase}

We now consider a more general model where, unlike the symmetric case, the path loss coefficients are not necessarily equal to each other. However, for the sake of simplicity, and without loss of generality, we keep the same assumption on the number of streams, that is all the pairs have equal number of data streams, namely $d$. Also, as for the symmetric case, we suppose that the assigned rate is $R$ if the \ac{SINR} is greater than or equal to $\tau$; otherwise, the assigned rate is $0$.
We recall that $\mathcal{L}$  stands for the subset of active pairs, with $ \left| \mathcal{L} \right|=L$. Let $\alpha_{ki}=\frac{P \zeta_{ki}}{d}$ and $\alpha_{kk}=\frac{P \zeta_{kk}}{d}$.
Under these assumptions, and using $\eqref{eq:SINR}$, the \ac{SINR} for stream $m$ at active user $k$ can be expressed as
  \begin{align}
  \gamma_k^{(m)} =
  \begin{dcases}
  \, \, \frac{ \alpha_{kk} \left|  \left(\mathbf{\hat{u}}_k^{(m)} \right)^{\! H} \mathbf{H}_{kk} \mathbf{\hat{v}}_k^{(m)}  \right|^2  }{  \sigma^2 +  \sum\limits_{\substack{i \in \mathcal {L}, i \ne k}}   \alpha_{ki} \left\| \mathbf{h}_{ki} \right\|^2  e_{ki}  \sum\limits_{j=1}^{d}  \left| \mathbf{w}_{ki}^H \, \mathbf{T}_{k,i}^{(m,j)} \right|^2  },   & \qquad{} \text{imperfect case}  \\
   \, \, \frac{ \alpha_{kk} \left|  \left(\mathbf{u}_k^{(m)} \right)^{\! H} \mathbf{H}_{kk} \mathbf{v}_k^{(m)}  \right|^2  }{\sigma^2},   &  \qquad{} \text{perfect case} 
  \label{eq:SINR_G} 
  \end{dcases}
  \end{align}

\subsubsection{Imperfect Case}
Using Theorem \ref{th:AvgRate} and the fact that $\frac{\alpha_{ki}  }{ \alpha_{kk} }=  \frac{\zeta_{ki}  }{ \zeta_{kk} }$, the average rate of user $k$ ($\in \mathcal{L}$) can be given by the following expression
\begin{align}
r_k  =   (1-L\theta)d R   e^{-\frac{ \sigma^2 \tau }{\alpha_{kk}}}  \prod\limits_{i \in \mathcal{L},i\ne k}  \left( \frac{\zeta_{ki}\tau d    }{ \zeta_{kk}   2^{\frac{B}{Q}}   }+1 \right)^{\! -Q} \,_2F_1(\breve{b},Q;\breve{a}+\breve{b};\frac{1}{\frac{ \zeta_{kk} 2^{\frac{B}{Q}} }{ \zeta_{ki} \tau d} +1} ), \label{eq:r_k_G} 
\end{align}
where $\breve{a}=\frac{(Q+1)d}{Q}-\frac{1}{Q}$ and $\breve{b}= (Q-1)\breve{a}$. Let the average rate vector be $\mathbf{r}$, which contains $r_k$ in position $k$ if pair $k$ is active and $0$ otherwise. As mentioned previously, $\mathbf{s}$ and $\mathcal{L}$ are two different representations for the (same) set of active pairs, so we will use $\mathbf{r}(\mathbf{s})$ to represent the fact that $\mathbf{r}$ results from decision vector $\mathbf{s}$. Notice that, in contrast to the symmetric case, here the average rate expression depends on the identity of the active pairs. This lack of symmetry will make us incapable of finding the set of vertices of the corresponding stability region. However, we can still provide a (general) characterization of this stability region by considering all the possible decisions of scheduling the pairs, as follows
\begin{align}
\Lambda_\text{GI}=\mathcal{CH} \left\{ 	\mathit{GI}_1,\mathit{GI}_2,...,\mathit{GI}_N \right\},
\end{align}
where $\mathit{GI}_L=\left\{  \mathbf{r}(  \mathbf{s}) :  \mathbf{s} \in S_L  \right\}$.
To achieve this stability region we can apply the Max-Weight rule, which is an optimal scheduling policy, such as
\begin{align}
\Delta^{\text{*}}_{\text{GI}}  : \mathbf{s}(t)= \operatorname*{arg\,max}_{ \mathbf{s} \in   \mathcal{S}  }  \left\{ \mathbf{r}(  \mathbf{s})  \cdot \mathbf{q}(t)  \right\}.
\end{align}

\subsubsection{Perfect Case}
For this case, the average rate of active user $k$ is 
\begin{align}
\mu_k  =   (1-L\theta)d R   e^{-\frac{ \sigma^2 \tau }{\alpha_{kk}}}. \label{eq:mu_k_G} 
\end{align}
We denote by $\boldsymbol{\mu}$ the average rate vector that contains $\mu_k$ at position $k$ if pair $k$ is scheduled and $0$ otherwise. Also, let $\boldsymbol{\mu}( \mathbf{s})$ be the rate vector under decision vector $\mathbf{s}$. The stability region can be represented as
\begin{align}
\Lambda_\text{GP}=\mathcal{CH} \left\{ 	\mathit{GP}_1,\mathit{GP}_2,...,\mathit{GP}_N \right\},
\end{align}
where $\mathit{GP}_L=\left\{  \boldsymbol{\mu}(  \mathbf{s}) :  \mathbf{s} \in S_L  \right\}$.
The (optimal) policy that schedules the pairs and achieves this above region can be given by
\begin{align}
\Delta^{\text{*}}_{\text{GP}}  : \mathbf{s}(t)= \operatorname*{arg\,max}_{ \mathbf{s} \in   \mathcal{S}  }  \left\{ \bm{\mu}(  \mathbf{s})  \cdot \mathbf{q}(t)  \right\}.
\end{align}

\subsection{$\beta_\text{A}$-Approximate Policy and its Corresponding Achievable Stability Region}

As detailed earlier, under the considered system, the classical implementation of the Max-Weight policy, in both the perfect and imperfect cases, has a computational complexity of $O(N2^N)$. Whereas for the symmetric case some structural properties allowed us to find a low computational complexity implementation of this policy, here no such properties exist. To deal with this problem, we try to find an alternative policy that has a reduced computational complexity so that we can apply it instead of the corresponding optimal policy. Here we are interested in finding this alternative policy only under the imperfect case, which can be considered as the hardest case to analyze compared with the perfect one.
The alternative policy in this case is denoted by $\Delta^{\text{A}}$ and termed as $\beta_\text{A}$-\emph{approximate policy}, where this latter expression is justified by the fact that this policy approximates $\Delta_{\text{GI}}^{\text{*}}$ to a fraction of $\beta_\text{A}$. More specifically, for every queue length vector $\mathbf{q}$, the following holds
\begin{align}
 \nonumber \mathbb{E}\left\{ \mathbf{B}^{(\Delta_\text{GI}^{\text{*}})}(t) \cdot \mathbf{q}(t) \mid \mathbf{q}(t)=\mathbf{q} \right\} \le \beta_\text{A}^{-1}
 \mathbb{E}\left\{ \mathbf{B}^{(\Delta^{\text{A}})}(t) \cdot \mathbf{q}(t) \mid \mathbf{q}(t)=\mathbf{q} \right\},
\end{align}
or equivalently this can be represented as
\begin{align}
\nonumber (\mathbf{r} \cdot \mathbf{q})^{(\Delta_\text{GI}^{\text{*}})} \le \beta_\text{A}^{-1} (\mathbf{r} \cdot \mathbf{q})^{(\Delta^{\text{A}})}.  
\end{align}
%
%
%
%

For the rest of the paper, for notational conciseness, we will use the term “approximate policy” instead of “$\beta_\text{A}$-approximate policy” unless stated otherwise. 
The key step in this investigation is to determine a specific approximation of the average rate expression $r_k$, more specifically an approximation that possesses a set of structural features that let us define the approximate policy. 
Indeed, we will derive such approximation and see that it is very accurate if the fraction $\frac{ \zeta_{kk} 2^{\frac{B}{Q}} }{ \zeta_{ki} \tau d}$ (or equivalently, $\frac{ \alpha_{kk} 2^{\frac{B}{Q}} }{ \alpha_{ki} \tau d}$) is sufficiently high ($> 10$), $\forall i \ne k$. For fixed $\tau$ and $d$, this latter condition corresponds to a scenario where the number of quantization bits is high and/or the cross links have small path loss coefficients in comparison with the direct links (i.e. low interference scenario).
The approximation we are talking about is given by the following proposition.
\begin{proposition}
Given a subset of active pairs, $\mathcal{L}$, the rate of active user $k$ ($\in \mathcal{L}$) can be approximated as the following
\begin{align}
r_{k} \approx  (1-L\theta)d R e^{-\frac{ \sigma^2 \tau }{\alpha_{kk}}}  \prod\limits_{i \in \mathcal{L},i \ne k} (1-g_{ki}), \label{eq:r_approx}
\end{align}
where $g_{ki}= \left( \frac{ \zeta_{kk} 2^{\frac{B}{Q}} }{ \zeta_{ki} \tau d} +1 \right)^{-1}= \left( \frac{ \alpha_{kk} 2^{\frac{B}{Q}} }{ \alpha_{ki} \tau d} +1 \right)^{-1}$.
\end{proposition}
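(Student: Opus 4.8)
The plan is to start from the exact average-rate expression for the general case, namely \eqref{eq:r_k_G} (Theorem~\ref{th:AvgRate} specialized to equal stream counts $d_i=d$), and to approximate it factor-by-factor. Pulling out the common prefactor $(1-L\theta)dR\,e^{-\sigma^2\tau/\alpha_{kk}}$, the whole claim reduces to showing that, for each interferer $i\in\mathcal{L}$ with $i\ne k$,
\begin{align}
\phi_{ki}\coloneqq\left(\frac{\zeta_{ki}\tau d}{\zeta_{kk}2^{B/Q}}+1\right)^{\!-Q}{}_2F_1\!\left(\breve{b},Q;\breve{a}+\breve{b};\frac{1}{\frac{\zeta_{kk}2^{B/Q}}{\zeta_{ki}\tau d}+1}\right)\;\approx\;1-g_{ki}.
\end{align}
I would introduce the small parameter $x_{ki}\coloneqq\frac{\zeta_{ki}\tau d}{\zeta_{kk}2^{B/Q}}=\frac{\alpha_{ki}\tau d}{\alpha_{kk}2^{B/Q}}$; then $g_{ki}=\frac{x_{ki}}{1+x_{ki}}$, the hypergeometric argument equals $\frac{x_{ki}}{1+x_{ki}}=g_{ki}$, the accuracy condition $\frac{\alpha_{kk}2^{B/Q}}{\alpha_{ki}\tau d}>10$ reads simply $x_{ki}<\tfrac1{10}$, and the target $1-g_{ki}$ equals $(1+x_{ki})^{-1}$. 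Thus $\phi_{ki}=(1+x_{ki})^{-Q}\,{}_2F_1\!\left(\breve{b},Q;\breve{a}+\breve{b};\tfrac{x_{ki}}{1+x_{ki}}\right)$, and it suffices to prove $\phi_{ki}(1+x_{ki})=1+O(x_{ki}^2)$.

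The structural key is the relation $\breve{b}=(Q-1)\breve{a}$, which gives $\breve{a}+\breve{b}=Q\breve{a}$ and lets a Gauss hypergeometric transformation collapse $\phi_{ki}$. Applying Pfaff's transformation ${}_2F_1(a,b;c;z)=(1-z)^{-a}{}_2F_1\!\left(a,c-b;c;\tfrac{z}{z-1}\right)$ with $(a,b,c)=(\breve{b},Q,Q\breve{a})$, and using $1-z=(1+x_{ki})^{-1}$, $\tfrac{z}{z-1}=-x_{ki}$, I obtain
\begin{align}
\phi_{ki}=(1+x_{ki})^{\breve{b}-Q}\,{}_2F_1\!\left(\breve{b},\,Q(\breve{a}-1);\,Q\breve{a};\,-x_{ki}\right).
\end{align}
When $d=1$ one has $\breve{a}=1$, hence $Q(\breve{a}-1)=0$, ${}_2F_1(\cdot,0;\cdot;\cdot)=1$, $\breve{b}-Q=-1$, and $\phi_{ki}=(1+x_{ki})^{-1}=1-g_{ki}$ \emph{exactly}; equivalently this is the Beta--Gamma identity $\mathrm{Gamma}(Q,\delta)\cdot\mathrm{Beta}(1,Q-1)\stackrel{d}{=}\mathrm{Exp}(\delta)$ with $\delta=2^{B/Q}$, applied to the MGF of $P_i=X_iY_i$ from the proof of Theorem~\ref{th:AvgRate}. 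This both validates $1-g_{ki}$ as the exact single-stream value and identifies it as the mean-matching exponential approximation of $\mathit{MGF}_{P_i}$ in general.

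For general $d$ I would Taylor-expand the collapsed form about $x_{ki}=0$, which is legitimate since $x_{ki}<\tfrac1{10}$ lies well inside the radius of convergence of both series. Using ${}_2F_1(a,b;c;w)=1+\tfrac{ab}{c}w+O(w^2)$ at $w=-x_{ki}$ and $(1+x_{ki})^{\breve{b}-Q}=1+(\breve{b}-Q)x_{ki}+O(x_{ki}^2)$, the first-order coefficient of $\phi_{ki}$ is $(\breve{b}-Q)-\tfrac{\breve{b}\,Q(\breve{a}-1)}{Q\breve{a}}=(\breve{b}-Q)-\tfrac{\breve{b}(\breve{a}-1)}{\breve{a}}$; substituting $\breve{b}=(Q-1)\breve{a}$ this equals $(Q-1)\breve{a}-Q-(Q-1)(\breve{a}-1)=-1$, precisely the first-order coefficient of $(1+x_{ki})^{-1}$. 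Hence $\phi_{ki}=(1+x_{ki})^{-1}\bigl(1+O(x_{ki}^2)\bigr)=(1-g_{ki})\bigl(1+O(x_{ki}^2)\bigr)$; multiplying over $i\ne k$ and reinstating the prefactor yields $r_k=(1-L\theta)dR\,e^{-\sigma^2\tau/\alpha_{kk}}\prod_{i\ne k}(1-g_{ki})\cdot\prod_{i\ne k}\bigl(1+O(x_{ki}^2)\bigr)$, with the trailing product close to $1$ precisely when every $x_{ki}$ is small, i.e.\ under the stated condition.

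I expect the main obstacle to be making the first-order cancellation transparent and the remainder quantitative: the cancellation relies on \emph{both} $\breve{b}=(Q-1)\breve{a}$ and the exact value of $\breve{a}$, so one must show it is structural and not a numerical accident, and one then has to bound the tails $\sum_{n\ge2}\frac{(\breve{b})_n\,(Q(\breve{a}-1))_n}{(Q\breve{a})_n\,n!}(-x_{ki})^n$ and the binomial remainder uniformly on $0\le x_{ki}<\tfrac1{10}$ so that the symbol ``$\approx$'' is backed by an explicit error estimate; numerically comparing \eqref{eq:r_approx} with \eqref{eq:r_k_G} over the relevant parameter ranges would corroborate the bound.
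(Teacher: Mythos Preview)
Your proposal is correct and follows essentially the same route as the paper: apply a Pfaff/linear transformation to the hypergeometric factor and then Taylor-expand to first order in the small parameter. The paper transforms $(1-g_{ki})^{Q}\,{}_2F_1(\breve{b},Q;\breve{a}+\breve{b};g_{ki})$ into ${}_2F_1(\breve{a},Q;\breve{a}+\breve{b};\tfrac{g_{ki}}{g_{ki}-1})$, then further approximates the argument $\tfrac{g_{ki}}{g_{ki}-1}\approx -g_{ki}$ and invokes a ``large $Q$'' heuristic $\breve{a}\approx d$, $\breve{a}+\breve{b}\approx Qd$ to conclude that the linear Maclaurin coefficient is $\tfrac{\breve{a}Q}{\breve{a}+\breve{b}}\approx 1$.

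Your execution is tighter on two points. First, by working with $x_{ki}$ and the alternative Pfaff form $\phi_{ki}=(1+x_{ki})^{\breve{b}-Q}\,{}_2F_1(\breve{b},Q(\breve{a}-1);Q\breve{a};-x_{ki})$, you show that the first-order coefficient is \emph{exactly} $-1$ using only the structural identity $\breve{b}=(Q-1)\breve{a}$ (equivalently $\breve{a}+\breve{b}=Q\breve{a}$); in fact the paper's own coefficient $\tfrac{\breve{a}Q}{\breve{a}+\breve{b}}$ already equals $1$ identically for the same reason, so its large-$Q$ step is unnecessary. Second, your observation that $d=1$ forces $\breve{a}=1$, hence $Q(\breve{a}-1)=0$ and $\phi_{ki}=(1+x_{ki})^{-1}=1-g_{ki}$ \emph{exactly}, is a nice structural check absent from the paper. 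The resulting clean $O(x_{ki}^2)$ remainder is more explicit than the paper's verbal ``sufficiently small $g_{ki}$'' and ``sufficiently large $Q$'' justifications.
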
 
\begin{proof}
To begin with, we note that the expression of $r_k$ can be re-expressed as 
\begin{align}
r_k  =   (1-L\theta)dR  e^{-\frac{ \sigma^2 \tau }{\alpha_{kk}}}  \prod\limits_{ i \in \mathcal{L}, i\ne k}  (1-g_{ki})^Q   \,_2F_1(\breve{b},Q;\breve{a}+\breve{b};g_{ki}),  \label{eq:r_k_proof}
\end{align}
which follows since $1-g_{ki}=\left( \frac{\zeta_{ki} \tau d  }{\zeta_{kk} 2^{\frac{B}{Q}} } +1 \right)^{-1}$.\\
We focus on the term $(1-g_{ki})^Q   \,_2F_1(\breve{b},Q;\breve{a}+\breve{b};g_{ki})$.
Using linear transformations (of variable) properties for the hypergeometric function, we have the relation 
\begin{align}
(1-g_{ki})^{Q} \,_2F_1(\breve{b},Q;\breve{a}+\breve{b};g_{ki})= \,_2F_1(\breve{a},Q;\breve{a}+\breve{b};\frac{g_{ki}}{g_{ki}-1}). \label{eq:hypergeom}
\end{align} 
For sufficiently small $g_{ki}$ values, we can (numerically) verify that the following accurate approximation holds
\begin{align}
\,_2F_1(\breve{a},Q;\breve{a}+\breve{b};\frac{g_{ki}}{g_{ki}-1} ) \approx \,_2F_1(\breve{a},Q;\breve{a}+\breve{b};-g_{ki}).
\end{align}	
We recall that $\breve{a}=\frac{(Q+1)d}{Q}-\frac{1}{Q}$, $\breve{b}=(Q-1)\breve{a}$ and $Q=N_\text{t}N_\text{r}-1$. Thus, for sufficiently large $Q$, and since $d \le \min(N_\text{t},N_\text{r})$ (this ensures that $Q$ is sufficiently larger than $d$), we can easily see that $\breve{a}\approx d$, $\breve{b}\approx Qd-d$ and $\breve{a}+\breve{b}\approx Qd$. Now, using the Maclaurin expansion to the second order we can write 
\begin{align}
\,_2F_1(\breve{a},Q;\breve{a}+\breve{b};-g_{ki}) \approx 1-\frac{\breve{a}Q}{\breve{a}+\breve{b}}g_{ki}+\frac{1}{2} \frac{\breve{a}Q}{\breve{a}+\breve{b}}  \frac{(\breve{a}+1) (Q+1)}{\breve{a}+\breve{b}+1}g_{ki}^2 + o(g_{ki}^2) \approx 1-g_{ki}. \label{eq:MacExpansion}
\end{align}
In this approximation we used the facts that $\frac{1}{2}g_{ki}^2 \ll g_{ki}$, $\frac{\breve{a}Q}{\breve{a}+\breve{b}}=\frac{dQ}{Qd}=1$ and $\frac{ (\breve{a}+1)(Q+1)}{\breve{a}+\breve{b}+1}=\frac{ (d+1)(Q+1)}{Qd+1} \approx 1$. In addition, $o(g_{ki}^2)$ can be removed since it is negligible compared with $1-g_{ki}$. 
This latter property follows from the fact that the Maclaurin expansion to higher orders (greater than two) will add terms in $g_{ki}^3,g_{ki}^4,\dots$, which are, as $g_{ki}^2$, very small with respect to $1$ and to the term in $g_{ki}$; this is due to the condition $g_{ki} <0.1$.
Hence, by replacing the above approximation in the expression of $r_k$ given in $\eqref{eq:r_k_proof}$, we obtain
\begin{align}
r_{k} \approx  (1-L\theta)d R e^{-\frac{ \sigma^2 \tau }{\alpha_{kk}}}  \prod\limits_{i \in \mathcal{L},i \ne k} (1-g_{ki}).
\end{align} 
This concludes the proof.

\end{proof}

To proceed further with the analysis, we recall that we denote the approximate policy by $\Delta^{\text{A}}$. Let $\bar{g}_{k}$ be the average value of all the $g_{ki}$, with $i\ne k$, for the same number of active pairs $L$. 
In detail, for a fixed cardinality $L$,
we take all the possible subsets (i.e. scheduling decisions) in which user $k$ is active. For each of these subsets, there are $L-1$ values of $g_{ki}$. Hence, $\bar{g}_{k}$ is the average of these $g_{ki}$ values over all the considered decisions. Using this average value $\bar{g}_k$ and the approximate expression of $r_k$ in $\eqref{eq:r_approx}$, we define $\phi_{k}(L)$ as
\begin{align}
\phi_{k}(L)= (1-L\theta)d R e^{-\frac{ \sigma^2 \tau }{\alpha_{kk}}}  (1-\bar{g}_k)^{L-1}.
\end{align}
Also, we denote $\boldsymbol{\phi}$ as the vector containing $\phi_{k}(L)$ at position $k$ if  pair $k$ is scheduled (with $L-1$ other pairs); otherwise, we set $0$ at this position.
Under this setting, we define the approximate policy $\Delta^{\text{A}}$ as follows
\begin{align}
 \nonumber   \Delta^{\text{A}}  : \mathbf{s}(t) = \operatorname*{arg\,max}_{ \mathbf{s} \in   \mathcal{S}  }  \left\{   \boldsymbol{\phi}   (  \mathbf{s} )   \cdot \mathbf{q}(t)  \right\},
\end{align}
where $\boldsymbol{\phi}   (  \mathbf{s} )$ results from decision vector $\mathbf{s}$. It is noteworthy to mention that although we use $\phi_{k}(L)$ (of active pair $k$) to make the scheduling decision under $\Delta^{\text{A}}$, the actual average rate of user $k$ is still $r_k$. Also, remark that $\Delta^{\text{A}}$ follows the Max-Weight rule, thus, as was shown earlier, implementing $\Delta^{\text{A}}$ as a classical maximization problem over all the possible decisions $\mathbf{s}$ needs a CC of $O(N2^N)$. However, in contrast to $\Delta_\text{GI}^\text{*}$, policy $\Delta^{\text{A}}$ has a structural property that will
allow us to propose an equivalent reduced CC implementation instead of the classical one. This is due to the fact that, for example, $\phi_k(L)$ is independent of the $L-1$ other active users, but only depends on pair $k$ and the cardinality $L$.
The proposed implementation of policy $\Delta^{\text{A}}$ is given by Algorithm \ref{al:schedecis_G}.
 \begin{algorithm}[ht!]
 	\caption{: A Reduced Computational Complexity Implementation of $\Delta^{\text{A}}$}
 \begin{algorithmic}[1]
 \State Initialize $L_\text{g}=0$ and $\mathit{ws}_{L_\text{g}}=0$.
 \For {$l=1:1:N$}
 \State Sort the users in a descending order with respect to the product $\mathit{pro}_k=\phi_{k}(l)\,q_k$.
 \State Let $\mathit{ws}_l=$ sum of the first $l$ biggest $\mathit{pro}_k$ values; save  $\mathit{sq}_l$ that represents which $l$ users yield $\mathit{ws}_l$.
 \If {$\mathit{ws}_l >  \mathit{ws}_{L_\text{g}}$} 
 \State Put $\mathit{ws}_{L_\text{g}} = \mathit{ws}_{l}$ and $L_\text{g}=l$.
 \EndIf
 \EndFor
 \State Schedule the pairs given by $\mathit{sq}_{L_\text{g}}$.
 	\end{algorithmic}
 	\label{al:schedecis_G}
 \end{algorithm}
 
To compare with the classical implementation, we now focus on the computational complexity of the proposed implementation, which depends essentially on a “for loop” of $N$ iterations, each of which contains: (i) a “sorting algorithm”, which needs in the worst case $O(N^2)$, (ii) a sum of $l$ terms in iteration $l$, and (iii) other steps of small CC compared with those mentioned before. Thus, by neglecting the CC of the steps in (iii) and noticing that the summing steps (in (ii)) over all the iterations need $O(\frac{N(N+1)}{2})=O(N^2)$, the CC of the proposed implementation is roughly $O(N^2 N+N^2)=O(N^3)$, which is very small compared with $O(N2^N)$ for large $N$.

In general, the approximate policy comes with the disadvantage of reducing the achievable stability region compared with the optimal policy. Indeed, as we will see later on, policy $\Delta^{\text{A}}$ only achieves a fraction of the stability region achieved by policy $\Delta^{\text{*}}_\text{GI}$. We recall that $g_{ki}=( \frac{ \zeta_{kk} 2^{\frac{B}{Q}} }{ \zeta_{ki} \tau d} +1 )^{-1}$.
Let us  define $\mathcal{L}_\text{A}$ as the subset chosen by $\Delta^{\text{A}}$, and we let the cardinality of this subset be $L_\text{A}$. For $\Delta_\text{GI}^{\text{*}}$ we keep the original notation, that is $\mathcal{L}$ is the scheduled subset, with $L=\left| \mathcal{L} \right|$. Moreover, let $\bm{\mathcal{L}}$ stand for the set of all possible decision subsets, so $\mathcal{L}_\text{A}$ and $\mathcal{L}$ are subsets from $\bm{\mathcal{L}}$. Remark that $\bm{\mathcal{L}}$ is just an equivalent representation of set $\mathcal{S}$. In the following, we state a proposition that is essential for the characterization of the achievable fraction. 
\begin{proposition}
The approximation of $r_k$ in $\eqref{eq:r_approx}$ can in its turn be approximated as
\label{pr:approximation2}
\begin{align}
r_k \approx \left( 1-L\theta\right) dR e^{-\frac{ \sigma^2 \tau }{\alpha_{kk}}} \left[ (1-\bar{g}_k)^{L-1}-(1-\bar{g}_{k})^{L-2}\sum\limits_{ i \in \mathcal{L},i \ne k } (g_{ki}-\bar{g}_k) \right].
\end{align} 
\end{proposition}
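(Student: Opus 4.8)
The plan is to start from the approximation \eqref{eq:r_approx}, namely $r_k \approx (1-L\theta)dR\,e^{-\sigma^2\tau/\alpha_{kk}}\prod_{i\in\mathcal{L},i\ne k}(1-g_{ki})$, and to linearize the finite product around the common value $\bar g_k$. First I would write each factor as $1-g_{ki}=(1-\bar g_k)-(g_{ki}-\bar g_k)=(1-\bar g_k)\bigl(1-\tfrac{g_{ki}-\bar g_k}{1-\bar g_k}\bigr)$, which is licit because $\bar g_k<0.1$ (it is an average of quantities $g_{ki}<0.1$), so $1-\bar g_k$ is bounded away from $0$. Writing $\epsilon_{ki}:=g_{ki}-\bar g_k$ and collecting the $L-1$ prefactors gives $\prod_{i\in\mathcal{L},i\ne k}(1-g_{ki})=(1-\bar g_k)^{L-1}\prod_{i\in\mathcal{L},i\ne k}\bigl(1-\tfrac{\epsilon_{ki}}{1-\bar g_k}\bigr)$.

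Next I would expand the remaining product and keep only the terms that are at most first order in the deviations $\epsilon_{ki}$, i.e. $\prod_{i\in\mathcal{L},i\ne k}\bigl(1-\tfrac{\epsilon_{ki}}{1-\bar g_k}\bigr)\approx 1-\tfrac{1}{1-\bar g_k}\sum_{i\in\mathcal{L},i\ne k}\epsilon_{ki}$, discarding the second- and higher-order terms $\tfrac{1}{(1-\bar g_k)^2}\sum_{i<j}\epsilon_{ki}\epsilon_{kj}-\cdots$. Multiplying back by $(1-\bar g_k)^{L-1}$ and cancelling one power of $1-\bar g_k$ against the $\tfrac{1}{1-\bar g_k}$ in front of the sum yields $\prod_{i\in\mathcal{L},i\ne k}(1-g_{ki})\approx(1-\bar g_k)^{L-1}-(1-\bar g_k)^{L-2}\sum_{i\in\mathcal{L},i\ne k}(g_{ki}-\bar g_k)$, and substituting into \eqref{eq:r_approx} produces exactly the stated expression.

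The main (and essentially the only non-mechanical) obstacle is justifying the truncation of the product — controlling the discarded higher-order terms $\sum_{i<j}\epsilon_{ki}\epsilon_{kj}/(1-\bar g_k)^2+\cdots$ relative to the retained linear correction $(1-\bar g_k)^{L-2}\sum_{i}(g_{ki}-\bar g_k)$. Unlike the Maclaurin step in the previous proposition, the retained correction is itself a sum of small quantities, so ``negligible'' must be understood relative to this sum rather than to $1$; a uniform bound of the form $|g_{ki}-\bar g_k|\le g_{ki}+\bar g_k<0.2$, together with the low-interference regime assumption $\frac{\zeta_{kk}2^{B/Q}}{\zeta_{ki}\tau d}>10$, suffices, since each additional order of the product then contributes a factor of order $0.2$ relative to the previous one, making the first-order truncation accurate. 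Everything else is bookkeeping with the powers of $1-\bar g_k$.
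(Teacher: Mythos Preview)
Your proposal is correct and takes essentially the same approach as the paper: both linearize the product $\prod_{i\in\mathcal{L},i\ne k}(1-g_{ki})$ to first order about the point $(\bar g_k,\ldots,\bar g_k)$, the paper via a multivariate Taylor expansion and you via the algebraically equivalent factorization $(1-g_{ki})=(1-\bar g_k)\bigl(1-\tfrac{g_{ki}-\bar g_k}{1-\bar g_k}\bigr)$ followed by first-order truncation of the resulting product. Your discussion of the neglected higher-order terms is actually more careful than the paper's, which simply asserts they are ``very small compared with the other elements.''
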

\begin{proof}
We focus on the product $\prod\limits_{i \in \mathcal{L},i \ne k} (1-g_{ki})$. Its Taylor series to first order about the point $(\bar{g}_{k},\dots,\bar{g}_{k} )$ can be written as
\begin{align}
(1-\bar{g}_{k})^{L-1}- (1-\bar{g}_{k})^{L-2}  \sum\limits_{i \in \mathcal{L},i \ne k }  \left( g_{ki}-\bar{g}_{k} \right).
\end{align}
To prove this latter result, we start by a simple example and then we provide the result for the general case.
We consider a simple case where the product function is $f(g_{k1},g_{k2})=(1-g_{k1})(1-g_{k2})$, i.e. it corresponds to $L=3$ and $k=3$, then we can write its Taylor series to first order about $(\bar{g}_{k},\bar{g}_{k} )$ as
\begin{align}
f(g_{k1},g_{k2})\nonumber & \approx
f(\bar{g}_{k}, \bar{g}_{k})+ (g_{k1}-\bar{g}_{k}) \frac{\partial f}{\partial g_{k1} } |_{(\bar{g}_{k}, \bar{g}_{k})} + (g_{k2}-\bar{g}_{k}) \frac{\partial f}{\partial g_{k2} } |_{(\bar{g}_{k}, \bar{g}_{k})}  \\ &=(1-\bar{g}_{k}) (1-\bar{g}_{k})-(g_{k1}-\bar{g}_{k})(1-\bar{g}_{k})- (g_{k2}-\bar{g}_{k})(1-\bar{g}_{k})  \nonumber \\ &= (1-\bar{g}_k)^{3-1}-(1-\bar{g}_{k})^{3-2}(g_{k1}-\bar{g}_{k}+g_{k2}-\bar{g}_{k}).
\end{align} 
Note that the higher order elements of the above expansion are removed since they are very small compared with the other elements.
The obtained result can be easily generalized as 
\begin{align}
\prod\limits_{i \in \mathcal{L},i \ne k} (1-g_{ki}) \approx (1-\bar{g}_k)^{L-1}-(1-\bar{g}_{k})^{L-2}\sum\limits_{ i \in \mathcal{L},i \ne k}(g_{ki}-\bar{g}_{k}). \label{eq:f_approx_exp}
\end{align}
By replacing the function $\prod\limits_{i \in \mathcal{L},i \ne k} (1-g_{ki})$ with its expansion provided in $\eqref{eq:f_approx_exp}$ and recalling that the approximated rate expression is $(1-L\theta) dR e^{-\frac{ \sigma^2 \tau }{\alpha_{kk}}}\prod\limits_{i \in \mathcal{L},i \ne k} (1-g_{ki}) $, we eventually get 
\begin{align}
r_k \approx \left(1-L\theta\right) dR e^{-\frac{ \sigma^2 \tau }{\alpha_{kk}}} \left[ (1-\bar{g}_k)^{L-1}-(1-\bar{g}_{k})^{L-2}\sum\limits_{ i \in \mathcal{L},i \ne k } (g_{ki}-\bar{g}_k) \right].
\end{align}
Therefore, the desired result holds.
\end{proof}
Based on the above, we now provide the main result of this subsection, that is the stability region achieved by the approximate policy $\Delta^{\text{A}}$.
\begin{theorem}
The approximate policy $\Delta^{\text{A}}$ achieves at least a fraction $\beta_\text{A}$ ($\le 1$) of the stability region achieved by the optimal policy $\Delta^{\text{*}}_\text{GI}\,$, where $\beta_\text{A}$ is given by
\begin{align}
\beta_\text{A}=\frac{1+\, \underset{\mathcal{L}_\text{A}\in \bm{\mathcal{L}} }{\min}    \left\{  \underset{k \in \mathcal{L}_\text{A}}{\min} \left\{ -(1-\bar{g}_k)^{-1}   \sum\limits_{i \in \mathcal{L}_\text{A}, i \ne k} (g_{ki}-\bar{g}_k) \right\}    \right\}  }{1+\, \underset{\mathcal{L} \in \bm{\mathcal{L}} }{\max}  \left\{ \underset{k \in\mathcal{L}}{\max }  \left\{ - (1-\bar{g}_k)^{-1}   \sum\limits_{i \in \mathcal{L}, i \ne k} (g_{ki}-\bar{g}_k) \right\} \right\}  }.
\end{align} 
\label{th:beta_A}
\end{theorem}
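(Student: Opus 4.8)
The plan is to adapt the two-stage template of the proof of Theorem~\ref{th:BtoB'}: first establish the $\beta_\text{A}$-approximation property $(\mathbf{r}\cdot\mathbf{q})^{(\Delta^{\text{*}}_\text{GI})}\le \beta_\text{A}^{-1}(\mathbf{r}\cdot\mathbf{q})^{(\Delta^{\text{A}})}$ for every queue state $\mathbf{q}$, and then feed this bound into a quadratic Lyapunov-drift computation to conclude that $\Delta^{\text{A}}$ stabilises every arrival rate vector of the form $\beta_\text{A}\mathbf{a}'$ with $\mathbf{a}'$ interior to $\Lambda_\text{GI}$, i.e.\ that $\Delta^{\text{A}}$ achieves at least the region $\beta_\text{A}\Lambda_\text{GI}$.

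\textbf{Step 1 (objective comparison).} Using Proposition~\ref{pr:approximation2}, I would write, for an active pair $k$ in a scheduled subset $\mathcal{L}$ of cardinality $L$, $r_k\approx \phi_k(L)\,c_k(\mathcal{L})$, where $c_k(\mathcal{L})\coloneqq 1-(1-\bar g_k)^{-1}\sum_{i\in\mathcal{L},i\ne k}(g_{ki}-\bar g_k)$; indeed $\phi_k(L)=(1-L\theta)dR\,e^{-\sigma^2\tau/\alpha_{kk}}(1-\bar g_k)^{L-1}$, so $\phi_k(L)c_k(\mathcal{L})$ is precisely the right-hand side of Proposition~\ref{pr:approximation2}. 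Setting $c_{\min}\coloneqq 1+\min_{\mathcal{L}\in\bm{\mathcal{L}}}\min_{k\in\mathcal{L}}\{-(1-\bar g_k)^{-1}\sum_{i\in\mathcal{L},i\ne k}(g_{ki}-\bar g_k)\}$ and $c_{\max}$ analogously with the two minima replaced by maxima, one has $c_{\min}\le c_k(\mathcal{L})\le c_{\max}$ for every $\mathcal{L}$ and every $k\in\mathcal{L}$, and $\beta_\text{A}=c_{\min}/c_{\max}$. Let $\mathcal{L}_\text{A}$ (resp.\ $\mathcal{L}$) be the subset selected at state $\mathbf{q}$ by $\Delta^{\text{A}}$ (resp.\ $\Delta^{\text{*}}_\text{GI}$), with $L_\text{A}=|\mathcal{L}_\text{A}|$. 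Since all weights $q_k$ are nonnegative,
\begin{align}
(\mathbf{r}\cdot\mathbf{q})^{(\Delta^{\text{A}})}=\sum_{k\in\mathcal{L}_\text{A}}r_k q_k\ge c_{\min}\sum_{k\in\mathcal{L}_\text{A}}\phi_k(L_\text{A})q_k=c_{\min}\,(\boldsymbol{\phi}\cdot\mathbf{q})^{(\Delta^{\text{A}})}.
\end{align}
Because $\Delta^{\text{A}}$ maximises $\boldsymbol{\phi}(\mathbf{s})\cdot\mathbf{q}$ over $\mathbf{s}\in\mathcal{S}$, its objective dominates the value obtained by substituting the decision $\mathcal{L}$, namely $(\boldsymbol{\phi}\cdot\mathbf{q})^{(\Delta^{\text{A}})}\ge\sum_{k\in\mathcal{L}}\phi_k(|\mathcal{L}|)q_k\ge c_{\max}^{-1}\sum_{k\in\mathcal{L}}r_k q_k=c_{\max}^{-1}(\mathbf{r}\cdot\mathbf{q})^{(\Delta^{\text{*}}_\text{GI})}$. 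Chaining the two bounds yields $(\mathbf{r}\cdot\mathbf{q})^{(\Delta^{\text{A}})}\ge(c_{\min}/c_{\max})(\mathbf{r}\cdot\mathbf{q})^{(\Delta^{\text{*}}_\text{GI})}=\beta_\text{A}(\mathbf{r}\cdot\mathbf{q})^{(\Delta^{\text{*}}_\text{GI})}$, which is the claimed approximation inequality.

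\textbf{Step 2 (drift argument).} I would then repeat, essentially verbatim, Step~3 of the proof of Theorem~\ref{th:BtoB'}: take the quadratic Lyapunov function $\mathit{Ly}(\mathbf{q})=\tfrac12(\mathbf{q}\cdot\mathbf{q})$, bound its conditional drift under $\Delta^{\text{A}}$ by $\mathit{Dr}^{(\Delta^{\text{A}})}(\mathbf{q})\le E-\sum_{k}q_k[\mathbb{E}\{B_k^{(\Delta^{\text{A}})}(t)\mid\mathbf{q}\}-a_k]$, note that $\sum_k q_k\,\mathbb{E}\{B_k^{(\Delta^{\text{A}})}(t)\mid\mathbf{q}\}=(\mathbf{r}\cdot\mathbf{q})^{(\Delta^{\text{A}})}\ge\beta_\text{A}(\mathbf{r}\cdot\mathbf{q})^{(\Delta^{\text{*}}_\text{GI})}\ge\beta_\text{A}(\mathbf{r}\cdot\mathbf{q})^{(\Delta)}$ for any alternative (possibly randomised) decision $\Delta$ by Max-Weight optimality of $\Delta^{\text{*}}_\text{GI}$, and, writing $a_k=\beta_\text{A}a_k'$, obtain $\mathit{Dr}^{(\Delta^{\text{A}})}(\mathbf{q})\le E-\beta_\text{A}\sum_k q_k[\mathbb{E}\{B_k^{(\Delta)}(t)\mid\mathbf{q}\}-a_k']$. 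For $\mathbf{a}'$ interior to $\Lambda_\text{GI}$, choosing a stationary randomised $\Delta$ with $\mathbb{E}\{B_k^{(\Delta)}(t)\}\ge a_k'+\epsilon$ for all $k$ gives a negative drift $\mathit{Dr}^{(\Delta^{\text{A}})}(\mathbf{q})\le E-\beta_\text{A}\epsilon\,\|\mathbf{q}\|_1$ outside a bounded set; telescoping over $t$ and applying the Foster--Lyapunov criterion (exactly as for $\Delta^{\text{*}}$ in Lemma~\ref{le:OptSchedPolicy} and for $\Delta^{\text{*}}_{B'}$ in Theorem~\ref{th:BtoB'}) shows that $\Delta^{\text{A}}$ stabilises $\mathbf{a}=\beta_\text{A}\mathbf{a}'$, hence that it achieves at least $\beta_\text{A}\Lambda_\text{GI}$.

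The main obstacle is Step~1, and inside it the careful bookkeeping of the approximations: one must either treat the ``$\approx$'' of Propositions (the Beta-sum and Maclaurin/Taylor approximations) as exact, or carry an explicit error term through the chain of inequalities and argue it is dominated by $\beta_\text{A}\epsilon$ in the drift; and one must keep track of the fact that $\bar g_k$ is averaged over all cardinality-$L$ subsets containing $k$ (so that $\phi_k(L)$ depends only on $k$ and $L$, which is exactly what validates the display above and Algorithm~\ref{al:schedecis_G}), whereas $c_k(\mathcal{L})$ depends on the realised subset $\mathcal{L}$ --- which is precisely why the gap between $c_{\min}$ and $c_{\max}$, rather than a tighter quantity, governs the achievable fraction $\beta_\text{A}$.
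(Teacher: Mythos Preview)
Your proposal is correct and follows essentially the same approach as the paper's own proof. The paper writes out the two weighted sums as $o_1+p_1$ and $o_2+p_2$ (main $\phi$-term plus correction), bounds $p_1\ge m_1 o_1$ and $p_2\le m_2 o_2$ via the same min/max over $\bm{\mathcal{L}}$, and uses $o_1\ge o_2$ from the Max-Weight property of $\Delta^{\text{A}}$ with respect to $\boldsymbol{\phi}$; your factorisation $r_k\approx\phi_k(L)\,c_k(\mathcal{L})$ with $c_{\min}\le c_k(\mathcal{L})\le c_{\max}$ packages the same three ingredients more compactly, and the drift step is identical to the paper's (which indeed invokes Step~3 of the proof of Theorem~\ref{th:BtoB'}).
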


\begin{proof}
Using Proposition \ref{pr:approximation2}, under policy $\Delta^\text{A}$ the dot product $\mathbf{r} \cdot \mathbf{q}$ can be expressed as 
\begin{align}
(1-L_\text{A} \theta) d R  \left[  \sum\limits_{k \in \mathcal{L}_\text{A}}  e^{-\frac{ \sigma^2 \tau }{\alpha_{kk}}}     (1-\bar{g}_k)^{L_\text{A}-1} q_k- \sum\limits_{k \in \mathcal{L}_\text{A}} e^{-\frac{ \sigma^2 \tau }{\alpha_{kk}}}   (1-\bar{g}_k)^{L_\text{A}-2} q_k \sum\limits_{i \in \mathcal{L}_\text{A},i \ne k}(g_{ki}-\bar{g}_k) \right]  ,
\end{align}
whereas under $\Delta^{\text{*}}_\text{GI}$ this dot product is given by
\begin{align}
(1-L \theta) d R \left[  \sum\limits_{k \in \mathcal{L} }  e^{-\frac{ \sigma^2 \tau }{\alpha_{kk}}}      (1-\bar{g}_k)^{L-1} q_k- \sum\limits_{k \in \mathcal{L}}  e^{-\frac{ \sigma^2 \tau }{\alpha_{kk}}}      (1-\bar{g}_k)^{L-2} q_k \sum\limits_{i \in \mathcal{L}, i \ne k}(g_{ki}-\bar{g}_k)  \right].
\end{align}
Since the approximate policy $\Delta^\text{A}$ schedules the subset $\mathcal{L}_\text{A}$ that maximizes the dot product $\boldsymbol{\phi}  \cdot \mathbf{q}$, and recalling that $\phi_k(l)=(1-l \theta) d R e^{-\frac{ \sigma^2 \tau }{\alpha_{kk}}}   (1-\bar{g}_k)^{l-1}$, it yields
\begin{align}
(1-L_\text{A} \theta) d R \sum\limits_{k \in \mathcal{L}_\text{A} }  e^{-\frac{ \sigma^2 \tau }{\alpha_{kk}}}   (1-\bar{g}_k)^{L_\text{A} -1} q_k  \ge   (1-L \theta)  d R \sum\limits_{k \in \mathcal{L} }   e^{-\frac{ \sigma^2 \tau }{\alpha_{kk}}}   (1-\bar{g}_k)^{L-1} q_k. \label{eq:ineq1}
\end{align}
Similarly, using the definition of the optimal policy $\Delta^{\text{*}}_\text{GI}$ under which the dot product $\mathbf{r}  \cdot \mathbf{q}$ is maximized, the following inequality holds
\begin{align}
\nonumber  &  (1-L \theta)  d R \left[   \sum\limits_{k \in \mathcal{L} }  e^{-\frac{ \sigma^2 \tau }{\alpha_{kk}}}      (1-\bar{g}_k)^{L-1} q_k- \sum\limits_{k \in \mathcal{L} }  e^{-\frac{ \sigma^2 \tau }{\alpha_{kk}}}      (1-\bar{g}_k)^{L-2} q_k \sum\limits_{i \in \mathcal{L}, i \ne k}(g_{ki}-\bar{g}_k)  \right] \, \ge \\    &   (1-L_\text{A} \theta)  d R \left[   \sum\limits_{k \in \mathcal{L}_\text{A}}  e^{-\frac{ \sigma^2 \tau }{\alpha_{kk}}}     (1-\bar{g}_k)^{L_\text{A}-1} q_k- \sum\limits_{k \in \mathcal{L}_\text{A}} e^{-\frac{ \sigma^2 \tau }{\alpha_{kk}}}   (1-\bar{g}_k)^{L_\text{A}-2} q_k \sum\limits_{i \in \mathcal{L}_\text{A}, i \ne k}(g_{ki}-\bar{g}_k)   \right] .	 \label{eq:ineq2}
\end{align}
These last two inequalities, in $\eqref{eq:ineq1}$ and $\eqref{eq:ineq2}$, lead us to the simple observation
\begin{align}
\nonumber   & - (1-L\theta) d R \left[ \sum\limits_{k \in \mathcal{L}}  e^{-\frac{ \sigma^2 \tau }{\alpha_{kk}}} (1-\bar{g}_k)^{L-2} q_k \sum\limits_{i \in \mathcal{L}, i \ne k} (g_{ki}-\bar{g}_k) \right]  \,  \ge \\ &
- (1-L_\text{A} \theta) d R \left[  \sum\limits_{k \in \mathcal{L}_\text{A} }  e^{-\frac{ \sigma^2 \tau }{\alpha_{kk}}}  (1-\bar{g}_k)^{L_\text{A}-2} q_k  \sum\limits_{i \in \mathcal{L}_\text{A} , i \ne k} (g_{ki}-\bar{g}_k) \right] .  
\end{align} 
For the rest of this proof, we define 
\begin{align}
\nonumber o_1= (1-L_\text{A} \theta) d R \sum\limits_{k \in \mathcal{L}_\text{A} } e^{-\frac{ \sigma^2 \tau }{\alpha_{kk}}}   (1-\bar{g}_k)^{L_\text{A}-1} q_k,
\end{align}
\begin{align}
\nonumber p_1= - (1-L_\text{A} \theta) d R \left[\sum\limits_{k \in \mathcal{L}_\text{A}}  e^{-\frac{ \sigma^2 \tau }{\alpha_{kk}}}   (1-\bar{g}_k)^{L_\text{A}-2} q_k  \sum\limits_{i \in \mathcal{L}_\text{A}, i \ne k}  (g_{ki}-\bar{g}_k) \right],  
\end{align}
\begin{align}
 \nonumber o_2= (1-L \theta) d R \sum\limits_{k \in \mathcal{L} } e^{-\frac{ \sigma^2 \tau }{\alpha_{kk}}}  (1-\bar{g}_k)^{L-1} q_k,
\end{align}
\begin{align}
\nonumber p_2= - (1-L \theta) d R \left[ \sum\limits_{k \in \mathcal{L}}  e^{-\frac{ \sigma^2 \tau }{\alpha_{kk}}}  (1-\bar{g}_k)^{L-2} q_k \sum\limits_{i \in \mathcal{L}, i \ne k} (g_{ki}-\bar{g}_k) \right]. 
\end{align}
We can easily notice that $p_1$ and $p_2$ can be rewritten, respectively, as
\begin{align}
p_1= -(1-L_\text{A} \theta) d R \left[ \sum\limits_{k \in \mathcal{L}_\text{A}}   e^{-\frac{ \sigma^2 \tau }{\alpha_{kk}}}   (1-\bar{g}_k)^{-1} (1-\bar{g}_k)^{L_\text{A}-1} q_k \sum\limits_{i \in \mathcal{L}_\text{A}, i \ne k}  (g_{ki}-\bar{g}_k) \right], 
\end{align}
\begin{align}
p_2= -(1-L \theta) d R \left[ \sum\limits_{k \in \mathcal{L} }   e^{-\frac{ \sigma^2 \tau }{\alpha_{kk}}}   (1-\bar{g}_k)^{-1} (1-\bar{g}_k)^{L-1} q_k \sum\limits_{i \in \mathcal{L}, i \ne k} (g_{ki}-\bar{g}_k) \right] . 
\end{align}

We next point out two simple but important properties that will help us complete the proof.\\ 
{\footnotesize$\bullet$} For any policy $\Delta_2$ that approximates any policy $\Delta_1$ to a fraction $\beta$ ($\le 1$), we have $(\mathbf{r} \cdot \mathbf{q})^{(\Delta_1)} \le \beta^{-1}(\mathbf{r} \cdot \mathbf{q})^{(\Delta_2)}$. If w.r.t. the approximate policy ($\Delta_2$) there exists a scheduling policy $\Delta_{22}$ such that $(\mathbf{r} \cdot \mathbf{q})^{(\Delta_{22})} \le (\mathbf{r} \cdot \mathbf{q})^{(\Delta_2)}$, then we can derive a fraction based on $(\mathbf{r} \cdot \mathbf{q})^{(\Delta_{22})}$ instead of $(\mathbf{r} \cdot \mathbf{q})^{(\Delta_2)}$. We can easily notice that this fraction is lower than or equal to $\beta$, therefore, w.r.t. the stability region achieved by $\Delta_1$, $\Delta_2$ reaches a fraction larger than that achieved by $\Delta_{22}$.\\
{\footnotesize$\bullet$} On the other side, if w.r.t. the approximated policy ($\Delta_1$) there exists a scheduling policy $\Delta_{11}$ such that $(\mathbf{r} \cdot \mathbf{q})^{(\Delta_1)} \le (\mathbf{r} \cdot \mathbf{q})^{(\Delta_{11})} $, then we can derive an achievable fraction based on $(\mathbf{r} \cdot \mathbf{q})^{(\Delta_{11})}$, and this fraction will be lower than or equal to $\beta$. The key idea here is that sometimes it is easier to find the fraction using $\Delta_{22}$ (resp., $\Delta_{11}$) instead of $\Delta_2$ (resp., $\Delta_1$ ), but this will be to the detriment of finding an achievable fraction that is, in general, lower than the exact solution.

To proceed further, we consider the extreme case that corresponds to define
\begin{align}
\nonumber p_{1\text{e}}= \underset{\mathcal{L}_\text{A} \in \bm{\mathcal{L}} }{\min}    \left\{    \underset{k \in \mathcal{L}_\text{A}}{\min} \left\{-    (1-\bar{g}_k)^{-1}   \sum\limits_{i \in \mathcal{L}_\text{A}, i \ne k} (g_{ki}-\bar{g}_k)  \right\}   \right\}  (1-L_\text{A}\theta) d R  \sum\limits_{k \in \mathcal{L}_\text{A}} e^{-\frac{ \sigma^2 \tau }{\alpha_{kk}}}  (1-\bar{g}_k)^{L_\text{A}-1}  q_k,
\end{align}
\begin{align}
\nonumber p_{2\text{e}}=  \underset{\mathcal{L} \in \bm{\mathcal{L}} }{\max}  \left\{ \underset{k \in \mathcal{L}}{\max } \left\{ - (1-\bar{g}_k)^{-1}   \sum\limits_{i \in \mathcal{L}, i \ne k} (g_{ki}-\bar{g}_k) \right\} \right\}   (1-L\theta) d R \sum\limits_{k \in \mathcal{L}} e^{-\frac{ \sigma^2 \tau }{\alpha_{kk}}}  (1-\bar{g}_k)^{L-1}  q_k.
\end{align}
It is obvious that $p_1 \ge p_{1\text{e}}$ and $p_2 \le p_{2\text{e}}$.
Let us define $m_1$ and $m_2$ as 
\begin{align}
\nonumber m_1=  \underset{\mathcal{L}_\text{A} \in \bm{\mathcal{L}} }{\min}    \left\{ \underset{k \in \mathcal{L}_\text{A} }{\min} \left\{ -   (1-\bar{g}_k)^{-1}   \sum\limits_{i \in \mathcal{L}_\text{A} , i \ne k} (g_{ki}-\bar{g}_k) \right\}  \right\},  
\end{align}
\begin{align}
\nonumber m_2= \underset{\mathcal{L} \in \bm{\mathcal{L} } } {\max}  \left\{ \underset{k \in \mathcal{L} }  {\max }  \left\{ - (1-\bar{g}_k)^{-1}   \sum\limits_{i \in \mathcal{L}, i \ne k} (g_{ki}-\bar{g}_k) \right\} \right\}. 
\end{align}
Then, it is easy to see that $p_{1\text{e}}=m_1 o_1$ and $p_{2\text{e}}=m_2 o_2$. This yields the following
\begin{align}
(\mathbf{r} \cdot \mathbf{q} \, )^{(\Delta^{\text{A}})} = o_1 + p_1 \ge o_1+p_{1e} = o_1+m_1o_1, \label{eq:delta_A}
\end{align}
\begin{align}
(\mathbf{r} \cdot \mathbf{q} \, )^{(\Delta^{\text{*}}_\text{GI})}  = o_2+p_2 \le o_2+ p_{2e} =o_2+m_2o_2. \label{eq:delta_GI}
\end{align} 
As mentioned earlier, $\Delta^\text{A}$ approximates $\Delta_\text{GI}^{\text{*}}$ to a fraction $\beta$ if the following inequality holds
\begin{align}
(\mathbf{r} \cdot \mathbf{q} \, )^{(\Delta^{\text{*}}_\text{GI})} \le \beta^{-1} (\mathbf{r} \cdot \mathbf{q} \, )^{(\Delta^{\text{A}})}.
\end{align}
In our case, it is difficult to derive $\beta$, however we can compute a fraction $\beta_\text{A} \le \beta$. In detail, based on the two properties  about the achievable fraction given in the above paragraph, and combining \eqref{eq:delta_A} with \eqref{eq:delta_GI}, the problem turns out to find $\beta_\text{A}$ such that
\begin{align}
o_2(1+m_2) \le \beta_\text{A}^{-1} o_1(1+m_1).  \label{eq:2_r2_r1}
\end{align}  
Using the fact that $o_2 \le o_1$, which was shown at the beginning of this proof, it suffices to have $\beta_\text{A}^{-1} \ge \frac{1+m_2}{1+m_1}$, or equivalently $\beta_\text{A} \le \frac{1+m_1}{1+m_2}$, to satisfy the inequality in $\eqref{eq:2_r2_r1}$. Let us take $\beta_\text{A}=\frac{1+m_1}{1+m_2}$. 
Now, to complete the proof, we use a similar approach to that used in Step 3 of the proof for Theorem \ref{th:BtoB'}. Specifically, the drift under $\Delta^{\text{A}}$ can be expressed as 
\begin{align}
Dr^{(\Delta^{\text{A}})}(\mathbf{q}(t)) \le   E - \sum_{k=1}^{N} q_k(t)  \left[ \mathbb{E} \left\{  B^{(\Delta^{\text{A}})}_k(t) \mid \mathbf{q}(t) \right\} - a_k \right], \label{eq:DrB'_g} 
\end{align}
for some finite constant $E$. Hence, using $\eqref{eq:2_r2_r1}$ and the fact that $\beta_\text{A} \le \beta$, we can write
\begin{align}
(\mathbf{r} \cdot \mathbf{q} \, )^{(\Delta^{\text{*}}_\text{GI})} \le \beta_\text{A}^{-1} (\mathbf{r} \cdot \mathbf{q} \, )^{(\Delta^{\text{A}})},
\end{align}
or equivalently
\begin{align}
\sum_{k=1}^N q_k(t) \mathbb{E} \left\{  B^{(\Delta^{\text{A}})}_k(t) \mid \mathbf{q}(t) \right\}  \ge \sum_{k=1}^N q_k(t) \beta_\text{A} \, \mathbb{E} \left\{  B^{(\Delta^{\text{*}}_\text{GI})}_k(t) \mid \mathbf{q}(t)  \right\}.
\end{align}
Plugging this directly into $\eqref{eq:DrB'_g}$ yields
\begin{align}
Dr^{(\Delta^{\text{A}})}(\mathbf{q}(t))  \le   E - \beta_\text{A} \sum_{k=1}^N q_k(t) \left[ \mathbb{E} \left\{  B^{(\Delta^{\text{*}}_\text{GI})}_k(t) \mid \mathbf{q}(t) \right\} - a^\prime_k  \right], 
\end{align}
in which $a_k=\beta_\text{A}  a_k^\prime$.
After some manipulations, which are very similar to those used in the proof of Theorem \ref{th:BtoB'}, we eventually obtain
\begin{align}
\limsup\limits_{T \to \infty} \frac{1}{T} \sum_{t=0}^{T-1} \sum_{k=1}^{N} \mathbb{E} \left\{ q_k(t) \right\} \le \frac{E}{\epsilon_{\text{max}}(\mathbf{a}^\prime)},
\end{align}
for some $\epsilon_{\text{max}}(\mathbf{a}^\prime)$ and $T$.
It follows that $\Delta^{\text{A}}$ stabilizes any arrival rate vector $\mathbf{a}=\beta_\text{A} \mathbf{a}^\prime$. Hence, since $\mathbf{a}^\prime$ can be any point in the stability region of
$\Delta^{\text{*}}_\text{GI}$, we can state that $\Delta^{\text{A}}$ stabilizes any arrival rate vector interior to fraction $\beta_\text{A}$ of the stability region of  $\Delta^{\text{*}}_\text{GI}$.
A last point to note is that the term “at least” in the theorem is justified by the fact that $\beta_\text{A}$ is lower than or equal to the exact solution ($\beta$). Therefore, the desired statement follows. 
\end{proof}

\subsection{Compare the Imperfect Case with the Perfect Case in Terms of Stability}
At the very beginning of this section, we showed that policy 
$\Delta_{\text{GP}}^\text{*}$ achieves the system stability region in the perfect case. 
Let us denote by $\mathcal{L}_\text{P}$ the subset of scheduled pairs using $\Delta_{\text{GP}}^\text{*}$ and by $L_\text{P}$ the cardinality of this subset. On the other side, for the imperfect case, we adopt the same notation as before, i.e. the subset of scheduled users and its cardinality are represented by $\mathcal{L}$ and $L$, respectively. 
Here, an essential parameter to investigate is the fraction the stability region the imperfect case achieves compared with the stability region of the perfect case. This fraction is captured in the following theorem.
\begin{theorem}
The stability region of the imperfect case reaches at least a fraction $\beta_\text{P}$ of the stability region achieved in the perfect case, where 
\begin{align}
\beta_\text{P} = \underset{\mathcal{L}_\text{P} \in \bm{\mathcal{L}} }{\min} \left\{ \underset{k \in \mathcal{L}_\text{P} }{\min} \, \left\{ \prod\limits_{i \in \mathcal{L}_\text{P} ,i\ne k} (1-g_{ki})   \right\} \right\} .
\end{align}
We recall that $g_{ki}=\left( \frac{ \zeta_{kk} 2^{\frac{B}{Q}} }{ \zeta_{ki} \tau d} +1 \right)^{-1}$.
\end{theorem}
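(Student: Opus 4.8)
The plan is to compare, for every queue-length vector $\mathbf{q}$, the weighted rate achieved by the optimal imperfect-case policy $\Delta^{\text{*}}_\text{GI}$ against the weighted rate achieved by the optimal perfect-case policy $\Delta^{\text{*}}_\text{GP}$, and then feed the resulting multiplicative comparison into the standard Lyapunov-drift argument (exactly as in Step~3 of the proof of Theorem~\ref{th:BtoB'}). First I would recall the two rate expressions: from \eqref{eq:r_k_G} (or, more cleanly, from the exact rewriting $r_k = (1-L\theta)dR e^{-\sigma^2\tau/\alpha_{kk}}\prod_{i\in\mathcal{L},i\ne k}(1-g_{ki})^Q\,_2F_1(\breve b,Q;\breve a+\breve b;g_{ki})$ obtained inside the proof of Proposition with $g_{ki}=(\frac{\zeta_{kk}2^{B/Q}}{\zeta_{ki}\tau d}+1)^{-1}$), and from \eqref{eq:mu_k_G} that $\mu_k=(1-L\theta)dR e^{-\sigma^2\tau/\alpha_{kk}}$. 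Hence for any fixed subset $\mathcal{L}$ and any $k\in\mathcal{L}$ we have $r_k = \mu_k \prod_{i\in\mathcal{L},i\ne k}(1-g_{ki})$ when using the approximation \eqref{eq:r_approx} (or, if one prefers exactness, $r_k/\mu_k$ is still a product of terms each of which I would bound below by $(1-g_{ki})$ — but the cleanest route is to work with the approximate expression \eqref{eq:r_approx}, which is the form the theorem's $\beta_\text{P}$ is stated in). The key inequality is then
\begin{align}
\frac{r_k}{\mu_k} = \prod_{i\in\mathcal{L},i\ne k}(1-g_{ki}) \;\ge\; \underset{\mathcal{L}_\text{P}\in\bm{\mathcal{L}}}{\min}\left\{\underset{k\in\mathcal{L}_\text{P}}{\min}\left\{\prod_{i\in\mathcal{L}_\text{P},i\ne k}(1-g_{ki})\right\}\right\} = \beta_\text{P},
\end{align}
valid for \emph{every} admissible $\mathcal{L}$ and $k$, since $\beta_\text{P}$ is by definition the global minimum of this product over all subsets and all active users.

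Next I would run the scheduling comparison. Fix $\mathbf{q}$ and let $\mathcal{L}_\text{P}=\mathcal{L}_\text{P}(\mathbf{q})$ be the subset chosen by $\Delta^{\text{*}}_\text{GP}$. Since $\Delta^{\text{*}}_\text{GI}$ maximizes $\mathbf{r}\cdot\mathbf{q}$ over all $\mathbf{s}\in\mathcal{S}$, evaluating its objective at the particular decision $\mathcal{L}_\text{P}$ gives
\begin{align}
(\mathbf{r}\cdot\mathbf{q})^{(\Delta^{\text{*}}_\text{GI})} \;\ge\; \sum_{k\in\mathcal{L}_\text{P}} r_k\, q_k \;\ge\; \beta_\text{P}\sum_{k\in\mathcal{L}_\text{P}}\mu_k\, q_k \;=\; \beta_\text{P}\,(\boldsymbol{\mu}\cdot\mathbf{q})^{(\Delta^{\text{*}}_\text{GP})},
\end{align}
where the middle step uses the pointwise bound above with $\mathcal{L}=\mathcal{L}_\text{P}$. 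In expectation over the (possible) randomness this reads $\sum_k q_k\,\mathbb{E}\{B^{(\Delta^{\text{*}}_\text{GI})}_k\mid\mathbf{q}\}\ge \beta_\text{P}\sum_k q_k\,\mathbb{E}\{B^{(\Delta^{\text{*}}_\text{GP})}_k\mid\mathbf{q}\}$.

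Finally I would close with the Lyapunov argument, mirroring Step~3 of Theorem~\ref{th:BtoB'} verbatim: with the quadratic Lyapunov function $\mathit{Ly}(\mathbf{q})=\tfrac12\sum_k q_k^2$, the drift under $\Delta^{\text{*}}_\text{GI}$ satisfies $Dr^{(\Delta^{\text{*}}_\text{GI})}(\mathbf{q}(t))\le E-\sum_k q_k(t)[\mathbb{E}\{B^{(\Delta^{\text{*}}_\text{GI})}_k\mid\mathbf{q}(t)\}-a_k]$ for a finite constant $E$ (which exists because $A_k<A_\text{max}$ and $B_k<B_\text{max}$); substituting the comparison inequality and writing $a_k=\beta_\text{P}a'_k$ gives $Dr^{(\Delta^{\text{*}}_\text{GI})}(\mathbf{q}(t))\le E-\beta_\text{P}\sum_k q_k(t)[\mathbb{E}\{B^{(\Delta^{\text{*}}_\text{GP})}_k\mid\mathbf{q}(t)\}-a'_k]$; and since $\Delta^{\text{*}}_\text{GP}$ is throughput-optimal for the perfect case there is $\epsilon_\text{max}(\mathbf{a}')>0$ whenever $\mathbf{a}'$ is interior to $\Lambda_\text{GP}$, yielding negative drift outside a bounded set and hence, by the usual telescoping/averaging, $\limsup_{T}\frac1T\sum_{t<T}\sum_k\mathbb{E}\{q_k(t)\}\le E/\epsilon_\text{max}(\mathbf{a}')<\infty$. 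Thus $\Delta^{\text{*}}_\text{GI}$ stabilizes every $\mathbf{a}=\beta_\text{P}\mathbf{a}'$ with $\mathbf{a}'$ interior to $\Lambda_\text{GP}$, i.e. the imperfect-case region contains $\beta_\text{P}\Lambda_\text{GP}$, which is the claim.

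The main obstacle I anticipate is not the drift machinery (which is routine and already carried out once in the paper) but justifying that the \emph{approximate} rate identity $r_k=\mu_k\prod_{i\ne k}(1-g_{ki})$ can legitimately replace the exact $r_k$ when defining and bounding $\beta_\text{P}$: one must either (i) adopt the standing approximation regime $g_{ki}<0.1$ already invoked for Proposition with $g_{ki}$, so that \eqref{eq:r_approx} holds to the stated accuracy, or (ii) verify directly that the exact ratio $r_k/\mu_k=\prod_{i\ne k}(1-g_{ki})^Q\,_2F_1(\breve b,Q;\breve a+\breve b;g_{ki})$ is bounded below by $\prod_{i\ne k}(1-g_{ki})$, which would require checking that $(1-g)^{Q-1}\,_2F_1(\breve b,Q;\breve a+\breve b;g)\ge 1$ for $g\in[0,1)$ — plausible but needing a monotonicity/positivity argument on the hypergeometric series. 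I would flag this and proceed under regime (i), consistent with the rest of Section~\ref{sec:generalcase}.
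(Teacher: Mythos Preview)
Your proposal is correct and follows essentially the same approach as the paper: both evaluate the imperfect-case Max-Weight at the perfect-case optimal subset $\mathcal{L}_\text{P}$, use the pointwise bound $r_k\ge\beta_\text{P}\mu_k$ obtained from the approximate rate expression \eqref{eq:r_approx}, and then close with the standard quadratic-Lyapunov drift argument of Theorem~\ref{th:BtoB'}. Your handling of the approximation issue (adopting regime~(i), i.e.\ the standing assumption $g_{ki}<0.1$) is exactly what the paper does as well, and your presentation is if anything slightly more direct than the paper's multiply--divide--extreme-case formulation, which arrives at the identical inequality $(\mathbf{r}\cdot\mathbf{q})^{(\Delta^{\text{*}}_\text{GI})}\ge\beta_\text{P}(\boldsymbol{\mu}\cdot\mathbf{q})^{(\Delta^{\text{*}}_\text{GP})}$.
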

\begin{proof}
Under policy $\Delta^\text{*}_\text{GI}$, and using the approximate expression of $r_k$ given in $\eqref{eq:r_approx}$,  the dot product $(\mathbf{r} \cdot \mathbf{q})^{ \Delta^\text{*}_\text{GI} }$ can be written as 
\begin{align}
(1-L\theta) \left[ \sum\limits_{k \in \mathcal{L}} d R e^{-\frac{ \sigma^2 \tau }{\alpha_{kk}}} \, q_k \prod\limits_{i \in \mathcal{L},i\ne k}(1-g_{ki}) \right]. 
\end{align}
On the other hand, using the definition of $\Delta_\text{GP}^\text{*}$, the product $(\bm{\mu} \cdot \mathbf{q})^{(\Delta_\text{GP}^\text{*})}$ can be expressed as 
\begin{align}
 (1-L_\text{P}\theta) \sum\limits_{k \in \mathcal{L}_\text{P}} d R  e^{-\frac{ \sigma^2 \tau }{\alpha_{kk}}} q_k.
\end{align}
One can easily remark that this latter expression has the following equivalent representation, which results from multiplying and dividing by the same term $\prod\limits_{i \in \mathcal{L}_\text{P},i\ne k}(1-g_{ki})$,
\begin{align}
(1-L_\text{P}\theta) \sum\limits_{k \in \mathcal{L}_\text{P}} d R e^{-\frac{ \sigma^2 \tau }{\alpha_{kk}}} \,  q_k \left[\frac{    \prod\limits_{i \in  \mathcal{L}_\text{P} ,i\ne k}(1-g_{ki}) }   {  \prod\limits_{i \in  \mathcal{L}_\text{P} ,i\ne k}(1-g_{ki})    }  \right].
\end{align}
The extreme case of  $(\bm{\mu} \cdot \mathbf{q})^{(\Delta_\text{GP}^\text{*})}$ corresponds to 
\begin{align}
m_3 (1-L_\text{P} \theta) \left[ \sum\limits_{k \in \mathcal{L}_\text{P} } d R  e^{-\frac{ \sigma^2 \tau }{\alpha_{kk}}} \, q_k \prod\limits_{i \in \mathcal{L}_\text{P} ,i\ne k} (1-g_{ki})    \right], \label{eq:muqextreme}
\end{align}
where $m_3^{-1} = \underset{\mathcal{L}_\text{P} \in \mathcal{L}_\text{P} }{\min} \left\{ \underset{k \in \mathcal{L}_\text{P} }{\min} \, \left\{ \prod\limits_{i \in \mathcal{L}_\text{P} ,i\ne k} (1-g_{ki})      \right\}  \right\} $.
Since, by definition, policy $\Delta_{\text{GI}}^{\text{*}}$ produces $\mathcal{L}$ and maximizes the product $(\mathbf{r} \cdot \mathbf{q})$, it yields
\begin{align}
 (1-L\theta) \left[ \sum\limits_{k \in \mathcal{L} } d R e^{-\frac{ \sigma^2 \tau }{\alpha_{kk}}}   \,  q_k \prod\limits_{i \in  \mathcal{L},i\ne k}(1-g_{ki})  \right]  \ge 
 (1-L_\text{P}\theta) \left[ \sum\limits_{k \in \mathcal{L}_\text{P} } d R e^{-\frac{ \sigma^2 \tau }{\alpha_{kk}}}  \,  q_k \prod\limits_{i \in  \mathcal{L}_\text{P} ,i\ne k}(1-g_{ki})  \right]. \label{eq:ineq_per}
\end{align}
As explained earlier, the stability region achieved by $\Delta^{\text{*}}_\text{GI}$ approximates the one achieved by $\Delta_\text{GP}^{\text{*}}$ to a fraction $\beta$ if 
\begin{align}
(\bm{\mu} \cdot \mathbf{q})^{(\Delta_{\text{GP}}^\text{*} )}  \le \beta^{-1} (\mathbf{r} \cdot \mathbf{q})^{(\Delta^{\text{*}}_\text{GI})}.
\end{align} 
It is hard to find $\beta$ based on $(\bm{\mu} \cdot \mathbf{q})^{(\Delta_\text{GP}^\text{*})}$, however, using a similar observation to that provided at the end of the proof of Theorem \ref{th:beta_A}, we can compute a fraction $\beta_\text{P} \le \beta$ based on an upper bound on this product. In detail, using \eqref{eq:muqextreme}, which represents this upper bound, our problem turns out to find $\beta_\text{P}$ such that
\begin{align}
\nonumber & m_3 (1-L_\text{P}\theta) \left[\sum\limits_{k \in \mathcal{L}_\text{P} }   d R e^{-\frac{ \sigma^2 \tau }{\alpha_{kk}}} \,  q_k  \prod\limits_{i \in  \mathcal{L}_\text{P},i\ne k}(1-g_{ki}) \right] \, \le \\  & \beta_\text{P}^{-1} (1-L\theta) \left[ \sum\limits_{k \in \mathcal{L} } d R e^{-\frac{ \sigma^2 \tau }{\alpha_{kk}}} \,   q_k  \prod\limits_{i \in \mathcal{L} ,i\ne k}(1-g_{ki}) \right].
\end{align} 
Using the relation in \eqref{eq:ineq_per}, it suffices to have $\beta_\text{P}^{-1} \ge m_3$, or equivalently $\beta_\text{P} \le m_3^{-1}$, to satisfy the above inequality. By taking $\beta_\text{P} = m_3^{-1}$, the desired result holds. 
\end{proof} 

An important factor on which fraction $\beta_\text{P}$ depends is the number of quantization bits $B$, so it is essential to compute the number of bits that can guarantee this fraction. Finding the explicit relation that gives the number of bits in function of $\beta_\text{P}$ is a difficult task, however we can obtain the required result numerically. In detail, using the expression of $\beta_\text{P}$ given in the above theorem, we start from a small value of $B$ for which we calculate the corresponding fraction, then we keep increasing $B$ until the desired value of $\beta_\text{P}$ is obtained. Although computing the exact relation of $B$ in function of $\beta_\text{P}$ is hard to achieve, we can still find a relation that gives a rough idea of the required number of bits. Specifically, we know that $1-g_{ki}= (1+ 2^{-\frac{B}{Q}}  c_{ki} )^{-1}$, where $c_{ki}=\frac{\zeta_{ki} \tau d }{ \zeta_{kk}}$,
then, after selecting the set $\mathcal{L}_\text{P}$ (of cardinality $L_\text{P}$) and $k$ ($\in \mathcal{L}_\text{P}$) that yield $\beta_\text{P}$, we find $c= \underset{i \in \mathcal{L}_\text{P},i \ne k }{\min} c_{ki}$. Thus, we get 
\begin{align}
\boxed{\beta_\text{P} \le \left( 1+ 2^{-\frac{B}{Q}} c \right)^{-(L_\text{P}-1)}} 
\end{align}
or equivalently we obtain
\begin{align}
\boxed{B \ge Q   \log_2 \left( c \left( \beta_\text{P}^{-(L_\text{P}-1)^{-1}}-1 \right)^{-1} \right) } \label{eq:LB_B}
\end{align}
Therefore, it suffices to use a number of quantization bits equals to the lower bound in the above inequality to guarantee the fraction $\beta_\text{P}$.
Note that the exact number of bits, given by the numerical method, is less than the calculated lower bound.

\section{Numerical Results}
\label{sec:numerical}

In this section we present our numerical results. We consider a system where the number of antennas $N_\text{t}=N_\text{r}=7$, $P=10$, $\sigma=1$, $d = 2$, $\theta=0.01$. We take $N=6$, which satisfies the condition $N_\text{t}+N_\text{r} \ge (N+1)d$. In addition, we assume that all the users have Poisson incoming traffic with the same average arrival rates as $a_k=a$. A 
coding scheme with a rate of $1$ bits per channel use if the \ac{SINR} of a scheduled user exceeds $\tau$ is assumed.  We set the slot duration to be $T_s=1000$ channel uses. 
Thus, we have $R = 1000$ bits per slot.
Even though in practice all the path loss coefficients are different, we consider in this section a very special case that simplifies the simulations and can still provide insights on the comparison between \ac{IA} and \ac{TDMA}-\ac{SVD}. In detail, we assume that all the direct links have a path loss coefficient of $1$ and all the cross links have a path loss coefficient of $\zeta_\text{c}$ (with $\zeta_\text{c} \le 1$). This setting allows us to examine, with respect to parameter $\zeta_\text{c}$, the impact of the cross links (or equivalently, the impact of interference) on the stability performances of \ac{IA}, and it let us detect when this latter technique outperforms \ac{TDMA}-\ac{SVD} in terms of stability and vice versa. 
To show the stability performance of the considered system, we plot the total average queue length given by $\frac{1}{M_\text{s}} \sum_{t=0}^{M_\text{s}-1}  \sum_{k=1 }^{ N} q_k(t)$ for different values of $a$, where each simulation lasts $M_\text{s}$ timeslots. We set $M_\text{s}=10^5$. Note that the point where the total average queue length function increases very steeply is the point at which the system becomes unstable.
%
%
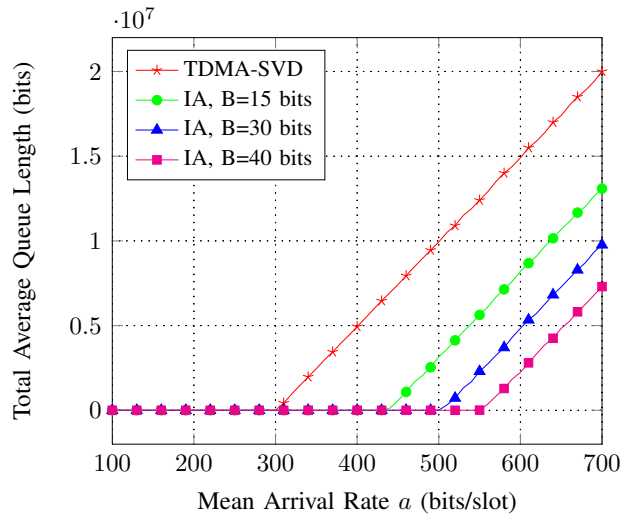
\begin{figure}[!ht]
\centering
\begin{tikzpicture}[scale=0.95]
	\begin{axis}[
 		grid = major,
 		legend cell align=left,
 		mark repeat={3},
	    xmin=100,xmax=700,
 		legend style ={legend pos=north west},
 		xlabel={Mean Arrival Rate $a$ (bits/slot)},
 		ylabel={Total Average Queue Length (bits)},
 		cycle list name = laneasIACaching1]
 	
 		\addplot [color=red, mark=star,mark size=2.2] table[col sep=comma] 
 	    	{\string"stab_svd.csv"};
 	    	\addlegendentry{TDMA-SVD}; 
 	    
 	    \addplot [color=green, mark=*,mark size=1.9] table [col sep=comma]
 	     	     {\string"stab_15b_0.2zeta.csv"};
 	     	     \addlegendentry{IA, B=15 bits};
 	     	    
 	    \addplot [color=blue, mark=triangle*,mark size=2.5] table [col sep=comma]
 	     	     {\string"stab_30b_0.2zeta.csv"};
 	     	     \addlegendentry{IA, B=30 bits};  	
 	     	     
         \addplot [color=magenta, mark=square*,mark size=1.7] table [col sep=comma]
 	    	     {\string"stab_40b_0.2zeta.csv"};
 	     	     \addlegendentry{IA, B=40 bits};  
 	    	 	     	     
	\end{axis}
\end{tikzpicture}
\captionsetup{font=small}
\caption{Total average queue length vs. mean arrival rate $a$. Here $\zeta_\text{c}=0.2$ and $\tau=1$. }
\label{fig:Totavqueue}
\end{figure}
\begin{figure}[!ht]
\centering
\begin{tikzpicture}[scale=0.95]
	\begin{axis}[
 		grid = major,
 		legend cell align=left,
 		mark repeat={1},
	    xmin=200,xmax=400,	
 		legend style ={legend pos=north west},
 		xlabel={Mean Arrival Rate $a$ (bits/slot)},
 		ylabel={Total Average Queue Length (bits)},
 		cycle list name = laneasIACaching1]

 		\addplot [color=red, mark=star,mark size=2.2] table[col sep=comma] 
  		{\string"stab_svd.csv"};
  		\addlegendentry{TDMA-SVD}; 
  			 
 	    \addplot [color=green, mark=*,mark size=1.9] table [col sep=comma]
 	    {\string"stab_15b_0.5zeta.csv"};
 	    \addlegendentry{IA, B=15 bits};
 	    
 	    \addplot [color=blue, mark=triangle*,mark size=2.5] table [col sep=comma]
 	    {\string"stab_30b_0.5zeta.csv"};
 	    \addlegendentry{IA, B=30 bits};
 	    
        \addplot [color=magenta, mark=square*,mark size=1.7] table [col sep=comma]
 	    {\string"stab_40b_0.5zeta.csv"};
 	    \addlegendentry{IA, B=40 bits}; 	    
 	    
	\end{axis}
\end{tikzpicture}
\captionsetup{font=small}
\caption{ Total average queue length vs. mean arrival rate $a$. Here $\zeta_\text{c}=0.5$ and $\tau=1$.}
\label{fig:Totavqueue0.5}
\end{figure}
\begin{figure}[!ht]
\centering
\begin{tikzpicture}[scale=0.95]
	\begin{axis}[
 		grid = major,
 		legend cell align=left,
 		mark repeat={2},
	    xmin=10,xmax=40,	
 		legend style ={legend pos=north west},
 		xlabel={Number of bits $B^\prime$},
 		ylabel={Achievable fraction $\frac{r(N,B^\prime)}{r(N,B)}$},
 		cycle list name = laneasIACaching1]

 		\addplot [color=red, mark=*,mark size=1.9] table[col sep=comma] 
 		{\string"frac_BB1_0.5tau.csv"};
 		\addlegendentry{$\tau=0.5$};
 		
	    \addplot [color=green, mark=triangle*,mark size=2.5] table[col sep=comma] 
 		{\string"frac_BB1_1tau.csv"};
 		\addlegendentry{$\tau=1$}; 	
 			
	    \addplot [color=blue, mark=square*,mark size=1.7] table[col sep=comma] 
 		{\string"frac_BB1_1.5tau.csv"};
 		\addlegendentry{$\tau=1.5$}; 			 
 	    \addplot [color=magenta, mark=diamond*,mark size=2.5] table [col sep=comma]
 	    {\string"frac_BB1_2tau.csv"};
 	    \addlegendentry{$\tau=2$};
 	   	
	\end{axis}
\end{tikzpicture}
\captionsetup{font=small}
\caption{Achievable fraction $\frac{r(N,B^\prime)}{r(N,B)}$ vs. number of bits $B^\prime$. Here $\zeta_\text{c}=1$ and $B=40$ bits.}
\label{fig:fracBB'}
\end{figure}
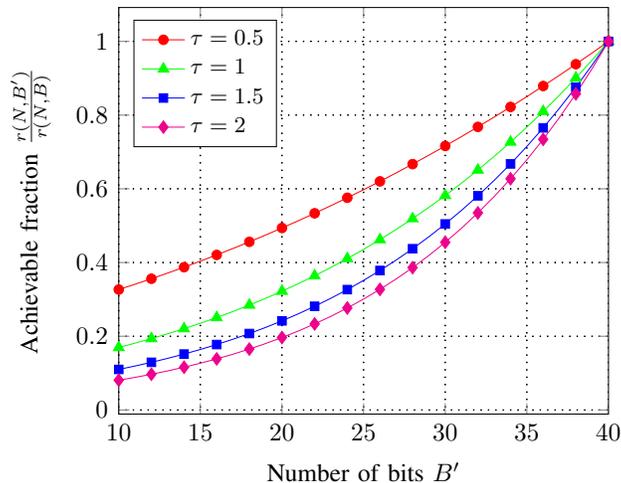  
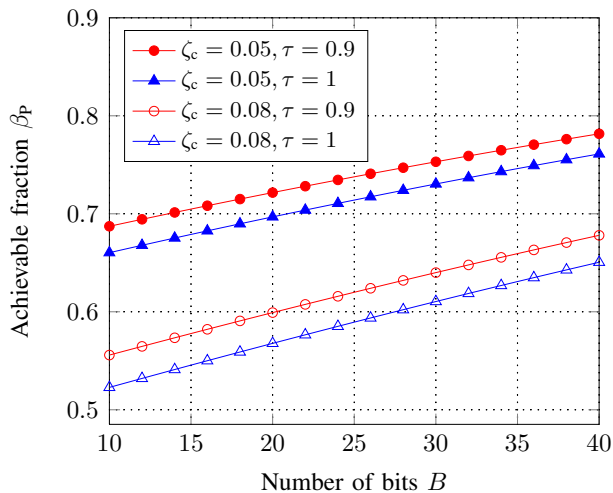
\begin{figure}[!ht]
\centering
\begin{tikzpicture}[scale=0.95]
	\begin{axis}[
 		grid = major,
 		legend cell align=left,
 		mark repeat={2},
 		ymax=0.9,
	    xmin=10,xmax=40,	
 		legend style ={legend pos=north west},
 		xlabel={Number of bits $B$},
 		ylabel={Achievable fraction $\beta_\text{P}$},
 		cycle list name = laneasIACaching1]

        \addplot [color=red, mark=*,mark size=1.9] table[col sep=comma] 
 		{\string"frac_BetaP_0.9tau0.05zeta.csv"};
 		\addlegendentry{$\zeta_\text{c}=0.05,\tau=0.9$}; 	
 	
 		\addplot [color=blue, mark=triangle*,mark size=2.5] table[col sep=comma] 
 		{\string"frac_BetaP_1tau0.05zeta.csv"};
 		\addlegendentry{$\zeta_\text{c}=0.05,\tau=1$};

        \addplot [color=red, mark=o,mark size=1.9] table[col sep=comma] 
 		{\string"frac_BetaP_0.9tau0.08zeta.csv"};
 		\addlegendentry{$\zeta_\text{c}=0.08,\tau=0.9$}; 	
 		
	    \addplot [color=blue, mark=triangle,mark size=2.5] table[col sep=comma] 
 		{\string"frac_BetaP_1tau0.08zeta.csv"};
 		\addlegendentry{$\zeta_\text{c}=0.08,\tau=1$};

	\end{axis}
\end{tikzpicture}
\captionsetup{font=small}
\caption{Achievable fraction $\beta_\text{P}$ vs. number of bits $B$.}
\label{fig:frac_BetaP}
\end{figure}  

Fig. \ref{fig:Totavqueue} shows that \ac{IA} gives better performances when we increase the number of quantization bits. This is due to the fact that the more the quantization is precise, the more we achieve higher rates which implies better stability performances.
From Fig. \ref{fig:Totavqueue} and \ref{fig:Totavqueue0.5}, we can see that \ac{TDMA}-\ac{SVD} outperforms \ac{IA} when the interference impact is high (for instance with $\zeta_\text{c}=0.5$ and $B=15$ bits), whereas we obtain the converse for less interfering system (for instance with $\zeta_\text{c}=0.2$). This is due to the fact that in high interference scenarios, \ac{IA} needs a better \ac{CSI} knowledge in order to maintain a good alignment of interference, and this can be provided by using a large number of bits in the quantization process.
It is worth noting that there exist other parameters that may affect the comparison between \ac{TDMA}-\ac{SVD} and \ac{IA}, such as the number of antennas, the threshold $\tau$, the number of data streams, etc. Figure \ref{fig:fracBB'} depicts the variation of the fraction $\frac{r(N,B^\prime)}{r(N,B)}$ with the number of bits $B^\prime$, for different values of $\tau$ and for a fixed reference number of bits $B=40$ bits; here, we set $\zeta_\text{c}=1$ since $\frac{r(N,B^\prime)}{r(N,B)}$ is defined for the symmetric system. It is clear from this figure that increasing the number of quantization bits and/or decreasing the threshold $\tau$ result in higher achievable fractions. Also, one can notice that changing (increasing or decreasing) the number of quantization bits has a higher impact on the achievable fraction for greater values of $\tau$, meaning that the more the threshold is high, the more the fraction $\frac{r(N,B^\prime)}{r(N,B)}$ is sensitive to the variation of the number of bits. In Figure \ref{fig:frac_BetaP}, we illustrate the variation of fraction $\beta_\text{P}$ with the number of bits $B$, for different values of $\tau$ and $\zeta_\text{c}$. The plots in this figure confirm the expectation that the stability region in the imperfect case gets bigger, meaning that the fraction this stability region achieves with respect to the stability region in the perfect case is greater, if the system achieves higher transmission rates. Note that these (higher) rates result from greater $B$, lower $\zeta_\text{c}$ and/or lower $\tau$.                     
\section{Conclusion}
\label{sec:conclusions}

In this paper, we characterized the stability region for \ac{IA} in a \ac{MIMO} interference network under \ac{TDD} mode with limited backhaul capacity and taking into account the probing cost. Specifically, this characterization was provided for the symmetric system
and for the general system. Also, for each one of these scenarios, we characterized the stability region under the prefect case (i.e. unlimited backhaul), and we captured the gap between this region and the one achieved under the imperfect case (i.e. limited backhaul). In addition, under the different considered cases and scenarios, an optimal centralized scheduling policy that achieves the system stability region was provided. We noticed that this scheduling policy can be implemented with a reduced complexity for the symmetric system, whereas under the general system the high computational complexity of this policy leads us to propose an approximate policy that has a reduced complexity but that achieves only a fraction of the system stability region. A characterization of this achievable fraction was given. Furthermore, under the symmetric system,
we characterized the system stability region when using \ac{TDMA}-\ac{SVD} instead of \ac{IA}, we compared the stability regions of these two techniques and, using the result of this comparison, we provided a condition under which one of these two techniques outperforms the other in terms of stability. 
Finally, we showed that, under some conditions, we can achieve better stability results (i.e. bigger stability region) by deciding to switch between these two techniques.
 
Important extensions can be addressing the stability analysis when we adopt decentralized or even mixed (centralized + decentralized) methods for feedback and scheduling.

\section*{Acknowledgments}
This research has been partially funded by Huwaei and "Fondation Supélec".
The work of M. Debbah has been funded by the ERC Grant 305123 MORE (Advanced Mathematical Tools for Complex Network Engineering).

\appendices

\section{Proof of Lemma \ref{le:ratevariation}}
\label{app:ratevariation}

We start the proof by first showing that $r(L)$ decreases with $L$. 
The first derivative of this rate function is given by 
\begin{align} 
\frac{dr}{dL} = d R e^{-\frac{ \sigma^2 \tau }{\alpha}} \left(-\theta   + (1-L\theta)\log F   \right)  F^{L-1}.    
\end{align} 
Since we have $L<\frac{1}{\theta}$ and $\log F <0$, the first derivative is negative and so $r$ decreases with $L$. 

To study the variation of $r_\text{T}(L)$ (w.r.t. $L$) we need to first compute its first derivatives, which will help us determine the optimal number of pairs.
The first derivative can be written as 
\begin{align} 
\frac{dr_{\text{T}}}{dL} = d R e^{-\frac{ \sigma^2 \tau }{\alpha}}   \left( -L^2\theta \log F + L(-2\theta+\log F)+1  \right) F^{L-1} .   
\end{align} 
Setting $\frac{dr_{\text{T}}}{dL}=0$ yields 
\begin{align} 
-L^2\theta \log F + L(-2\theta+\log F)+1 =0,
\end{align} 
or equivalently 
\begin{align} 
L^2 + L \left( \frac{2}{\log F}-\frac{1}{\theta} \right) -\frac{1}{\theta \log F} =0,
\end{align} 
We can easily show that the only zeros of $\frac{dr_{\text{T}}}{dL}$ are at 
\begin{align}
L_0 = \frac{ \frac{1}{\theta} - \frac{2}{\log F} - \sqrt{ \left(   \frac{2}{\log F}  - \frac{1}{\theta}  \right)^2 + \frac{4}{ \theta \log F}   }  }{2}  \label{eq:L0},
\end{align}
\begin{align}
L_1 = \frac{ \frac{1}{\theta} - \frac{2}{\log F} + \sqrt{ \left(   \frac{2}{\log F}  - \frac{1}{\theta}  \right)^2 + \frac{4}{ \theta \log F}   }  }{2}  \label{eq:L1}.
\end{align}
Note that $\log F<0$ and $\left(   \frac{2}{\log F}  - \frac{1}{\theta}  \right)^2 + \frac{4}{ \theta \log F}=\frac{1}{\theta^2}+\frac{4}{(\log F)^2}$.
Let us now examine the feasibility of $L_0$ and $L_1$. Indeed, under our setting a number $L$ is feasible if it satisfies $ 0<L < \frac{1}{\theta}$. For $L_0$ we have
\begin{align} 
L_0 = \frac{ \frac{1}{\theta} - \frac{2}{\log F} - \sqrt{  \frac{1}{\theta^2}+\frac{4}{(\log F)^2}  }  }{2} < \frac{ \frac{1}{\theta} - \frac{2}{\log F} -  \frac{2}{\left| \log F  \right|} } {2}= \frac{1}{2 \theta},
\end{align} 
where the inequality results from the fact that $\frac{2}{\left| \log F  \right|}  < \sqrt{  \frac{1}{\theta^2}+\frac{4}{(\log F)^2}  } $.
We can also observe that  
\begin{align} 
L_0 = \frac{ \frac{1}{\theta} - \frac{2}{\log F} - \sqrt{  \frac{1}{\theta^2}+\frac{4}{(\log F)^2}  }  }{2}  > \frac{ \frac{1}{\theta} - \frac{2}{\log F} -  \frac{1}{\theta}- \frac{2}{\left| \log F  \right|} } {2}=0.
\end{align} 
Thus, $L_0$ is a feasible solution since $ 0<L_0 < \frac{1}{\theta}$. On the other hand, for $L_1$ we can notice that
\begin{align} 
L_1 = \frac{ \frac{1}{\theta} - \frac{2}{\log F} + \sqrt{  \frac{1}{\theta^2}+\frac{4}{(\log F)^2}  }  }{2} > \frac{ \frac{1}{\theta} +  \sqrt{\frac{1}{\theta^2}} } {2}= \frac{1}{\theta}.
\end{align} 
Hence, $L_1$ is not a feasible solution because $L_1 > \frac{1}{\theta} $. To complete the proof it suffices to show that $r_\text{T}(L)$ reaches its maximum at $L_0$.  To this end, we note that $r_\text{T}(0)=0$, $r_\text{T}(\frac{1}{\theta})=0$ and $\frac{dr_{\text{T}}}{dL}|_{L= \frac{1}{2\theta} } <0$, and we recall that $0<L_0<\frac{1}{2\theta}<\frac{1}{\theta}$. In addition, one can easily notice that $r_\text{T}$ and its first derivative ($\frac{dr_{\text{T}}}{dL}$) are continuous over $\left[ 0, \frac{1}{\theta} \right]$. Based on these observations, the variation of $r_\text{T}$ over $\left[ 0, \frac{1}{\theta} \right]$  can be described as follows: $r_\text{T}$ is increasing from $0$ to $L_0$ and decreasing from $L_0$ to $\frac{1}{\theta}$. This concludes the proof.

\section{Examples and Illustrations}
\label{app:examples}
\begin{figure}[H]
\centering
\includegraphics[width=0.7\linewidth, height=7.4cm]{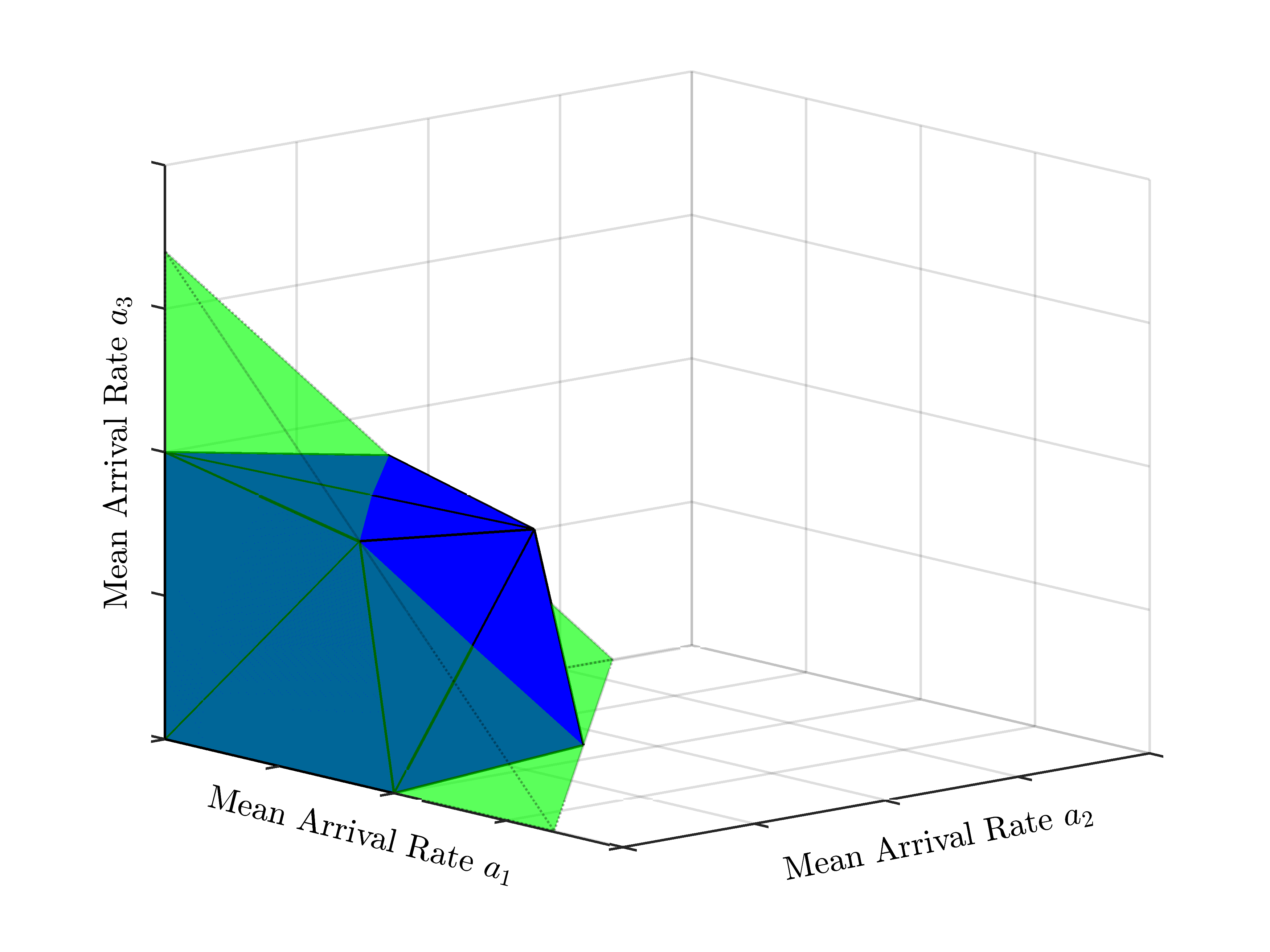}
\captionsetup{font=small}
\caption{Stability regions of \ac{TDMA}-\ac{SVD} (green region) and \ac{IA} under the imperfect case (blue region) for the symmetric system, where $L_\text{I}=N=3$. This illustration represents the case where \ac{IA} outperforms \ac{TDMA}-\ac{SVD}, in which the blue region surpasses (partially) the green region.
Note that a similar illustration can be given to compare between \ac{IA} under the perfect case and \ac{TDMA}-\ac{SVD}.
}
\label{fig:2}
\end{figure} 
\vspace{-0.8cm}
\begin{figure}[H]
\centering
\includegraphics[width=0.7\linewidth, height=7.4cm]{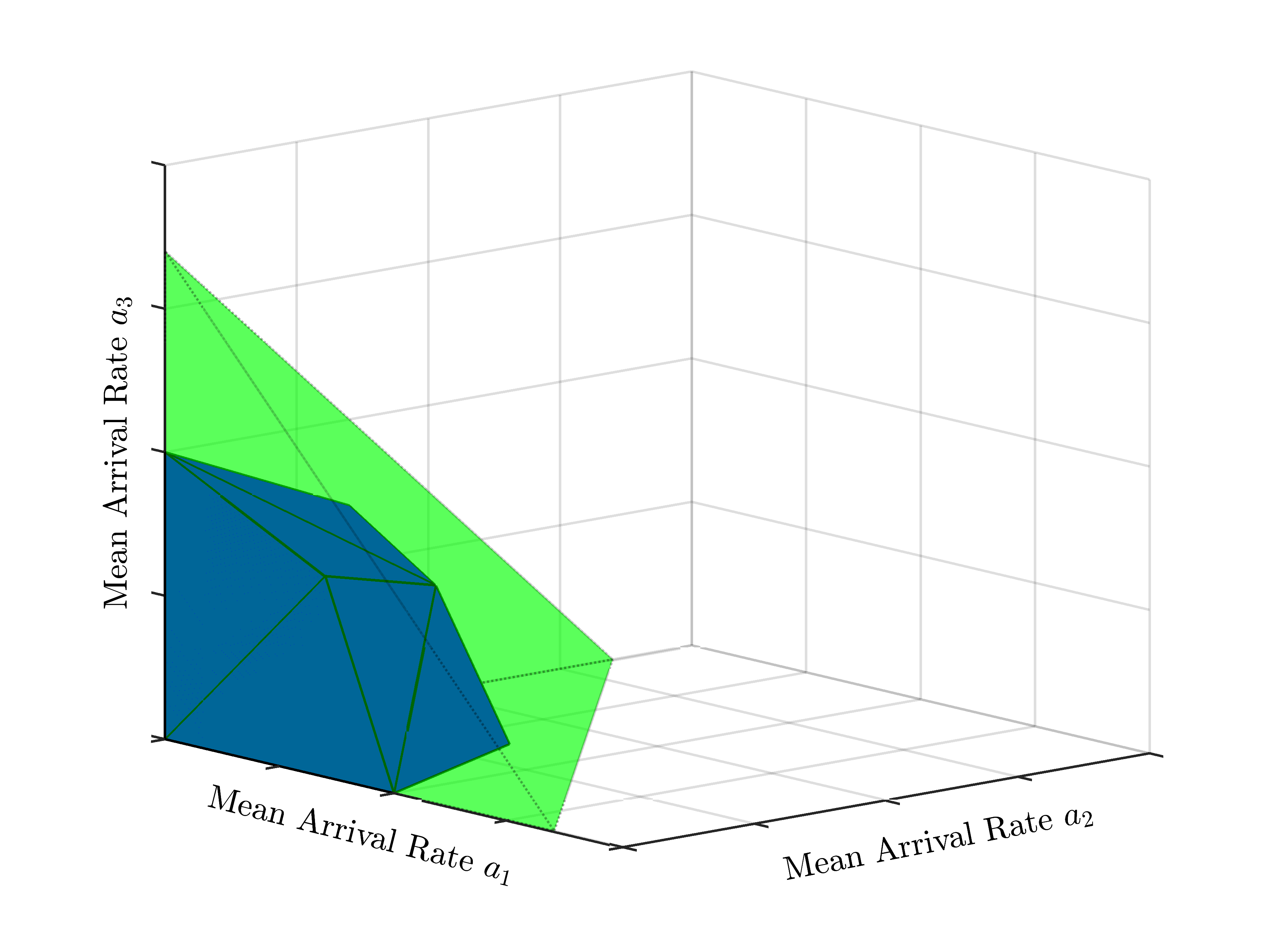}
\captionsetup{font=small}
\caption{Stability regions of \ac{TDMA}-\ac{SVD} (green region) and \ac{IA} under the imperfect case (blue region) for the symmetric system, where $L_\text{I}=N=3$. This illustration represents the case where \ac{TDMA}-\ac{SVD} outperforms \ac{IA}, in which the green region covers the blue region.
Note that a similar illustration can be given to compare between \ac{IA} under the perfect case and \ac{TDMA}-\ac{SVD}.
}
\label{fig:2}
\end{figure}

\bibliographystyle{IEEEtran}
\bibliography{references}

\end{document}